\documentclass[a4paper,onecolumn,unpublished,11pt]{quantumarticle}
\pdfoutput=1
\usepackage{graphicx} 
\usepackage{amsmath}
\usepackage{amssymb}
\usepackage[pagebackref, colorlinks = true, linkcolor = myred, urlcolor  = myred, citecolor = myred]{hyperref}
\usepackage{mathtools}
\usepackage{float}
\usepackage{amsthm}
\usepackage{physics}
\usepackage{dsfont}
\usepackage{longtable}
\usepackage{braket}
\usepackage{tikz}
\usepackage{enumitem} 
\usepackage[T1]{fontenc}

\usepackage{tcolorbox}
\tcbuselibrary{breakable,skins}

\usetikzlibrary{backgrounds,positioning,shapes.geometric,decorations.markings,arrows,knots,calc,decorations.pathmorphing,decorations.pathreplacing,calligraphy,calc,shapes,fit,decorations.text}
\tikzset{snake it/.style={
    decorate,
    decoration={snake,amplitude=0.07cm,post=lineto, post length=0.0cm},
}}
\tikzset{green_box/.style={
    very thick,
    text=mygreen!40!black,
    draw=mygreen!80!white,
    rounded corners,
    minimum height=1.1cm,
    minimum width=1.1cm,
    fill=mygreen!30,
}}
\tikzset{blue_box/.style={
    very thick,
    text=myblue!40!black,
    draw=myblue!80!white,
    rounded corners,
    minimum height=1.1cm,
    minimum width=1.1cm,
    fill=myblue!30,
}}

\newcommand{\gear}[6]{%
	(0:#2)
	\foreach \i [evaluate=\i as \n using {\i-1)*360/#1}] in {1,...,#1}{%
		arc (\n:\n+#4:#2) {[rounded corners=1.5pt] -- (\n+#4+#5:#3)
			arc (\n+#4+#5:\n+360/#1-#5:#3)} --  (\n+360/#1:#2)
	}%
	(0,0) circle[radius=#6] 
}

\definecolor{LightGray}{RGB}{220,220,220}
\definecolor{myred2}{RGB}{255, 19, 0}
\definecolor{myred}{RGB}{15, 122, 98}
\definecolor{myblue}{RGB}{14, 81, 167}
\definecolor{myorange}{RGB}{255, 129, 0}
\definecolor{mygreen}{RGB}{0, 146, 44}

\newtheorem{thm}{Theorem}[section]
\newtheorem*{thm*}{Theorem}
\newtheorem{lem}[thm]{Lemma}
\newtheorem{cor}[thm]{Corollary}

\newtheorem{defi}[thm]{Definition}

\newtheorem{remark}[thm]{Remark}

\newtheorem{condition}[thm]{Condition}

\newcommand{\cA}{\mathcal{A}}
\newcommand{\cB}{\mathcal{B}}
\newcommand{\cC}{\mathcal{C}}
\newcommand{\cE}{\mathcal{E}}
\newcommand{\cF}{\mathcal{F}}
\newcommand{\cI}{\mathcal{I}}
\newcommand{\cK}{\mathcal{K}}
\newcommand{\cM}{\mathcal{M}}
\newcommand{\cN}{\mathcal{N}}
\newcommand{\cR}{\mathcal{R}}
\newcommand{\cS}{\mathcal{S}}
\newcommand{\cT}{\mathcal{T}}

\newcommand{\cX}{\mathcal{X}}
\newcommand{\cY}{\mathcal{Y}}
\newcommand{\Efin}{E_\mathrm{fin}}
\newcommand{\opt}{\mathrm{opt}} 

\begin{document}

\title{Computing key rates for one-sided\newline device-independent quantum key distribution}

\author{Andreas Bluhm}
\email{andreas.bluhm@univ-grenoble-alpes.fr}
\affiliation{Univ. Grenoble Alpes, CNRS, Grenoble INP, LIG, 38000 Grenoble, France}
\author{Gereon Koßmann}
\email{gereonkossmann@gmail.com}
\affiliation{Institute for Quantum Information, RWTH Aachen, 52062 Aachen, Germany}
\author{Martin Sandfuchs}
\email{martisan@phys.ethz.ch}
\affiliation{Institute for Theoretical Physics, ETH Zürich, 8093 Zurich, Switzerland}
\author{René Schwonnek}
\email{rene.schwonnek@itp.uni-hannover.de}
\affiliation{Institute for Theoretical Physics, Leibniz Universität Hannover, 30167 Hannover, Germany}
\author{Giuseppe Viola}
\email{giuseppe.viola@uni-siegen.de}
\affiliation{Naturwissenschaftliche-Technische Fakultät, Universität Siegen, 57068 Siegen, Germany}
\author{Ramona Wolf}
\email{ramona.wolf@uibk.ac.at}
\affiliation{Institute for Theoretical Physics, Universität Innsbruck, 6020 Innsbruck, Austria}

\maketitle

\fontfamily{lmr}\selectfont

\begin{abstract}
     The defining feature of one-sided device-independent quantum key distribution is its asymmetric trust model in which only one party is characterized. This scenario sets an interesting middle ground between high key rates achievable by characterizing devices and the security of full device-independence. Here, we provide  new tools, methods, and benchmarks for calculating key rates in this setting.
    To achieve this, we develop and compare two extensions of the NPA hierarchy and derive finite-size security bounds against general attacks with arbitrary device memory. The latter is based on the Generalized Entropy Accumulation Theorem. 
    We then investigate the performance of various protocols: the BB84 protocol both with and without losses, a qutrit mutually unbiased bases protocol, and protocols based on Bell inequalties such as CHSH and $I_{3322}$.
     We find that the choice of which  party is characterized can strongly affects the key rate, surprisingly without a universal ordering. Our work thus offers a general toolbox for calculating key rates for one-sided device-independent QKD protocols.
\end{abstract}

\vspace{-1cm}
\newpage
\section{Introduction}

Compared with many other technologies, cryptographic technologies are distinguished by the fact that a functioning device is characterized not only by what it can do, but also by what it cannot allow. The former can be demonstrated directly in operation, whereas the latter concerns the absence of successful attacks and cannot be established by testing alone. This verification asymmetry becomes particularly pronounced in quantum key distribution (QKD) \cite{BennettBrassard1984, Ekert1991}.  A successful implementation must both distribute cryptographic keys over a quantum channel and support a mathematically formulated guarantee of their secrecy against all admissible adversaries with quantum resources. Without the device \cite{diamanti2016practical}, no useful keys are generated. Without the proof \cite{Pirandola_2020,primaatmaja2023security}, the generated keys carry no well-defined claim of secrecy.

QKD consequently relies on a close interplay between experiment and theory \cite{diamanti2016practical,Pirandola_2020,lo2014secure,PortmannRenner2022}. Only their combination yields a technology that is both functional and supported by a rigorous security guarantee \cite{Renner05}. The strength of this guarantee, however, depends on how accurately the mathematical model underlying the proof captures the relevant properties of the physical implementation.
Every security model must therefore specify which properties of the implementation are trusted and which are inferred from observed data. Different QKD paradigms correspond to different choices of these trust assumptions, as depicted in Figure~\ref{fig:characterise}. Device-dependent QKD \cite{BennettBrassard1984,Ekert1991,BennettBrassardMermin1992} directly incorporates detailed models of the devices, whereas device-independent QKD (DIQKD) \cite{Pironio2009,primaatmaja2023security} seeks to certify security from the observed statistics alone, typically through a Bell test \cite{Bell1964}. \sloppy{This minimizes implementation-specific assumptions but requires demanding experimental conditions \cite{Zhang2021,Liu2021a,Nadlinger2021,Lu2026}. Semi-device-independent approaches \cite{Branciard2012,Tan2021_key_rates,Masini_2024,Mothsara_2026,ZengEtAl2026} retain selected device assumptions while treating the remaining components as untrusted.}

In this work we study protocols in which trust is distributed asymmetrically between the two parties. One party’s devices are characterised, while those of the other are treated as a black box, a setting known as one-sided DIQKD \cite{Branciard2012}. It provides a natural middle ground between detailed device modelling and fully device-independent certification, combining substantially reduced device assumptions with less demanding experimental requirements. This setting poses new challenges for security proofs: For device-dependent protocols, a variety of numerical techniques have been developed based on convex optimisation and semidefinite programming \cite{Winick2018_key_rates,Hu2022,Arajo2023,HamzaFawzi_IPM_relentro,kossmann2025optimisingrelativeentropysemidefinite,navarro2026finitesizequantumkeydistribution}. Similarly, device-independent security proofs can be treated numerically using relaxations of the set of quantum correlations, most prominently through the NPA hierarchy \cite{Navascus2007_npa1,Navascus2008_npa2,Tan2021_key_rates,brown2024device, kossmann2025reliableentropyestimationobserved}. Semi-device-independent protocols, however, combine trusted and untrusted components within the same security analysis, leading to optimisation problems that differ structurally from both settings.

\begin{figure}[t]
	\centering
	\begin{tikzpicture}[scale=0.75]
		\begin{scope}[xshift=-6.5cm]
			\node at (2.675,-.7) {\small (a) Device-dependent QKD};
			\draw[fill=gray!42] (1,0) -- (1.35,0.35) -- (1.35,1.35) -- (0.35,1.35) -- (0,1) -- (0,0) -- cycle;
			\begin{scope}[xshift=0.6cm, yshift=0.6cm,scale=0.2]
				\filldraw[fill=gray,even odd rule] \gear{12}{2}{2.4}{10}{2}{1};
			\end{scope}
			\begin{scope}[xshift=0.5cm, yshift=0.5cm,scale=0.175]
				\filldraw[fill=gray!70,even odd rule] \gear{10}{2}{2.4}{10}{2}{1};
			\end{scope}
			\draw[fill=gray!20] (0,0) -- (1,0) -- (1,0.25) -- (0.25,0.25) -- (0.25,1) -- (0,1) -- cycle;
			\draw (1,0.25) -- (1.15,0.4) -- (1.15,1.15) -- (1.35,1.35);
			\draw (1.15,1.15) -- (0.4,1.15) -- (0.25,1);
			\node at (0.55,1.45) {\large \textsf{A}};
			\begin{scope}[xshift=4cm]
				\begin{scope}[xscale=-1,xshift=-1.35cm]
					\begin{knot}
						\strand [ultra thick, color=myred!80, looseness=0.9] (2.35,0.8)
						to [out=240,in=right] (1.79,0.6);
						\strand [ultra thick, color=myred!80, looseness=0.9] (1.8,0.6)
						to [out=left, in=right] (1.25,0.4);
					\end{knot}
				\end{scope}
				\draw[fill=gray!42] (1,0) -- (1.35,0.35) -- (1.35,1.35) -- (0.35,1.35) -- (0,1) -- (0,0) -- cycle;
				\begin{scope}[xshift=0.6cm, yshift=0.6cm,scale=0.2]
					\filldraw[fill=gray,even odd rule] \gear{12}{2}{2.4}{10}{2}{1};
				\end{scope}
				\begin{scope}[xshift=0.5cm, yshift=0.5cm,scale=0.175]
					\filldraw[fill=gray!70,even odd rule] \gear{10}{2}{2.4}{10}{2}{1};
				\end{scope}
				\draw[fill=gray!20] (0,0) -- (1,0) -- (1,0.25) -- (0.25,0.25) -- (0.25,1) -- (0,1) -- cycle;
				\draw (1,0.25) -- (1.15,0.4) -- (1.15,1.15) -- (1.35,1.35);
				\draw (1.15,1.15) -- (0.4,1.15) -- (0.25,1);
				\node at (1,1.45) {\large \textsf{B}};
			\end{scope}
			\node[color=black!30] at (2.675,1) {$\bullet$};
			\draw[color=black!30,thick] (2.675,1) ellipse (0.4cm and 0.15cm);
			\draw[color=black!30,thick,rotate around={60:(2.675,1)}] (2.675,1) ellipse (0.4cm and 0.15cm);
			\draw[color=black!30,thick,rotate around={120:(2.675,1)}] (2.675,1) ellipse (0.4cm and 0.15cm);
			\node[color=myred] at (2.675,1) {\LARGE \textsf{?}};
			\begin{knot}
				\strand [ultra thick, color=myred!80, looseness=0.9] (2.35,0.8)
				to [out=240,in=right] (1.79,0.6);
				\strand [ultra thick, color=myred!80, looseness=0.9] (1.8,0.6)
				to [out=left, in=right] (1.25,0.4);
			\end{knot}
		\end{scope}
		\begin{scope}[xshift=0.25cm]
			\node at (2.675,-.7) {\small (b) Device-independent QKD};
			\draw[fill=myred!40] (1,0) -- (1.35,0.35) -- (1.35,1.35) -- (0.35,1.35) -- (0,1);
			\draw[fill=myred!20] (0,0) -- (1,0) -- (1,1) -- (0,1) -- cycle;
			\draw (1,1) -- (1.35,1.35);
			\node[color=myred] at (0.5,0.5) {\LARGE \textsf{?}};
			\node at (0.55,1.45) {\large \textsf{A}};
			\begin{scope}[xshift=4cm]
				\begin{scope}[xscale=-1,xshift=-1.35cm]
					\begin{knot}
						\strand [ultra thick, color=myred!80, looseness=0.9] (2.35,0.8)
						to [out=240,in=right] (1.79,0.6);
						\strand [ultra thick, color=myred!80, looseness=0.9] (1.8,0.6)
						to [out=left, in=right] (1.25,0.4);
					\end{knot}
				\end{scope}
				\draw[fill=myred!40] (1,0) -- (1.35,0.35) -- (1.35,1.35) -- (0.35,1.35) -- (0,1);
				\draw[fill=myred!20] (0,0) -- (1,0) -- (1,1) -- (0,1) -- cycle;
				\draw (1,1) -- (1.35,1.35);
				\node[color=myred] at (0.5,0.5) {\LARGE \textsf{?}};
				\node at (1,1.45) {\large \textsf{B}};
			\end{scope}
			\node[color=black!30] at (2.675,1) {$\bullet$};
			\draw[color=black!30,thick] (2.675,1) ellipse (0.4cm and 0.15cm);
			\draw[color=black!30,thick,rotate around={60:(2.675,1)}] (2.675,1) ellipse (0.4cm and 0.15cm);
			\draw[color=black!30,thick,rotate around={120:(2.675,1)}] (2.675,1) ellipse (0.4cm and 0.15cm);
			\node[color=myred] at (2.675,1) {\LARGE \textsf{?}};
			\begin{knot}
				\strand [ultra thick, color=myred!80, looseness=0.9] (2.35,0.8)
				to [out=240,in=right] (1.79,0.6);
				\strand [ultra thick, color=myred!80, looseness=0.9] (1.8,0.6)
				to [out=left, in=right] (1.25,0.4);
			\end{knot}
		\end{scope}
		\begin{scope}[xshift=7cm]
			\node at (2.675,-.7) {\small (c) One-sided DIQKD};
			\draw[fill=gray!42] (1,0) -- (1.35,0.35) -- (1.35,1.35) -- (0.35,1.35) -- (0,1) -- (0,0) -- cycle;
			\begin{scope}[xshift=0.6cm, yshift=0.6cm,scale=0.2]
				\filldraw[fill=gray,even odd rule] \gear{12}{2}{2.4}{10}{2}{1};
			\end{scope}
			\begin{scope}[xshift=0.5cm, yshift=0.5cm,scale=0.175]
				\filldraw[fill=gray!70,even odd rule] \gear{10}{2}{2.4}{10}{2}{1};
			\end{scope}
			\draw[fill=gray!20] (0,0) -- (1,0) -- (1,0.25) -- (0.25,0.25) -- (0.25,1) -- (0,1) -- cycle;
			\draw (1,0.25) -- (1.15,0.4) -- (1.15,1.15) -- (1.35,1.35);
			\draw (1.15,1.15) -- (0.4,1.15) -- (0.25,1);
			\node at (0.55,1.45) {\large \textsf{A}};
			\begin{scope}[xshift=4cm]
				\begin{scope}[xscale=-1,xshift=-1.35cm]
					\begin{knot}
						\strand [ultra thick, color=myred!80, looseness=0.9] (2.35,0.8)
						to [out=240,in=right] (1.79,0.6);
						\strand [ultra thick, color=myred!80, looseness=0.9] (1.8,0.6)
						to [out=left, in=right] (1.25,0.4);
					\end{knot}
				\end{scope}
				\draw[fill=myred!40] (1,0) -- (1.35,0.35) -- (1.35,1.35) -- (0.35,1.35) -- (0,1);
				\draw[fill=myred!20] (0,0) -- (1,0) -- (1,1) -- (0,1) -- cycle;
				\draw (1,1) -- (1.35,1.35);
				\node[color=myred] at (0.5,0.5) {\LARGE \textsf{?}};
				\node at (1,1.45) {\large \textsf{B}};
			\end{scope}
			\node[color=black!30] at (2.675,1) {$\bullet$};
			\draw[color=black!30,thick] (2.675,1) ellipse (0.4cm and 0.15cm);
			\draw[color=black!30,thick,rotate around={60:(2.675,1)}] (2.675,1) ellipse (0.4cm and 0.15cm);
			\draw[color=black!30,thick,rotate around={120:(2.675,1)}] (2.675,1) ellipse (0.4cm and 0.15cm);
			\node[color=myred] at (2.675,1) {\LARGE \textsf{?}};
			\begin{knot}
				\strand [ultra thick, color=myred!80, looseness=0.9] (2.35,0.8)
				to [out=240,in=right] (1.79,0.6);
				\strand [ultra thick, color=myred!80, looseness=0.9] (1.8,0.6)
				to [out=left, in=right] (1.25,0.4);
			\end{knot}
		\end{scope}
	\end{tikzpicture}
\caption{\label{fig:characterise}\textbf{Different levels of device characterisation in QKD protocols.} Device-dependent QKD (a) assumes fully characterised measurement devices for both parties \textsf{A} and \textsf{B}, whereas DIQKD (b) requires no assumptions about the devices and relies only on observed correlations. One-sided DIQKD (c) represents an intermediate scenario, where one measurement device is characterised while the other is treated as a black box.}
\end{figure}

The aims of this work are (i) to clarify a rigorous mathematical model for this setting and to introduce numerical methods for solving the corresponding asymptotic key rate problem, (ii) to provide a self-sustained finite key rate analysis secure against general attacks, and (iii) to apply our methods to a variety of relevant protocols.   

We begin with (i) by placing the one-sided device-independent setting in the representation independent language of $C^*$-algebras \cite{Blackadar2006}. An uncharacterized measurement device is described by the universal $C^*$-algebra generated by its measurement projectors and their defining relations, whereas a characterized finite-dimensional implementation is represented by a full matrix algebra. Physical preparations are then modelled as states on the $C^*$-tensor product of the two local algebras, from which concrete Hilbert-space realizations can be recovered when needed. Crucially, the one-sided setting has additional structure: since full matrix algebras are nuclear, this tensor product is unique and can be identified with a matrix algebra over the black-box algebra. This removes tensor-product ambiguities present in more general operator-algebraic models and provides a particularly clean mathematical foundation for the security analysis and numerical methods developed below.

Building on our model, we formulate the asymptotic key-rate problem for one-sided DIQKD as a non-commutative polynomial optimization and develop two complementary extensions of the NPA hierarchy \cite{Navascus2007_npa1,Navascus2008_npa2} that incorporate the fixed matrix algebra of the characterized device. The first encodes this algebra through additional relations within an otherwise standard NPA construction, while the second keeps the characterized subsystem explicit and replaces the scalar moments of the usual hierarchy with matrix-valued ones. Both approaches yield convergent hierarchies of semidefinite programs and are compatible with the conditional entropy approximations in \cite{brown2024device} and \cite{kossmann2025reliableentropyestimationobserved}. We compare the numerical performance of the resulting methods and use them to compute asymptotic key rates for several representative one-sided device-independent protocols.

For (ii) we provide a finite-size security analysis against general attacks by adapting the Generalized Entropy Accumulation Theorem (GEAT) \cite{Metger_2022} to the one-sided device-independent setting. The uncharacterized devices may possess arbitrary internal memory and exhibit correlations across rounds, subject only to the operational requirement that the public communication does not reveal information directly from this memory. The resulting proof reduces the finite-key problem to the same single-round conditional-entropy optimization used for the asymptotic rate. Its numerical solution determines a min-tradeoff function and thereby connects our SDP methods directly to composable finite-key bounds.

Finally, for (iii) we apply our methods to a range of representative protocols, including CHSH-based QKD \cite{Pironio2009}, lossless and lossy variants of entanglement-based BB84 \cite{Branciard2012,BennettBrassardMermin1992}, a qutrit protocol based on mutually unbiased bases (MUBs), and an $I_{
3322}$-based protocol \cite{froissart1981constructive,collins2004relevant}. For each example, we compare the two possible one-sided trust assignments---either Alice or Bob is characterised---while the raw key is always generated by Alice. We find that this choice can have a substantial impact on the achievable key rate, but no universal ordering emerges: characterising the key-generating party is advantageous in several of our examples, whereas the CHSH protocol displays the opposite behaviour. Our results can therefore only provide a first exploration of this largely open design question. The methods developed here now make it possible to investigate a much broader range of protocols and to identify the structural principles governing the optimal placement of trust.

\section{The setup}

We consider an entanglement-based QKD protocol in the one-sided device-in\-de\-pen\-dent setting, depicted in Figure~\ref{fig:setup}. A high-level description of the protocol is given in Box~\hyperref[box:simpleprotocol]{1} and a detailed description is given in Box~\hyperref[box:protocol]{2}.
An untrusted source prepares a quantum system in the tripartite state $\rho_{Q_AQ_BQ_E}$ and sends part $Q_A$ to Alice and part $Q_B$ to Bob, while part $Q_E$ is kept by the adversary. Alice and Bob perform measurements on their respective local systems and store the measurement outcomes $A$ and $B$, respectively. In the one-sided DIQKD setting, one party's measurement device is treated as a black box (i.e., only the input/output correlations are used in the analysis and nothing is assumed about the inner workings of the device), whereas the other party's measurements are fully characterised. Which party is characterised (or, in other words, trusted) is part of the protocol specification.

\begin{figure}[t]
	\centering
	\begin{tikzpicture}
		\draw[draw=myred,thick, fill=myred!15, rounded corners] (0,-0.6) rectangle (4,3);
		\node[color=myred] at (0.75,3.25) {\textsf{Alice}};
		\node (W) at (1,2.1) {\includegraphics[width=0.08\textwidth]{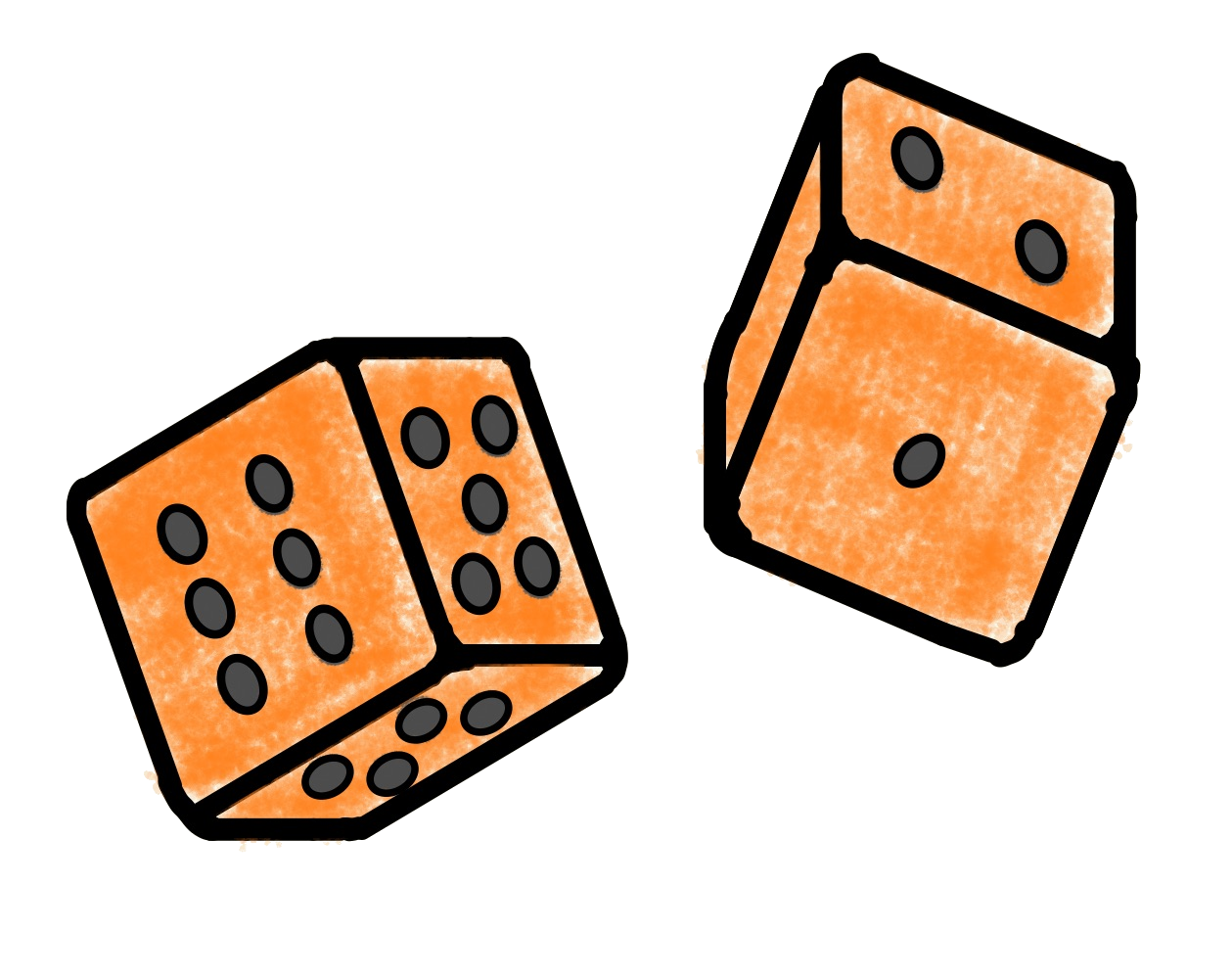}};
		\node[color=black!70] at (0.7,2.5) {\footnotesize \textsf{RNG}};
		\node (P) at (2,0.75) {\includegraphics[width=0.08\textwidth]{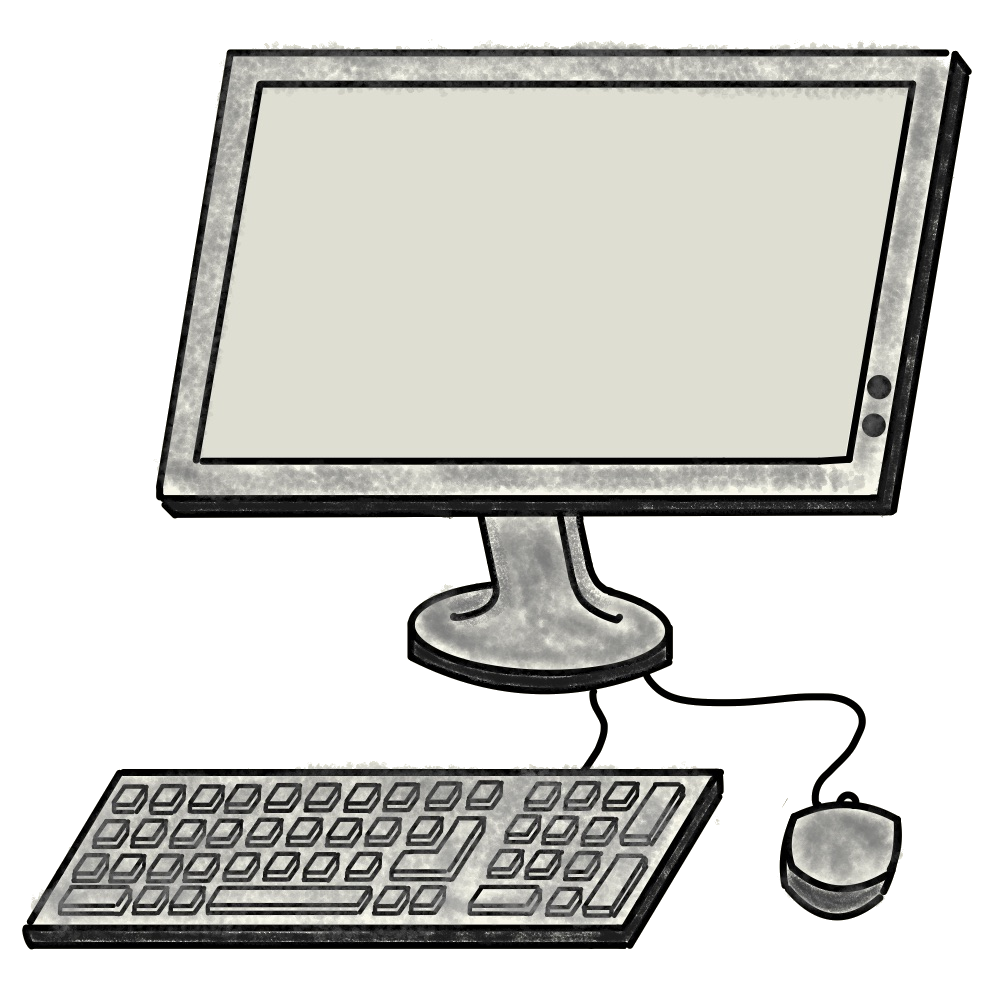}};
        \node at (2.05,1.05) {\footnotesize $S$};
		\draw[draw=black, fill=black!20] (2.6,1.6) rectangle (3.6,2.6);
		\node[color=black!70] at (3.1,2.75) {\tiny \textsf{measurement}};
		\node at (2.95,2.35) {\footnotesize $\mathcal{M}$};
		\draw[draw=black,fill=myred2!30] (2.6,1.6) -- (3.6,2.6) -- (3.6,1.6) -- cycle;
		\node[color=myred2] at (3.3,1.9) {\small \textsf{?}};
		\node at (2,-0.15) {\footnotesize $H(S|IQ_E)$};
		\draw[->,>=stealth,thick, color=black] (2,-0.45) -- (2,-1.1);
		\node at (2,-1.4) {\small $K_A$};
		\draw[->,>=stealth, thick] (W) -- (2.5,2.1);
		\draw[->,>=stealth, thick] (0.8,1.5) to [bend right] (1.4,0.75);
		\draw[->,>=stealth, thick] (3.2,1.5) to [bend left] node[right] {\footnotesize $A$} (2.6,0.75);
		\begin{scope}[xshift=12cm,xscale=-1]
			\draw[draw=myred,thick, fill=myred!15, rounded corners] (0,-0.6) rectangle (4,3);
			\node[color=myred] at (0.75,3.25) {\textsf{Bob}};
			\node[xscale=-1] (W2) at (1,2.1) {\includegraphics[width=0.08\textwidth]{Wuerfel.png}};
			\node[color=black!70] at (0.7,2.5) {\footnotesize \textsf{RNG}};
			\node (P2) at (2,0.75) {\includegraphics[width=0.08\textwidth]{PCblack.png}};
			\draw[draw=black, fill=black!20] (2.6,1.6) rectangle (3.6,2.6);
			\node[color=black!70] at (3.1,2.75) {\tiny \textsf{measurement}};
			\node at (2.95,1.85) {\footnotesize $\mathcal{N}$};
			\draw[draw=black,fill=myred2!30] (3.6,2.6) -- (3.6,1.6) -- (2.6,2.6) -- cycle;
			\node[color=myred2] at (3.3,2.3) {\small \textsf{?}};
			\draw[->,>=stealth,thick, color=black] (2,0) -- (2,-1.1);
			\node at (2,-1.4) {\small $K_B$};
			\draw[->,>=stealth, thick] (W2) -- (2.5,2.1);
			\draw[->,>=stealth, thick] (0.8,1.5) to [bend right] (1.4,0.75);
			\draw[->,>=stealth, thick] (3.2,1.5) to [bend left] node[left] {\footnotesize $B$} (2.6,0.75);
		\end{scope}
		\begin{scope}[xshift=6cm, yshift=3cm,scale=1]
			\draw[draw=black,fill=black!20] (0,0) circle(0.6cm);
			\node[color=black!70] at (0,0) {$\bullet$};
			\draw[color=black!70,very thick] (0,0) ellipse (0.4cm and 0.15cm);
			\draw[color=black!70,very thick,rotate=60] (0,0) ellipse (0.4cm and 0.15cm);
			\draw[color=black!70,very thick,rotate=120] (0,0) ellipse (0.4cm and 0.15cm);
			\draw[decoration={text along path,text color=black!70,text={|\small\sffamily| source},text align={center}},decorate] (-0.6,0.35) to [bend left=70] (0.6,0.35);
			\draw[->,>=stealth,thick,decorate, decoration={snake, segment length=3.7mm, amplitude=0.7mm}] (-0.7,-0.2) -- (-2.2,-1);
			\draw[->,>=stealth,thick,decorate, decoration={snake, segment length=3.7mm, amplitude=0.7mm}] (0.7,-0.2) -- (2.2,-1);
		\end{scope}
		\draw[<->,>=stealth,double,thick] (2.75,0.35) to node[above] {\small $I$} node[below] {\small \textsf{classical channel}} (9.25,0.35);
	\end{tikzpicture}
	\caption{
        \label{fig:setup}\textbf{Setup for one-sided DIQKD.} An untrusted source distributes a quantum state to Alice and Bob, who perform local measurements and exchange classical information $I$. One measurement device is treated as a black box, indicated by the question mark, while the other is fully characterised, denoted by $\mathcal{M}$ or $\mathcal{N}$ for Alice or Bob, respectively. Throughout this work, Alice generates the raw key $S$. The choice of which party's measurement device is characterised is part of the protocol specification.
    }
\end{figure}

Through public communication (for example, comparison of the measurement bases to discard rounds without correlations), Alice generates a raw key $S$. The classical information exchanged via the public channel during this process is denoted $I$. To turn the raw key into a secure key, Alice and Bob then have to perform classical post-processing. In addition, since in a one-sided DIQKD protocol the untrusted device can have memory, we require that $I$ does not depend on the memory of the device:
\begin{condition} \label{cond:non_signalling}
    The protocol is such that there is no signalling from the device memory to the public information $I$.
\end{condition}
    
The achievable secret key rate is determined by two quantities: the amount of uncertainty about Alice's raw key from the adversary's perspective and the information that has to be revealed during error correction. More precisely, as shown in Corollary~\ref{cor:asymptotic_key_rate}, the asymptotic key rate $r_\infty$ is given by the optimization problem 
\begin{equation} 
\label{eq:asymptotic_key_rate}
    r_\infty = \inf_{\rho} (H(S|IQ_E)_\rho - H(S|IB)_\rho).
\end{equation}
Here, $H$ denotes the conditional von Neumann entropy. It is evaluated over the state $\rho_{SIBQ_E}$, which is the state at the end of a single round (for a full expression see Theorem~\ref{thm:sdi_qkd_security}). The minimization is over all states compatible with the parameter estimation constraints, i.e., states which produce the same statistics as the honest implementation.

The above key rate formula (and the protocol in Box~\hyperref[box:simpleprotocol]{1}) assumes that the raw key is taken from Alice's measurement results (this is also called direct reconciliation). For some protocols it can be beneficial to use Bob's measurement results to generate the raw key instead. However, many protocols are symmetric when swapping the roles of Alice and Bob, and hence it does not matter which party generates the key; the only thing that matters is whether or not the key is taken from the trusted or from the untrusted party. In other words, the key rate when Bob is trusted and the key is generated by Alice is the same as when Alice is trusted and the key is generated by Bob. Therefore, for ease of presentation, we will always assume that the key is generated from Alice's measurement and either Alice's or Bob's measurements are characterized.\footnote{For the numerical implementation, however, it is easier to always trust Alice and generate the key from either Alice or Bob. By the argument above, for symmetric protocols this does not matter.}

\begin{tcolorbox}[
    title=Box 1: One-sided DIQKD protocol (simplified),
    title filled=false, 
    colback=black!5!white,
    colbacktitle=black!5!white,
    colframe=black!40!white,
    coltitle=black,
    enhanced,
    breakable,
]
\label{box:simpleprotocol}
\begin{enumerate}[itemsep=-0.2em]
    \item For a total of $n$ rounds:
    \begin{enumerate}[itemsep=-0.2em, topsep=-0.5em]
        \item Eve produces a tripartite state $\rho_{Q_AQ_BQ_E}$. She distributes system $Q_A$ to Alice and system $Q_B$ to Bob.
        \item Alice and Bob perform measurements on their respective parts of the state to obtain  outcomes $A$ and $B$, respectively.
        \item Alice and Bob communicate publicly to establish Alice's raw key $S$. The public communication transcript is stored in the register $I$. 
    \end{enumerate}
    \item Alice and Bob communicate such that Bob obtains a guess $\hat{S}$ for the raw key $S$.
    \item Alice and Bob perform parameter estimation using $\hat{S}$, $B$, and $I$. If parameter estimation fails, they abort the protocol. 
    \item Alice chooses a two-universal hash function $F$ at random and sends it to Bob. Alice and Bob apply $F$ to their raw keys $S$ and $\hat{S}$ to obtain the final secret key.
\end{enumerate}
\end{tcolorbox}

\begin{remark}
    The protocol in Box~\hyperref[box:simpleprotocol]{1} considers an entanglement based one-sided DIQKD protocol. Using the source replacement technique \cite{BennettBrassardMermin1992}, our methods can also be adapted to prepare-and-measure protocols.
\end{remark}

\section{Methods} \label{sec:methods}

To turn measured statistical data in a QKD protocol into estimates of the amount of extractable key, the remaining numerical challenge is to provide reliable bounds on a conditional entropy. In the following, we outline how to compute bounds on the conditional von Neumann entropy in a (one-sided) device-independent manner. We assume in the following that Alice, Bob, and Eve are described by quantum systems $\mathcal H_{Q_A} \otimes \mathcal H_{Q_BQ_E}$. In particular we assume that we know a concrete Hilbert space representation of the operators $\{M(a\vert x)\}_{a \in \mathcal A}$ for each $x \in \mathcal X$ on Alice's side. In contrast, we do not include any concrete knowledge about Bob's operators $\{N(b\vert y)\}_{b\in \mathcal B}$ for each $y \in \mathcal Y$ such that the only information we have about Bob is that for each $y \in \mathcal  Y$, $\{N(b\vert y)\}_{b\in \mathcal B}$ is a POVM. In a setting in which Bob is trusted instead of Alice, the systems $Q_A$, $A$ and $Q_B$, $B$ need to be switched. Under these assumptions, the optimization problem for the conditional von Neumann entropy $H(A|X=x_0, Q_E)_\rho$ of Alice's measurement outcome conditioned on measurement setting $x=x_0$ and Eve's quantum memory reads as follows:
\begin{equation} \label{eq:key-rate-opti}
\begin{aligned}
    \textrm{minimize} \quad & \quad H(A|X=x_0, Q_E)_\rho  \\
    \textrm{such~that} \quad & \quad \sum_{a \in \mathcal A, b \in \mathcal B, x \in \mathcal X, y \in \mathcal Y} c_{abxyi} \Tr[(M(a|x) \otimes N(b|y)) \rho_{Q_AQ_B}] \geq q_i \quad \forall i \in \mathcal C  \\
    \quad & \quad \sum_{b} N(b|y) = \mathds{1} \quad \forall y \in \mathcal  Y \\
    \quad & \quad N(b|y) \geq 0 \quad \forall b \in \mathcal  B, y \in \mathcal  Y  \\
    \quad & \quad \rho \in \mathcal S(\mathcal H_{Q_A} \otimes \mathcal H_{Q_BQ_E}) \mathrm{~pure}. 
\end{aligned}
\end{equation}
Here, $\mathcal C$ is a set of constraints and $c_{abxyi}$, $q_i \in \mathbb C$ are coefficients implementing them, usually coming from the observed experimental outcomes.

\begin{remark}
    In the optimization problem \eqref{eq:key-rate-opti}, Alice's measurement outcome is directly used to generate the raw key (i.e. $S = A$) and there is no classical side information $I$. These classical post-processing steps can easily be incorporated (see Appendix~\ref{sec:vonNeumann_relaxation}), but we consider the simplified problem \eqref{eq:key-rate-opti} for clarity of the exposition.
    
\end{remark}

The optimization problem \eqref{eq:key-rate-opti} includes an optimization with respect to all possible quantum systems $Q_B$ for Bob and we remark that the purification from Eve does not need to be in tensor product relation with respect to Bob (cf.~for~details~\cite{ChengDIQKD}). In \eqref{eq:key-rate-opti}, the coefficients $c_{abxyi}$ and $q_i$ can be chosen in different ways to represent how Alice and Bob use their measurement results in order to constrain the knowledge Eve could have obtained. For example, the $c_{abxyi}$ and $q_i$ could be 
chosen such that 
\begin{equation}
    \sum_{a \in \mathcal A, b \in \mathcal B, x \in \mathcal X, y \in \mathcal Y} c_{abxyi} \Tr[(M(a|x) \otimes N(b|y)) \rho_{Q_AQ_B}] \geq q_i
\end{equation}
becomes the violation of a suitable Bell inequality. Alternatively, we could use the full statistics as constraints of the form 
\begin{equation}
    q_i-\delta \leq \Tr[(M(a|x) \otimes N(b|y)) \rho_{Q_AQ_B}] \leq q_i + \delta \,.
\end{equation}
There are different ways of approximating the conditional entropy by a polynomial in non-commutative variables. One way is to use the Gauss-Radau quadrature to discretize the integral representation of the logarithm as in \cite{brown2024device}. We will refer to this approximation simply as BFF for short. Another way is to use the integral representation for the relative entropy due to Frenkel \cite{frenkel2023integral} and then again to use a quadrature rule to discretize it \cite{kossmann2025reliableentropyestimationobserved,kossmann2025optimisingrelativeentropysemidefinite}. We will refer to this approximation simply as KS for short. Either way, this approximation converts the optimization problem we want to solve into a non-commutative polynomial optimization problem. If one is interested in computing key rates for fully DIQKD, one can use the NPA hierarchy \cite{Navascus2007_npa1, Navascus2008_npa2, pironio2010convergent} to approximate \eqref{eq:key-rate-opti} by a hierarchy of semidefinite programs (SDPs), a type of convex optimization problem for which powerful solvers exist. The SDPs yield increasingly better lower bounds on \eqref{eq:key-rate-opti}. However, since we are interested in one-sided DIQKD, we want to fix the observables that Alice measures. This prevents us from using the NPA hierarchy directly. In the following, we will consider two modifications of the NPA hierarchy that will allow us to compute bounds on \eqref{eq:key-rate-opti}.

\subsection{Finite-size key rate}
At this point we also note that, using the Entropy Accumulation Theorem~\cite{Dupuis_2020, Dupuis_2019, Metger_2022}, the optimization problem in \eqref{eq:key-rate-opti} can be used to calculate finite-size key rates as we will sketch now. As shown in Appendix~\ref{sec:finite-size-proof}, the GEAT \cite{Metger_2022} can be used to show that, under Condition~\ref{cond:non_signalling}, any affine lower-bound $\textsc{ca}$ on $H(S|I Q_E)$ gives a bound
\begin{equation} \label{eq:key-length}
    l \geq n\left[ \min_{p \in \Omega_{\mathrm{acc}}} \textsc{ca}(p) - H(S|IB) \right] - O(\sqrt{n})
\end{equation}
on the key-length $l$. The set $\Omega_{\mathrm{acc}}$ contains all probability distributions for which the parameter estimation step of the protocol does not abort. Hence, one is left with the problem of obtaining a good affine lower-bound $\textsc{ca}$. Since $\textsc{ca}$ is affine, we may parametrize it as $\textsc{ca}(p) = c + \lambda^T p$ for some arbitrarily chosen offset $c$ and gradient $\lambda$. The condition that\footnote{The set $\Sigma(p)$ here are all the states which are compatible with the parameter-estimation probability $p$. For details, see Appendix~\ref{sec:finite-size-proof}.} $\textsc{ca}(p) \leq \inf_{\nu \in \Sigma(p)} H(S|I Q_E)_{\nu}$ can then be recast as the optimization problem
\begin{equation} \label{eq:min-tradeoff-opti}
    c \leq \inf_{\nu} (H(S|I Q_E)_\nu - \lambda^T \nu_C)
\end{equation}
which is essentially equivalent to \eqref{eq:key-rate-opti}. For any choice of $\lambda$, the optimization problem \eqref{eq:min-tradeoff-opti} then provides a valid min-tradeoff function $\textsc{ca}$. Since this works for all $\lambda$, one can then use heuristic optimization procedures to find the gradient $\lambda$ (and hence the min-tradeoff function $\textsc{ca}$) which maximizes the key length in \eqref{eq:key-length}.

\subsection{NPA hierarchy with matrix algebra constraints}\label{subsec:npa_hierarchy_with_matrix_constraints}

A powerful perspective on the optimization problem \eqref{eq:key-rate-opti} for the purpose of optimizing conditional von Neumann entropy is the commuting operator framework \cite{ChengDIQKD}. The commuting operator framework yields a clean description of what is meant by \emph{optimizing over all quantum systems at Bob and Eve's disposal}. Indeed, taking the discussion from \cite[Sec.~5]{ChengDIQKD} into account yields that Bob can be just described as the universal $C^\ast$-algebra of POVMs and, since Alice is assumed to be trusted in here, she has just a full matrix algebra at her disposal. Thus, Alice's algebra is defined by elementary matrices
\begin{align}
    \{\Theta_{i,j} \ \vert \ 1\leq i,j\leq m \}, 
\end{align}
with the rule 
\begin{align}\label{eq:rule_elementary_matrices}
    \Theta_{i,j} \Theta_{k,l} = \delta_{j,k} \Theta_{i,l}.
\end{align}
Thus, the $\Theta_{i,j}$ should be thought of as matrix units, i.e., matrices with $1$ in the $(i,j)$ entry and zeroes everywhere else. In contrast, Bob's algebra is generated by symbols, also called \emph{generators}, $\mathcal{G}\coloneqq \{N(b\vert y)\}_{b \in \mathcal B, \ y \in \mathcal Y}$ and relations
\begin{equation}\label{eq:rules_universalCstar_algebra}
    \begin{aligned}
         \mathcal{R}_1 \coloneqq &\{N(b\vert y)^2  = N(b\vert y)  \ \vert \ b \in \mathcal B, \  y \in \mathcal Y\}  \\
         \mathcal{R}_2  \coloneqq &\{\sum_{b \in \mathcal B} N(b\vert y)  = \mathds{1}  \ \vert \ \  y \in \mathcal Y\} \\
         \mathcal{R}_3 \coloneqq  &\{N(b\vert y)^*  = N(b\vert y)  \ \vert \ b \in \mathcal  B, \  y \in \mathcal Y\}
    \end{aligned}
\end{equation}
with $\mathcal{R} \coloneqq \mathcal{R}_1 \cup \mathcal{R}_2 \cup  \mathcal{R}_3$. Then the universal $C^\ast$-algebra generated by $\mathcal{G}$ and relations $\mathcal{R}$ is denoted by $\mathfrak{B} \coloneqq C^*\!\left(\mathcal{G}\vert \mathcal{R}\right)$ \cite[Def.~II.8.3.1]{Blackadar2006}. In other words, Bob's algebra is indeed generated by projective measurements of arbitrary dimension. Since Alice's algebra is nuclear, the $C^*$-tensor product of Alice's and Bob's algebra is unambiguous and can be identified with (cf.~\cite[Exam.~II.9.4.2]{Blackadar2006})
\begin{align}\label{eq:whole-algebra}
    \mathfrak{A} \otimes \mathfrak{B} \cong \mathbb{M}_m\!\left(\mathfrak{B}\right), 
\end{align}
i.e, $m \times m$ matrices with coefficients in the algebra $\mathfrak B$. As shown in \cite{ChengDIQKD}, a concrete realization of an experiment yielding statistics from a state $\rho \in \mathcal{S}\!\left( \mathfrak{A} \otimes \mathfrak{B}\right)$ can be considered without loss of generality with the GNS-representation \cite[Sec.~II.6.4]{Blackadar2006} as purification. For the purpose of optimization theory, we need to find a unified representation in terms of generators and relations of the whole algebra in  \eqref{eq:whole-algebra} describing the Alice-Bob system, which is sufficient. Even though there may be more efficient ways to describe the whole algebra in  \eqref{eq:whole-algebra}, we prefer the generic way resulting from \eqref{eq:rule_elementary_matrices} and \eqref{eq:rules_universalCstar_algebra} in the following and we refer to Appendix \ref{sec:more_efficient_presentation} for details on the efficiency in terms of the amount of generators. A crucial ingredient for the later efficiency of the NPA hierarchy will be the dilation argument that Bob's system can be chosen to be the universal $C^\ast$-algebra of projective measurements instead of general effects. The argument is done in \cite[Cor.~5.6]{ChengDIQKD}, which we can apply here. Thus, we define
\begin{align}
    \eta_{i,j,b\vert y} \coloneqq \Theta_{i,j} \otimes N(b\vert y),
    \quad 1\leq i,j\leq m,\ b\in \mathcal B,\ y \in \mathcal Y,
\end{align}
as elementary generators with the following relations for the situation that $\{N(b\vert y)\}_{b\in \mathcal B}$ are projective measurements for each $y \in \mathcal Y$ on Bob's side: 
\begin{enumerate}
    \item[(R1)] \emph{$*$-structure:}
    \begin{align}
        \eta_{i,j,b\vert y}^\ast = \eta_{j,i,b\vert y}
        \quad\text{for all }i,j,b,y.
    \end{align}
    \item[(R2)] \emph{Matrix-unit relations for a fixed input-output $(b,y)$:}
    \begin{align}
        \eta_{i,j,b\vert y}\,\eta_{k,\ell,b\vert y}
        = \delta_{j,k}\,\eta_{i,\ell,b\vert y}
        \quad\text{for all }i,j,k,\ell, b,y.
    \end{align}
    \item[(R3)] \emph{Orthogonality of different outcomes for the same input $y$ on Bob's side:}
    \begin{align}
        \eta_{i,j,b\vert y}\,\eta_{k,\ell,b^\prime\vert y} = 0
        \quad\text{for all }i,j,k,\ell, y\text{ and }b\neq b'.
    \end{align}
    \item[(R4)] \emph{Completeness for each input $y$:}
    \begin{align}
        \sum_{i=1}^m \sum_{b\in \mathcal B} \eta_{i,i,b\vert y} = \mathds{1}
        \quad\text{for all }y\in \mathcal Y.
    \end{align}
    \item[(R5)] \emph{Independence of the input label for Alice's part:}
    For all $i,j$ and all $y,y'\in \mathcal Y$,
    \begin{align}
        \sum_{b\in \mathcal B} \eta_{i,j,b\vert y} = \sum_{b\in \mathcal B} \eta_{i,j,b\vert y'}.
    \end{align}
\end{enumerate}
By the fact that all operators $\eta_{i,j,b\vert y}$ are bounded by construction and the relations can be fulfilled within Hilbert spaces, \cite[II.8.3.1]{Blackadar2006} yields that together they generate a well-defined universal $C^\ast$-algebra. Showing that the generators $\{\eta_{i,j,b\vert y}\}_{i,j,b,y}$, together with (R1)-(R5), indeed generate  \eqref{eq:whole-algebra} requires a standard proof for a $\ast$-isomorphism, which we omit here and refer instead to Appendix \ref{sec:more_efficient_presentation}. 

In a next step we use the, by assumption, known representation of the operators $M(a|x)$ in terms of the matrix units $\Theta_{i,j}$ given by
\begin{align}
    M(a\vert x) = \sum_{i,j=1}^m \lambda_{i,j}^{(a\vert x)} \Theta_{i,j}
\end{align}
for some coefficients $\lambda_{i,j}^{(a\vert x)} \in \mathbb{C}$. Thus, $M(a\vert x) \otimes N(b\vert y)$ can be identified up to isomorphism of $C^\ast$-algebras with
\begin{align}
     \left(\sum_{i,j=1}^m \lambda_{i,j}^{(a\vert x)} \Theta_{i,j} \otimes \mathds{1}\right)
       (\mathds{1}\otimes N(b\vert y)) &= \sum_{i,j=1}^m \lambda_{i,j}^{(a\vert x)} (\Theta_{i,j}\otimes N(b\vert y)) \\
     &\cong \sum_{i,j=1}^m \lambda_{i,j}^{(a\vert x)} \,\eta_{i,j,b\vert y}.
\end{align}
Thus the optimization problem \eqref{eq:key-rate-opti} can be written as 
\begin{equation}\label{eq:Cstar_SDP_eta}
\begin{aligned}
    \mathrm{minimize}\quad \ &H(A\vert X=x_0, Q_E)_\psi \\
    \mathrm{such~that}\quad \ 
    &\sum_{a \in \mathcal A, b \in \mathcal B, x \in \mathcal X, y \in \mathcal Y} c_{abxyi}  \sum_{j,\ell=1}^m \lambda_{j,\ell}^{(a\vert x)} \,\psi(\eta_{j,\ell,b\vert y}) \geq q_i \quad \forall i \in \mathcal C \\
    &\psi \in \mathcal{S}\!\left( \mathfrak{A} \otimes \mathfrak{B}\right) \\
    &\{\eta_{j,\ell,b\vert y}\}\ \text{satisfy (R1)-(R5).}
\end{aligned}
\end{equation}
The optimization problem in \eqref{eq:Cstar_SDP_eta} is in a form which is compatible with a relaxation by the NPA hierarchy, since all relations (R1)-(R5) are relations in the monomials $\{\eta_{i,j,b\vert y}\}_{i,j,b,y}$. We are left with a relaxation of the conditional von Neumann entropy, which is a non-linear functional on the state, to a polynomial. For details on this relaxation we refer to \cite{kossmann2025reliableentropyestimationobserved,kossmann2025optimisingrelativeentropysemidefinite}. The resulting optimization problem is given in   \eqref{eq:optimization_problem_eta} in Appendix~\ref{sec:vonNeumann_relaxation} and can be relaxed using the NPA hierarchy.

While the approach using matrix units works for any dimension $m$, for $m=2$ it is alternatively possible to impose matrix algebra constraints based on the algebraic relations of the Pauli operators. See Appendix \ref{sec:more_efficient_presentation}, in particular \eqref{eq:pauli_generator_entropy_program}, for details. 

\subsection{NPA hierarchy with matrix-valued polynomials} \label{sec:matrix-NPA-methods}
Another way to tackle problem \eqref{eq:key-rate-opti} is to extend the NPA hierarchy to allow for matrix-valued polynomials, corresponding to a subsystem of fixed dimension. Concretely, in Appendix \ref{sec:matrix-NPA} we give a hierarchy of SDPs which yield a converging sequence of lower bounds to the following polynomial optimization problem 
\begin{align*}
    \mathrm{minimize} \quad & \quad \bra{\psi} p(X) \ket{\psi} \\
    \mathrm{such~that} \quad & \quad q^{(j)}(X) \geq 0 \qquad \forall j \in \mathcal Q \\
    \quad & \quad \bra{\psi}r^{(\ell)}(X) \ket{\psi} \geq 0  \qquad \forall \ell \in \mathcal R \tag{\textbf{P}} \label{eq:problem-P-main} \\
    \quad & \quad \langle \psi | \psi \rangle = 1 \\
    \quad & \quad \ket{\psi} \in \mathbb C^m \otimes \mathcal H \\
    \quad & \quad X \in \mathcal B(\mathcal H)^g \\
    \quad & \quad \mathcal H \mathrm{~Hilbert~space}
\end{align*}
Here, $\mathcal Q$, $\mathcal R$ are index sets,  $p$ and $r^{(\ell)}$ are polynomials in non-commutative variables with coefficients in $\mathcal B(\mathbb C^m)$ and the $q^{(j)}$ are polynomials in non-commutative variables with coefficients in $\mathcal B(\mathbb C^{m_j})$. There are $g$ non-commutative variables $x=(x_1, \ldots, x_g)$ and the polynomials can both involve $x_i$ and their adjoint $x_i^\ast$. The polynomials are required to be Hermitian, i.e., to be invariant if we take both the Hermitian conjugate of the coefficients and the adjoint of the variables. For example,
\begin{equation}
    p_1(x) = \begin{pmatrix}
        0 & 1 \\ 0 & 0
    \end{pmatrix} x_1^\ast x_2 x_1 + \begin{pmatrix}
        0 & 0 \\ 1 & 0
    \end{pmatrix} x_1^\ast x_2^\ast x_1 \label{eq:p1example}
\end{equation}
is Hermitian, whereas 
\begin{equation}
    p_2(x) = \begin{pmatrix}
        1 & 0 \\ 0 & 0
    \end{pmatrix} x_1^\ast x_2 x_1
\end{equation}
is not. If we evaluate the polynomial at a tuple of matrices $X \in \mathcal B(\mathcal H)^g$, we are taking tensor products between the coefficients and the matrices $X$. For example,
\begin{equation}
    p_1(X) = \begin{pmatrix}
        0 & 1 \\ 0 & 0
    \end{pmatrix} \otimes (X_1^\ast X_2 X_1) + \begin{pmatrix}
        0 & 0 \\ 1 & 0
    \end{pmatrix} \otimes (X_1^\ast X_2^\ast X_1) \,.
\end{equation}
For $m=1$ and $m_j=1$ for all $j \in \mathcal Q$, we recover the problem that the NPA hierarchy gives a sequence of converging lower bounds for.

To connect to problem \eqref{eq:key-rate-opti}, we have non-commutative variables $x=(x_{b|y})_{b \in \mathcal B\setminus \{|\mathcal B|\}, y \in \mathcal Y}$, where we evaluate at $X_{b|x} = N(b|y)$. Moreover, as we prefer inequality constraints, we substitute $N(|\mathcal B| | y) = \mathds{1}- \sum_{b \in \mathcal B \setminus \{|\mathcal B|\}} N(b|y)$ for all $y \in \mathcal Y$ in order to take care of the constraint that the $ N(b|y)$ have to sum to the identity. Here, we assumed without loss of generality the index set to have the form $\mathcal B = \{1, \ldots, |\mathcal B|\}$. For $b \in \mathcal B \setminus \{|\mathcal B|\}$ we obtain 
\begin{equation}
q^{(b|y)}(x) = x_{b|x}    
\end{equation}
with $m_{b|y} = 1$. Moreover,
\begin{equation}
q^{(|\mathcal B||y)}(x) = 1- \sum_{b \in |\mathcal B|} x_{b|x} 
\end{equation}
with $m_{|\mathcal B||y} = 1$. We could have also written the equality constraint $\sum_{b} N(b|y) =\mathds{1}$ as two inequalities, but this method is less efficient and typically leads to numerical instabilities when using the solver. Moreover, we set
\begin{equation}
    r^{(i)}(x) = \sum_{a \in \mathcal A, b \in \mathcal B \setminus \{|\mathcal B|\}, x \in \mathcal X, y \in \mathcal Y} (c_{abxyi}-c_{a|\mathcal B|xyi}) M(a|x) x_{b|y} + \sum_{a \in \mathcal A, x \in \mathcal X, y \in \mathcal Y}  c_{a|\mathcal B|xyi} M(a|x)- q_i \mathds{1} \,.
\end{equation}
As objective function $p$ we choose a polynomial approximation of $H(A|X=x_0, Q_E)_\rho$ using the methods in \cite{brown2024device, kossmann2025reliableentropyestimationobserved,kossmann2025optimisingrelativeentropysemidefinite} as described above.

The SDPs that need to be solved in order to compute an increasing sequence of lower bounds to \eqref{eq:problem-P-main} which converges to the optimal solution look as follows: For $k\geq k_0\coloneq\lceil \max_{j \in \mathcal Q, \ell \in \mathcal R}\{\operatorname{deg}(p), \operatorname{deg}(q^{(j)}), \operatorname{deg}(r^{(\ell)})\}/2\rceil$, they are
\begin{align*}
        \mathrm{minimize} \quad & \quad  \sum_{|w| \leq 2k} \Tr[p_w y_w^T] \\
    \mathrm{such~that} \quad & \quad M_k(y) \geq 0 \tag{$\mathbf{R_k}$} \label{eq:problem-Rk-main}  \\
    \quad & \quad M_{k-d_j}(q^{(j)}y) \geq 0  \qquad \forall j \in \mathcal Q \\
    \quad & \quad \sum_{|w| \leq 2k} \Tr[r^{(\ell)}_w y_w^T] \geq 0 \qquad \forall \ell \in \mathcal R\\ 
    \quad & \quad \sum_{i=1}^m y_{\emptyset}^{ii} = 1 \\
    \quad & \quad y_w =(y_w^{ij})_{i, j \in [m]} \in \mathcal B(\mathbb C^m) \qquad \forall |w| \leq 2k \,.
\end{align*}
Here, $d_j = \lceil \deg(q^{(j)})/2 \rceil$ and we have written the matrix-valued polynomials in terms of the words $w$ appearing as
\begin{equation}
    p = \sum_w p_w w \,,
\end{equation}
where $p_w \in \mathcal B(\mathbb C^m)$ is the the coefficient belonging to word $w$. For example, for $p_1$ as in  \eqref{eq:p1example}, there are two words $w_1\coloneq x_1^\ast x_2x_1$ and $w_2\coloneq x_1^\ast x_2x_1$ with
\begin{equation}
    p_{w_1} = \begin{pmatrix}
        0 & 1 \\ 0 & 0
    \end{pmatrix}, \qquad  p_{w_2} = \begin{pmatrix}
        0 & 0 \\ 1 & 0
    \end{pmatrix}.
\end{equation}
By $|w|$ we mean the length of the word. In our example, $|w_1|=|w_2|=3$. To each word, we associate an $m \times m$-matrix $y_w = (y_w^{ij})_{i, j \in [m]}$. These matrices are the variables we are optimizing over. The \emph{moment matrix} $M_k(y)$ is a block matrix consisting of the different $y_w$, where the entries are indexed by words up to length $k$ as $M_k(y)(v,w) = y_{v^\ast w}$, where $|v|$, $|w| \leq k$. Here, the $\ast$-operation acts on $v=v_1 \ldots v_k$ by reversing the letters and replacing the variables by their adjoints, i.e., $v^\ast = v_k^\ast \ldots v_1^\ast$. Writing $\ket{\psi} = \sum_{i \in [m]} \ket{i} \otimes \ket{\psi_i}$ for the state in \eqref{eq:problem-P-main}, the $y_w^{ij}$ should be thought of as being $\bra{\psi_i} w(X) \ket{\psi_j}$, where $w(X)$ is the word $w$ evaluated at $X$, i.e., where we replaced $x_i$ and $x_i^\ast$ by $X_i$ and $X_i^\ast$.

The localization matrix $M_{k-d_j}(q^{(j)}y)$ has a similar structure, but with blocks of size $m_j m  \times m_j m $ and entries
\begin{equation}
    M_{k-d_j}(q^{(j)}y)(v,w) = \sum_{|s|\leq \deg(q^{(j)})} q^{(j)}_s \otimes y_{v^\ast s w} 
\end{equation}
for $|v|$, $|w| \leq k-d_j$.

For $m=1$ and $m_j=1$ for all $j \in \mathcal Q$, the \eqref{eq:problem-Rk-main} coincide with the SDPs from the NPA hierarchy as expected. The main difference between \eqref{eq:problem-Rk-main} and the NPA hierarchy is that we replaced the entries $y_w$ of the moment matrix, which are scalars in the NPA hierarchy, by $m \times m$ matrices.

For formal definitions of the objects appearing in this section and a proof that the solutions $p^{(k)}$ to \eqref{eq:problem-Rk-main} form an increasing sequence of lower bounds to the optimal solution $p^\opt$ of \eqref{eq:problem-P-main} with $\lim_{k \to \infty} p^{(k)} = p^\opt$ we refer the reader to Appendix \ref{sec:matrix-NPA}.

\subsection{Comparison of the methods} \label{sec:comparison}
In this section, we presented two methods to solve the optimization problem~\eqref{eq:key-rate-opti}, the NPA hierarchy with matrix algebra constraints (NPA-AC) when using matrix units (and NPA-AC(Pauli) when using Pauli operators as in Appendix \ref{sec:more_efficient_presentation}) in Section \ref{subsec:npa_hierarchy_with_matrix_constraints} and the NPA hierarchy with matrix-valued polynomials (NPA-MP) in Section \ref{sec:matrix-NPA-methods}. A natural question to ask is how these two approaches are related, which is what we focus on in this section.

 In abstract terms, the NPA hierarchy constructs for a sequence of elements $\{ \gamma_1, ..., \gamma_n \} \subseteq \mathbb{M}_n\!\left(\mathfrak{B}\right)$ a linear, continuous and completely positive map (cf.~\cite[Lem.~5]{kossmann2023hierarchiessemidefiniteoptimizationmathcalcstaralgebras})
    \begin{equation}
        \begin{aligned}
            \Phi: \mathbb{C}^{n\times n} ~&\longrightarrow~ \mathbb{M}_n\!\left(\mathfrak{B}\right) \\
            M = ( m_{ij} ) ~&\longmapsto~ \sum_{i,j = 1}^n m_{ij} \gamma_i \gamma_j^*.
        \end{aligned}
    \end{equation}
From this map one can construct the NPA relaxation \cite{Navascus2007_npa1,Navascus2008_npa2} by an identification of the kernel of $\Phi$ as shown in \cite[Thm.~7]{kossmann2023hierarchiessemidefiniteoptimizationmathcalcstaralgebras}. The crucial observation for the following discussion is that \cite[Thm.~7]{kossmann2023hierarchiessemidefiniteoptimizationmathcalcstaralgebras} does not take into account that $\Phi$ is completely positive instead of just positive. The usual NPA-relaxation (and hence also the NPA-AC hierarchy) does not use this fact as a constraint on the resulting SDP. The NPA-MP hierarchy, however, takes these complete positivity constraints into account, which becomes apparent in the case of one-sided DIQKD, since a natural $2$-positivity is part of the problem. However, since the map $\Phi$ is completely positive, we could enforce $n$-positivity for all $n\in \mathbb{N}$ as constraints in the NPA-AC hierarchy as well. This would, in general, yield stronger bounds.     

The main appeal of the approach using the NPA-AC hierarchy is that it can be run on available implementations for the NPA hierarchy, which have been optimized over the last two decades, whereas for the NPA-MP hierarchy we had to write our own implementation. It turns out that in our examples using the NPA-AC hierarchy was slightly faster, whereas the NPA-MP hierarchy  needed less memory, see Table \ref{tab:runtime_comparison}. While we found the  NPA-MP hierarchy and the NPA-AC hierarchy to perform similarly for the provlems in this article, one or the other might be better suited for other problems. We leave the exploration of this possibility for future work.

\begin{table}[t!]
    \centering
    \begin{tabular}{c|c|c|c|c}
        Objective & BFF & KS & KS & KS \\
        and Method & NPA-MP & NPA-MP & NPA-AC(Pauli) & NPA-AC \\ \hline
        Runtime (s) & 2.8 & 1.6 & 31 & 1.3 \\
        Memory (MiB) & 294 & 201 & 645 & 245 \\
    \end{tabular}
    \caption{
        Runtime comparison between the different methods for the one-sided qubit BB84 protocol. All simulations were run on an Intel Ultra 9 285H CPU running Julia version 1.12.6 on Linux.
    }
    \label{tab:runtime_comparison}
\end{table}

Regarding the different relaxations of the objective function, for the small problems we consider in this article, the performance is very similar. For larger problems however, the KS approximation in \cite{kossmann2025reliableentropyestimationobserved,kossmann2025optimisingrelativeentropysemidefinite} is in general much faster than the BFF approximation in  \cite{brown2024device}. However, we observed that the KS relaxation was sometimes yielding slightly less tight bounds, but this could be due to a suboptimal choice of interpolation points. Here we should also mention that for both types of objective functions (BFF and KS) we use the ``block'' hierarchy from \cite[Appendix B]{Lanore_2026}. This improved runtime (and potentially accuracy) for all methods but the impact was especially drastic for the the method based on the NPA-AC hierarchy (last column in Table~\ref{tab:runtime_comparison}), where it improved runtime by roughly three orders of magnitude.
Finally, it may be surprising that the NPA-AC(Pauli) method performs worse than the NPA-AC method with matrix units. The explanation for this is that for the BB84 protocol considered in Table~\ref{tab:runtime_comparison} it suffices to consider real-valued optimization variables, whereas the Pauli commutation constraints require complex numbers (see Appendix~\ref{sec:more_efficient_presentation}). We expect that due to symmetries in the protocol it may be possible to speed up the NPA-AC(Pauli) method but leave this for future work.
In summary, the picture we see here for the different relaxations of the objective function is similar to what has been observed in the context of calculating key rates in fully DIQKD \cite{brown2024device, kossmann2025reliableentropyestimationobserved,kossmann2025optimisingrelativeentropysemidefinite}. 

\section{Numerical results}

In this section, we apply our methods to two protocols of interest. For each of the protocols, we will describe the honest measurements for both parties and the statistics collected during the protocol. Note that while we provide the honest measurements for both Alice and Bob, in our analysis we will only fix the measurements for one of the two parties while the other one remains untrusted. Like most QKD protocols, our protocols distinguish between key rounds and test rounds. That is, for each round, Alice and Bob independently decide whether to designate it as a key or as a test round. A given round is then considered a key round if both parties chose it to be a key round, and a test round if both parties chose it to be a test round. For key rounds, both parties perform measurements in a fixed basis, whereas for test rounds they randomly choose among a set of predefined bases. Only the measurement results of the key rounds are used to generate the final raw keys $S$ and $\hat{S}$.  Asymptotically, the fraction of test rounds goes to zero and hence almost all rounds are used for key generation. This means that in (\ref{eq:asymptotic_key_rate}), it suffices to consider the entropy of the measurement in the key basis for states which satisfy the constraints in the test bases.

We illustrate the achievable key rate in a one-sided device-independent setting for a few relevant protocols. For brevity, we focus here on two specific protocols: The first one is the well-known (one-sided) DIQKD protocol based on the  Clauser-Horne-Shimony-Holt (CHSH) inequality. The second protocol is the entanglement-based BB84 protocol with losses. Several other protocols can be found in Appendix~\ref{sec:protocols}.
More concretely, we consider protocols based on the $I_{3322}$ inequality, MUBs, and the standard qubit BB84 protocol. For all protocols we consider both the scenario where only Alice's measurements are characterized and the case where only Bob's measurements are characterized. 

For all the numerical calculations that follow, we use a Gauss-Radau quadrature with $m=8$ points in the BFF approximation and $r=22$ grid points in the KS approximation.

\subsection{CHSH protocol}

The CHSH protocol \cite{clauser1969proposed} is based on two binary-outcome measurements per party. We denote by $x \in  \{0, 1\}$ Alice's measurement setting and by $a \in \{0, 1\}$ her measurement result. Similarly, we write $y \in \{0, 1, 2\}$ for Bob's measurement setting and $b \in \{0, 1\}$ for his measurement result.

Alice randomly chooses between measuring the observables
\begin{equation}
    A_0 = \sigma_z
    \qquad \text{and} \qquad
    A_1 = \sigma_x,
\end{equation}
where $\sigma_x$ and $\sigma_z$ denote the Pauli matrices (acting on a qubit space).
This can be described by the POVMs $\{M_{Q_A}(a|x)\}_{a\in\mathcal{A},x\in\mathcal
X}$ given by
\begin{equation}
\begin{aligned} \label{eq:chsh_meas_alice}
    M_{Q_A}(0|0)  &= \ketbra{0}, &
    M_{Q_A}(1|0)&= \ketbra{1}, \\
    M_{Q_A}(0|1) &= \ketbra{+}, &
    M_{Q_A}(1|1) &= \ketbra{-}.
\end{aligned}
\end{equation}
Similarly, Bob chooses between the three binary-outcome observables
\begin{equation}
    B_0 = \frac{1}{\sqrt{2}}(\sigma_x + \sigma_z),
    \qquad
    B_1 = \frac{1}{\sqrt{2}}(\sigma_x - \sigma_z),
    \qquad \text{and} \qquad
    B_2 = \sigma_z,
\end{equation}
which leads to the POVMs $\{N_{Q_B}(b|y)\}_{b\in\mathcal{B},y\in\mathcal
Y}$ given by
\begin{equation}
\begin{aligned}
    N_{Q_B}(0|0) &= \frac{1}{2}(\mathds{1} + B_0), & N_{Q_B}(1|0) &= \frac{1}{2}(\mathds{1} - B_0), \\
    N_{Q_B}(0|1) &= \frac{1}{2}(\mathds{1} + B_1), & N_{Q_B}(1|1) &= \frac{1}{2}(\mathds{1} - B_1), \\
    N_{Q_B}(0|2) &= \ketbra{0}, & N_{Q_B}(1|2) &= \ketbra{1}.
\end{aligned}
\end{equation}
Any attack strategy, given by a state $\rho_{Q_AQ_BQ_E}$ and measurements $M_{Q_A}(a|x)$ and $N_{Q_B}(b|y)$, gives rise to a conditional probability distribution
\begin{equation}
    p(ab|xy) = \Tr[\rho_{Q_AQ_B} (M_{Q_A}(a|x) \otimes N_{Q_B}(b|y))].
\end{equation}
We consider a protocol where the inputs $x = 0$ and $y = 2$ are used for the key rounds, and a CHSH game with randomly chosen inputs $x \in \{0, 1\}$ and $y \in \{0, 1\}$ is played in the test rounds. The winning probability of the CHSH game is then computed as

\begin{figure}[t!]
    \centering
    \includegraphics[width=1.0\linewidth]{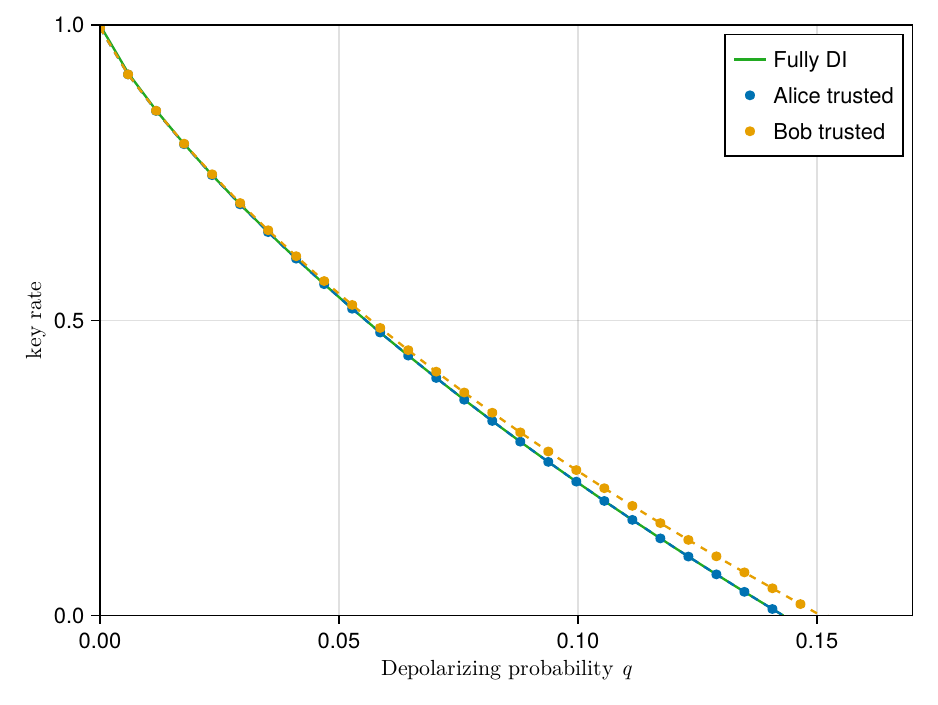}
    \caption{ \label{fig:fig1}
        Plot of the asymptotic key rate as a function of the depolarizing probability $q$. Shown are the cases when Bob is trusted (and Alice is not), the case when Alice is trusted (and Bob is not), and when neither party is trusted (fully device-independent). In all cases, the key is extracted from Alice's measurement results.
    }
    \label{fig:CHSH_key_rates}
\end{figure}

\begin{equation}
    \omega_{\mathrm{CHSH}}
    = \frac{1}{4}
    \sum_{\substack{a,b,x,y \in \{0, 1\} \\ a\oplus b = xy}}
    p(ab|xy),
\end{equation}
which corresponds to the probability that Alice's and Bob's outcomes satisfy the CHSH winning condition $a \oplus b = xy$. Classical correlations satisfy $\omega_{\mathrm{CHSH}}\leq 3/4$, while quantum correlations can achieve values up to $\omega_{\mathrm{CHSH}}=\cos^2(\pi/8)\approx 0.8536$.

To study the impact of experimental imperfections, we consider distributions $p(ab|xy)$ arising from the honest measurements given above and the joint state
\begin{equation}
    \rho_{Q_AQ_B} = (1-q) \ketbra{\psi}_{Q_AQ_B} + q \frac{\mathds{1}_{Q_AQ_B}}{4},
\end{equation}
where $\ket{\psi} = (\ket{00} + \ket{11})/\sqrt{2}$, and $q \in [0, 1]$ denotes the depolarizing probability. A plot showing the asymptotic key rate as a function of the depolarizing probability $q$ is shown in Figure~\ref{fig:CHSH_key_rates}.

We observe that the key rate is identical between the fully device-independent and the one-sided device-independent setting when Alice is trusted. However, when Bob's measurements are characterized, we achieve better key rates than in the fully device-independent setting. This is somewhat counter-intuitive since one would expect that characterizing the key-generating party leads to better key rates. This behavior can at least partially be explained by looking at the optimal attack on the device-independent protocol presented in \cite{Acin_2007}. This attack performs the honest measurements (i.e., the measurements in (\ref{eq:chsh_meas_alice})) on Alice's side but a different measurement on Bob's side.\footnote{In \cite{Acin_2007} they extract the key from Bob's side. Since we extract the key from Alice, we need to swap the parties in the attack of~\cite{Acin_2007}.} This means that the attack remains feasible even when Alice's measurements are trusted. Hence, the one-sided device-independent setting with trusted Alice cannot achieve a better key rate than the fully device-independent setting. However, since Bob's measurements in the attack from~\cite{Acin_2007} differ from the trusted measurement, this argument does not apply to Bob.

\subsection{BB84 with losses}
We consider a photonic implementation of an entanglement-based BB84 protocol with polarization encoding and active basis choice. In each round, Alice and Bob randomly choose between measuring in the rectilinear or diagonal basis, which we label with $0$ and $1$, respectively. After squashing \cite{Beaudry_2008,Gittsovich_2014}, the honest measurements act on the single-photon subspace $\mathcal{H} = \mathrm{span}\{\ket{0}, \ket{1}, \ket{\varnothing} \}$ and are given by 
\begin{equation}
\begin{aligned} \label{eq:lossy_bb84_mmts}
    M_{Q_A}(0|0) &= \ketbra{0}, \quad& M_{Q_A}(1|0) &= \ketbra{1}, \quad& M_{Q_A}(\varnothing| 0) &= \ketbra{\varnothing}, \\
    M_{Q_A}(0|1) &= \ketbra{+}, \quad& M_{Q_A}(1 | 1) &= \ketbra{-}, \quad& M_{Q_A}(\varnothing| 1) &= \ketbra{\varnothing}, \\
\end{aligned}
\end{equation}
where $\ket{+} = \tfrac{1}{\sqrt{2}}(\ket{0} + \ket{1})$ and $\ket{-} = \tfrac{1}{\sqrt{2}}(\ket{0} - \ket{1})$ are diagonally and anti-diagonally polarized photons. Rounds where both Alice and Bob measure in the rectilinear basis are used to generate the raw key and rounds where both parties measure in the diagonal basis are used for testing.

Just like in the previous protocol, we can study the situation when either Alice or Bob's measurement device is trusted while the other device remains uncharacterized. To achieve a non-zero key rate in the presence of high losses, we would like to post-select on the rounds where a detection occurs, i.e., the rounds with outcome not equal to $\varnothing$. However, since one of the two measurement devices is untrusted, we cannot post-select on its measurement results without violating the non-signalling condition in Condition~\ref{cond:non_signalling}: publicly announcing the no-detection rounds would create a signalling channel from the untrusted device's internal memory to the public transcript. Hence, we only post-select on the trusted device's measurement results and assign non-detection outcomes of the untrusted device to a fixed outcome, say $0$. Since we cannot perform post-selection on the untrusted device's outcomes, we place the source close to the untrusted party to minimize losses.

Let us illustrate how to rigorously deal with this post-selection in our security proof. 
For ease of presentation, we consider the setting where Alice is trusted, i.e., she performs the measurements $\{M_{Q_A}(a|x)\}_{a \in \mathcal A,x \in \mathcal X}$ given in \eqref{eq:lossy_bb84_mmts} and Bob performs some uncharacterized measurements $\{N_{Q_B}(b|y)\}_{b\in \mathcal B,y\in \mathcal Y}$. 
Consider a key generation round, i.e., a round where both Alice and Bob measure in the rectilinear basis. Then, their post-measurement state is given by
\begin{equation}
    \rho_{AB Q_E} = \sum_{\substack{a \in \{0, 1, \varnothing\}, \\ b \in \{0, 1\}}} \ketbra{a, b}_{AB} \otimes \Tr_{Q_AQ_B}[M_{Q_A}(a|0) \otimes N_{Q_B}(b|0) \rho_{Q_AQ_BQ_E}].
\end{equation}
After performing her measurement, Alice announces whether or not she received the outcome $A = \varnothing$, i.e., she sends $I = 0$ if $A = \varnothing$ and $I = 1$ otherwise.\footnote{In the complete protocol, $I$ also includes the basis choices and whether or not the round is chosen as a test round. Since we focus on a key round here, these announcements are already fixed.}
Alice sets her raw key to $S = A$.
According to (\ref{eq:asymptotic_key_rate}), the asymptotic key rate is given by
\begin{equation}
    r_\infty = \inf_\rho ( H(S|IQ_E)_\rho - H(S|IB)_\rho ).
\end{equation}
Focusing on the first term (the second term only depends on the expected statistics and is therefore easy to compute), we have that
\begin{equation}
\begin{aligned}
    H(S|IQ_E)_\rho
    &= H(S|Q_E,I=0)_\rho \mathrm{Pr_\rho}[I = 0] + H(S|Q_E,I=1)_\rho \mathrm{Pr_\rho}[I = 1] \\
    &= H(S|Q_E,I=1)_\rho \mathrm{Pr_\rho}[I = 1],
\end{aligned}
\end{equation}
where $\Pr_\rho[\cdot]$ denotes the probability evaluated with respect to the state $\rho$, and where we used that if $I = 0$ then $S = \varnothing$ deterministically and hence $H(S|Q_E,I=0)_\rho = 0$. Note that in the expression above we have the post-selected entropy $H(S|Q_E,I=1)_\rho$ instead of simply $H(S|Q_E)_\rho$. That is, to correctly handle post-selection, it is in general not sufficient to compensate for losses by simply adding a prefactor of $\mathrm{Pr_\rho}[I=1]$ in front of the key-rate formula.
This is a similar issue to the one recently highlighted in \cite{Lobo2026} where a gap in the security proof of \cite{Branciard2012} was identified.
The expression above can be further simplified as\footnote{There is some ambiguity in how to define the entropy of sub-normalized states. Here, we use the definition $H(A|B)_\rho = -D(\rho_{AB}, \mathds{1}_{A} \otimes \rho_{B})$ with $D(\rho, \sigma) = \Tr[\rho(\log \rho - \log \sigma)]$.}
\begin{equation}
    H(S|Q_E,I=1)_\rho \mathrm{Pr_\rho}[I = 1] = H(S|Q_E)_{\rho_{\land I = 1}},
\end{equation}
where we introduced the sub-normalized state $\rho_{SBQ_E \land I = 1}$ given by
\begin{equation}
    \rho_{SB Q_E \land I = 1} = \sum_{s, b \in \{0, 1\}} \ketbra{s, b}_{SB} \otimes \Tr_{Q_AQ_B}[M_{Q_A}(s|0) \otimes N_{Q_B}(b|0) \rho_{Q_AQ_BQ_E}].
\end{equation}
Note that in the expression above, we do not include the outcome $S=\varnothing$ since we are post-selecting on $I=1$. 
The entropy $H(S|Q_E)_{\rho_{\land I = 1}}$ can then be evaluated using the numerical techniques presented in Section~\ref{sec:methods}.

\begin{figure}[t!]
    \centering
    \includegraphics[width=\linewidth]{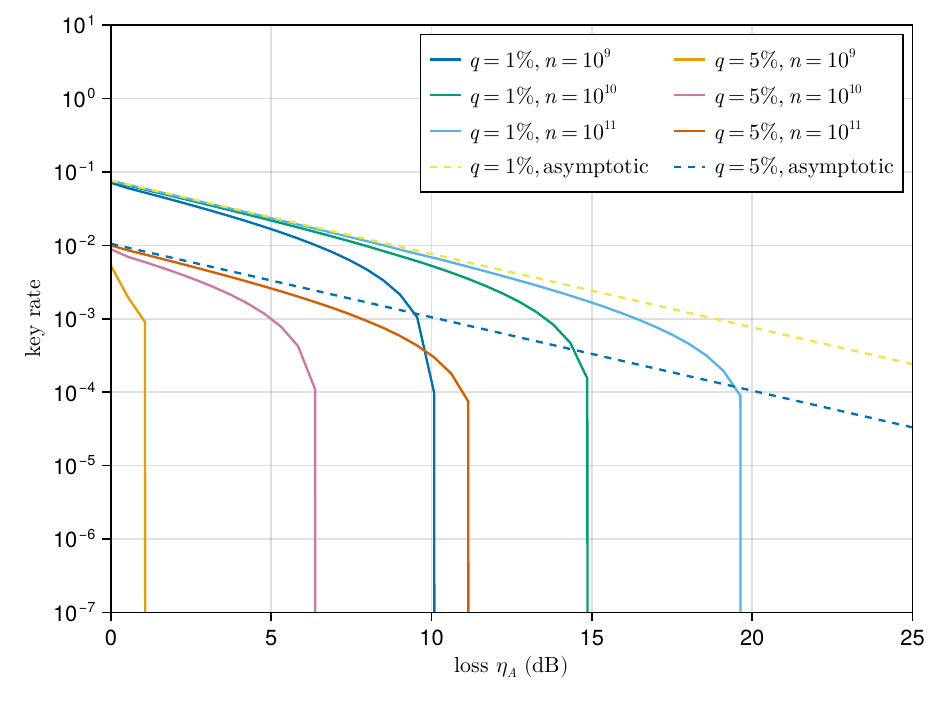}
    \caption{
        Key rates for the one-sided device-independent BB84 protocol with trusted Alice as a function of Alice's detection efficiency $\eta_A$ for different values of the depolarizing probability $q$ and the number of rounds $n$. Bob's detection efficiency is fixed at $\eta_B = 0.8$ and he assigns non-detection events to the outcome $b=0$. 
        The testing probability is fixed at $\gamma = 0.1$. The soundness parameter is set to $\varepsilon_{\mathrm{snd}} = 10^{-12}$ and the completeness parameter is set to $\varepsilon_{\mathrm{comp}}=0.01$.
    }
    \label{fig:lossy_bb84_finite_size}
\end{figure}

The behavior of the untrusted devices is restricted to producing the same statistics as the honest implementation. For this, we impose upper-bounds on the bit-error probability in the diagonal basis as well as an upper-bound on the detection probability in the diagonal basis. Formally, we impose 
\begin{align} 
    \begin{split}
    \Pr[\mathrm{err}|x=1] &= \Tr[\rho_{Q_AQ_B}M_{Q_A}(0|1) \otimes N_{Q_B}(1|1)]+ \Tr[\rho_{Q_AQ_B} M_{Q_A}(1|1) \otimes N_{Q_B}(0|1)] \\
    &\leq \mathrm{Pr}_{\mathrm{hon}}[\mathrm{err}|x=1], 
    \end{split}\\
    \begin{split}
    \Pr[\varnothing|x=1] &= \Tr[\rho_{Q_AQ_B} M_{Q_A}(\varnothing|1) \otimes \mathds{1}_{Q_B}]\\
    &\leq \mathrm{Pr}_\mathrm{hon}[\varnothing|x=1],
    \end{split}
\end{align}
where $\mathrm{Pr}_{\mathrm{hon}}[\mathrm{err}|x=1]$ and $\mathrm{Pr}_\mathrm{hon}[\varnothing|x=1]$ are the corresponding quantities when evaluated on the honest states and measurements. For evaluating the performance of the protocol, we consider the honest measurements from (\ref{eq:lossy_bb84_mmts}) and the honest state
\begin{equation}
    \rho_{Q_AQ_B}^\mathrm{hon} = \Big(\cE_{Q_A|Q_A'}^{(\eta_A)} \otimes \cE^{(\eta_B)}_{Q_B|Q_B'}\Big)\left[(1 - q) \ketbra{\psi}_{Q_A' Q_B'} + q \frac{\mathds{1}_{Q_A' Q_B'}}{4}\right],
\end{equation}
where $\ket{\psi} = (\ket{00} + \ket{11})/\sqrt{2}$, $q \in [0, 1]$ is the depolarizing probability, and the loss channels $\cE^{(\eta_A)}, \cE^{(\eta_B)}$ are given by
\begin{equation}
    \cE^{(\eta)} = \eta\,\mathrm{id} + (1 - \eta)\ketbra{\varnothing} \otimes \Tr,
\end{equation}
where $\mathrm{id}$ is the identity map.

The resulting key rates (for trusted Alice) are shown in Figure~\ref{fig:lossy_bb84_finite_size}. We see that, just like in the device-dependent setting, the asymptotic key rate scales as $O(\eta_A)$ with $\eta_A$ denoting Alice's detection probability. This behaviour was already observed previously in~\cite{Branciard2012,Masini_2024}. However, no explicit finite-size key rates against general attacks were computed in the prior works.

\section{Discussion and outlook}

In this paper, we have considered the problem of calculating key rates for one-sided DIQKD protocols. The main technical contributions that enabled us to solve this problem were two ways to incorporate finite dimensionality into the NPA hierarchy: first the NPA-AC hierarchy using a $C^*$-algebra approach and adding matrix algebra constraints, second the NPA-MP hierarchy generalizing the NPA hierarchy to allow for matrix-valued polynomials. While we only used these new SDP hierarchies in the context of one-sided DIQKD, they are applicable in other areas of quantum information theory as well, for example in quantum steering. 

The possibility of an extension of the NPA hierarchy to matrix-valued polynomials as in the NPA-MP hierarchy has been mentioned in a comment \cite{pironio2010convergent}, pointing to the fact that the Positivstellensatz in \cite{helton2004positivstellensatz} also holds for matrix-valued polynomials. Special cases of the NPA-MP hierarchy we give in the present paper have been proposed in \cite{johnston2016extended} and later in \cite{escola2025lossy}. Both works were motivated by applications to extended non-local games; \cite{escola2025lossy} extending the results in \cite{johnston2016extended} to lossy-and-constrained extended non-local games. Both works prove that their respective hierarchies converge. While the proofs do not extend straightforwardly to our setting, we were in particular inspired by the treatment in \cite{johnston2016extended}. More recently, the block matrix structure of the moment matrix was also used in \cite{dalessandro2026semidefinite}, but without proving convergence. The exact relation to, e.g., \cite{johnston2016extended}, is not made explicit in \cite{dalessandro2026semidefinite}. Other works on similar questions include \cite{navascues2014characterization}, which studies correlations for dimensionally-constrained systems, resulting in a provably converging modified version of the NPA hierarchy that, however, does not cover the problem we consider in the present paper. Moreover, \cite{navascues2015characterizing} included dimension constraints in the NPA hierarchy using matrix polynomial identities, but the results again do not cover the type of problem we are interested in here. Finally, works such as \cite{dalessandro2025semidefinite} have studied polynomial optimization methods for quantum steering scenarios, but without giving convergence guarantees \cite{kogias}.

Equipped with these new tools, we computed and compared key rates for different protocols for one-sided DIQKD and studied the behavior of these rates under different amounts of loss. Prior work essentially falls into two categories: On the one hand, many works were tailored towards specific protocols \cite{Branciard2012, Tomamichel_2013, Roy2026, Masini_2024}. On the other hand, \cite{Tan2021_key_rates} can handle general algebraic linear NPA-constraints and entropy optimization. Our method goes beyond that in that it can handle generic discrete-variable one-sided DIQKD protocols with matrix-valued NPA-constraints. Compared to~\cite{Mothsara_2026}, which is based on SDP relaxations of the min-entropy, our method directly computes the von Neumann entropy. In addition, to obtain an SDP, \cite{Mothsara_2026} relaxes some of the underlying tensor product structure inherent to the optimization problem, opting instead to only impose weaker non-signaling constraints. Hence, compared to \cite{Mothsara_2026}, our method obtains superior key rates and noise tolerance (see the MUB protocol in Appendix~\ref{sec:protocols}). 
Additionally, we provide a generic security proof against general adversaries (see Appendix~\ref{sec:finite-size-proof}). While some prior works also provided a security proof against coherent attacks, this was either restricted to specific protocols \cite{Tomamichel_2013} (using entropic uncertainty relations), or only provided asymptotic key rates \cite{Masini_2024} (using entropy accumulation).
We also note that some prior works studied one-sided continuous-variable protocols \cite{Walk_2016, Gehring_2015}, which we do not cover in this paper.

Using our new method to calculate key rates for different protocols, we observed a phenomenon which has already been observed in prior literature: Depending on whether the party that extracts the key is trusted or not, the key rate differs \cite{Walk_2016, Mothsara_2026}.
Intuitively, one would expect that the key rate is higher if the key generating party is trusted, which is consistent with the results in \cite{Walk_2016, Mothsara_2026}. However, while this turns out to be true in many of the examples we considered, in the one-sided CHSH protocol the behavior is reversed. To our knowledge, this is the first time that extracting the secret key from the untrusted side has been shown to improve performance. Understanding when and why this behavior arises merits future work.

Another open question is which numerical method is best suited for which type of problem: In our numerical calculations, we observed that the NPA-AC hierarchy is slightly faster than the NPA-MP hierarchy (as shown in Table~\ref{tab:runtime_comparison}), whereas the latter uses less memory. It would be interesting to do a more thorough comparison of the two methods.

Several conceptual questions for the operator-algebraic model underlying device-in\-de\-pen\-dent security remain. In fully device-independent scenarios, assuming a tensor-product decomposition between the parties is mathematically convenient, but does not describe the most general commuting-operator model. The distinction is nontrivial as highlighted by Tsirelson's problem \cite{Tsirelson1993,Kirchberg1993}. Commuting-operator models can in general admit correlations that cannot be reproduced within the usual tensor-product framework \cite{ji2021mip}. More broadly, phenomena associated with unbounded or infinite-dimensional entanglement, including embezzlement-type behaviour, show that finite-dimensional intuition can be misleading \cite{van2024embezzlement, van2024schmidt}. Whether such distinctions can actually be exploited by an adversary in QKD, and whether they can affect achievable key rates, remains largely open. Here, the one-sided setting has an interesting additional asymmetry. A characterized system described by a matrix algebra  induces a unique tensor product with the algebra of the uncharacterized device. Consequently, characterizing Alice or Bob is not merely a relabeling once the raw key and reconciliation direction are fixed. In a more general commuting-operator description, it determines whether the uncharacterized subsystems associated with key generation or with error correction is the one whose correlations with the adversary are least constrained. It would be interesting to determine whether this algebraic asymmetry can partly explain the differences between the two trust assignments observed in our numerical results. 

A closely related question addresses the assumptions required for finite-size security. Our present analysis uses the GEAT in a setting where the relevant systems admit an effective finite-dimensional description. It would be valuable to understand precisely what this assumption means algebraically and which parts of the proof survive when fewer restrictions are imposed on the correlation structure or on the devices' internal memories. Beyond these foundational questions, an experimental comparison of the different trust assignments would be particularly informative. Different assumptions remove different engineering requirements, and it is not clear whether the asymmetry found at the level of idealized key rates persists once losses, detector imperfections, calibration requirements, and implementation complexity are taken into account. Finally, one may consider protocols in which Alice and Bob themselves assign different trust models to the devices. This raises a rather operational question: under which conditions can the protocol produce a key that both parties regard as secure, and when can security be certified only from the viewpoint of one of them? Such asymmetric notions of security may provide a useful framework for understanding QKD networks in which the parties have genuinely different levels of control over, or confidence in, the underlying hardware. 

In summary, our work provides a general toolbox for calculating key rates for one-sided DIQKD protocols. This provides the starting point for investigating new protocols in one-sided DIQKD.

\section*{Code availability}
A stand-alone Julia implementation of the NPA-MP hierarchy (see Appendix~\ref{sec:matrix-NPA} for details) is available in \url{https://github.com/MartinSandfuchs/NPA-MP}.
Code to compute key-rates is available in \url{https://github.com/MartinSandfuchs/semi-DI}.
An alternative Python implementation, reproducing a subset of the CHSH and BB84 results in this work using the NPA-MP hierarchy and the BFF entropy bound, is available at \url{https://github.com/GiuseppeViola/npa-mp-bff-diqkd}.

\section*{Acknowledgements}
AB would like to thank Igor Klep for discussions regarding Positivstellensätze. AB was supported by the French National Research Agency in the framework of the “France 2030” program (ANR-11-LABX-0025-01) for the LabEx PERSYVAL and by the ANR project PraQPV, grant number ANR-24-CE47-3023. GK acknowledges support from the Excellence Cluster - Matter and Light for Quantum Computing (ML4Q-2) and by the European Research Council (ERC Grant Agreement No. 948139). MS acknowledges funding from the Swiss National Science Foundation via project No.~20CH21\_218782, the National Centre of Competence in Research \emph{SwissMAP}, and the ETH Zurich Quantum Center. RS is supported by the DFG under Germany’s Excellence Strategy
- EXC-2123 QuantumFrontiers-2 - 390837967 and SFB
1227 (DQ-mat), the Quantum Valley Lower Saxony, and
the German Federal Ministry of Research, Technology and Space (BMFTR) via the projects ATIQ, SEQUIN, Quanda, Quics and CBQD. GV acknowledges support from the Deutsche Forschungsgemeinschaft
(DFG, German Research Foundation, project number 563437167), the Sino-German Center for Research Promotion (Project M-0294), and the German Federal Ministry of Research, Technology and Space (Project QuKuK, Grant No. 16KIS1618K and Project BeRyQC, Grant No. 13N17292). RW acknowledges financial support from the Ministry of Culture and Science of North Rhine-Westphalia via the NRW-Rückkehrprogramm. This research was funded in part by the Austrian Science Fund (FWF) 10.55776/PIN1152626. AB and RW would like to thank the SwissMAP Research Station for its hospitality during their visit in the summer of 2023 as part of the Short Research Stays program, where this project was conceived. Moreover, AB and RW would like to thank the Banff International Research Station for its hospitality during the Research in Teams program (25rit933), where part of this work was carried out. 

\bibliographystyle{halpha}
\bibliography{lit}

\newpage
\appendix

\section{Notation}
Here, we will give an overview of the notation used in this article.

 \renewcommand{\arraystretch}{1.2}
 \begin{longtable}[H]{|c||p{12cm}|}
         \hline
         \textbf{Notation} & \textbf{Description} \\
         \hline
         \hline
         $[n]$ & The set $\{1,\dots,n\}$ \\
         \hline
         $\mathbb{P}_\cC$ & Set of probability distributions over the alphabet $\cC$ \\
         \hline
         $\mathcal H, \mathcal{H}_A, \mathcal{H}_B$, ... & Hilbert spaces (belonging to different systems, $A$, $B$, ...) \\
         \hline
         $|A|$ & Dimension of $\mathcal{H}_A$  \\
         \hline
          $\mathcal{B}(\mathcal{H}_A, \mathcal{H}_B)$ & Bounded linear operators  mapping from $\mathcal{H_A}$ to $\mathcal{H_B}$. If $\mathcal H_A = \mathcal H_B = \mathcal H$, we write $\mathcal B(\mathcal{H})$ \\
         \hline
         $\mathcal{B}(\mathcal{H})_\mathrm{sa}$ & Self-adjoint bounded linear operators on $\mathcal{H}$ \\
         \hline
         $S \geq T$ & $S - T \geq 0$, i.e., $S - T$ is positive semidefinite \\
         \hline
         $\mathds{1}_A$ & Identity operator on $\mathcal{H}_A$ \\
         \hline
         $\|S\|_p$ & Schatten $p$-norm of $S$, given by $\|S\|_p = \Tr[(\sqrt{S^* S})^p]^{1/p}$. $\|S\|_\infty$ is the operator norm of $S$ \\
         \hline
         $\mathcal{S}(\mathcal{H})$ & Set of normalised density operators (also referred to as quantum states of just states) on $\mathcal{H}$: $\mathcal{S}(\mathcal{H}) = \{{\rho\in \mathcal{B}(\mathcal{H})} : \Tr[\rho] = 1, \rho \geq 0\}$ \\
         \hline
         $\rho_{XA}$ & Classical-quantum state describing a random variable $X$ correlated with a quantum system $Q_A$: $\rho_{XA}=\sum_{x \in \mathcal{X}} p_X(x) \ketbra{x}_X \otimes \rho_{A|x}$, with $\rho_{A|x}$ the state of $A$ conditioned on $X=x$ \\
         \hline
         $\mathrm{Pr}[\Omega]$ & Probability of an event $\Omega\subseteq \mathcal{X}$: $\mathrm{Pr}[\Omega]=\sum_{x \in \Omega} p_X(x)$ \\
         \hline
         $\rho_{XA|\Omega}$ & State conditioned on $\Omega \subseteq \mathcal{X}$: $\rho_{XA|\Omega}=\frac{1}{\mathrm{Pr}[\Omega]} \sum_{x \in \Omega} p_X(x) \ketbra{x}_X \otimes \rho_{A|x}$ (Note that $\rho_{XA|\Omega}$ is only well-defined if $\mathrm{Pr}[\Omega] > 0$. This is unproblematic if $\rho_{XA|\Omega}$ is accompanied by a compensating factor of $\mathrm{Pr}[\Omega]$.) \\
         \hline
         $\rho_{XA \land \Omega}$ & Subnormalized conditional state given by $\rho_{XA \land \Omega} = \mathrm{Pr}[\Omega]\,\rho_{XA | \Omega}$ \\
         \hline
         $\cE_{B|A}$ & Completely positive and trace preserving (CPTP) map or channel $\mathcal{E}_{B|A}:\mathcal{B}(\mathcal{H}_A)\to\mathcal{B}(\mathcal{H}_B)$\\
         \hline
         $\mathbb C[x, x^*]$, $\mathbb C[x, x^*]_d$ & The algebra of complex-valued polynomials in non-commutative variables $x = (x_1, \ldots, x_g)$, $x^* = (x_1^*, \ldots, x_g^*)$ (of degree at most $d$)\\
         \hline
         \begin{tabular}{c} $\mathcal B(\mathcal H_A, \mathcal H_B)[x, x^*]$, \\ $\mathcal B(\mathcal H_A, \mathcal H_B)[x, x^*]_d$ \end{tabular} & The set of polynomials with coefficients in $\mathcal B(\mathcal H_A, \mathcal H_B)$ and in non-commutative variables $x$, $x^*$ (of degree at most $d$) \\
         \hline
         $\mathcal{W}_d$ & Set of words $w$ of length $|w|\le d$ \\
         \hline
         $\mathcal{W}_\infty$ & Set of all words (of any length)\\
         \hline
         $\mathfrak{A}\otimes_{\min}\mathfrak{B}$ & Minimal tensor product of two $C^*$-algebras $\mathfrak{A}$ and $\mathfrak{B}$\\
         \hline
         $\mathbb M_n(\mathfrak{A})$ & Algebra of $n\times n$ matrices with entries in the  $C^*$-algebra $\mathfrak{A}$ \\
         \hline
         \begin{tabular}{c} $H_{\min}(A|B)_\rho$, \\ $H_{\max}(A|B)_\rho$ \end{tabular} &  Conditional min/max-entropy \\ \hline
         \begin{tabular}{c} $H_{\min}^\varepsilon(A|B)_\rho$, \\ $H_{\max}^\varepsilon(A|B)_\rho$ \end{tabular} & The smooth conditional min/max-entropy in terms of purified distance \\ \hline
     \caption{
         \textbf{Summary of notation.}
     }
     \label{tab:notation}
 \end{longtable}

\section{Additional protocols}

\label{sec:protocols}
Here we present some additional protocols that, for readability, were not included in the main text. We will compare three different protocols: The lossless qubit BB84 protocol, the MUB protocol, and a protocol based on the $I_{3322}$ Bell-inequality. For each protocol, we will discuss the measurements performed by the honest parties. It is understood that for numerical evaluations we will only characterize one of the two parties while leaving the measurements of the other party unspecified. We defer plots of key rates until the end of this section (Figure~\ref{fig:plot_all}) where a comparison between all the protocols in this section (and the CHSH protocol) is shown.

\subsection{Lossless qubit BB84}
In the BB84 protocol, the honest parties randomly choose between measuring the $X$ and the $Z$ basis. This can be described by the POVM
\begin{equation}
\begin{aligned}
    M_{Q_A}(0|0) &= \ketbra{0}, & M_{Q_A}(1|0) &= \ketbra{1}, \\
    M_{Q_A}(0|1) &= \ketbra{+}, & M_{Q_A}(1|1) &= \ketbra{-},
\end{aligned}
\end{equation}
and similarly for Bob. During parameter estimation, the honest parties estimate the error rates in the $X$ and $Z$ bases, which are given by
\begin{equation}
\begin{aligned}
    \mathrm{err}_X &\coloneqq \Tr[(M_{Q_A}(0|0) \otimes N_{Q_B}(1|0) \rho_{Q_AQ_B}]+ \Tr[M_{Q_A}(1|0) \otimes N_{Q_B}(0|0) \rho_{Q_AQ_B}], \\
    \mathrm{err}_Z &\coloneqq \Tr[(M_{Q_A}(0|1) \otimes N_{Q_B}(1|1) \rho_{Q_AQ_B}]+ \Tr[M_{Q_A}(1|1) \otimes N_{Q_B}(0|1) \rho_{Q_AQ_B}].
\end{aligned}
\end{equation}

\subsection{MUB protocol}
A set of of orthonormal bases $\cB_1, \ldots \cB_k$ of $\mathbb{C}^d$ is called mutually unbiased if
\begin{equation}
    \abs{\langle u | v \rangle}^2 = \frac{1}{d} \quad \forall u \in \cB_i, v \in \cB_j, i \neq j.
\end{equation}
The most well-known example of MUBs are the eigenbases of the Pauli operators $\sigma_x, \sigma_y, \sigma_z$ in dimension $d=2$. More generally, it is known that $d + 1$ MUBs exist whenever $d$ is a prime power \cite{Bengtsson_2007}, while the case of general $d$ remains an open problem.

Given a set of MUBs, a QKD protocol can be built by Alice and Bob randomly choosing one of the bases for their measurements. For $d=2$, this leads to the well-known six-state protocol \cite{Bruss1998,BPG1999}. Here, we consider the case of $d=3$ with the MUBs
\begin{equation}
\begin{aligned}
    \cB_0 =& \left\{ \ket{0}, \ket{1}, \ket{2} \right\}, \\
    \cB_1 =& \left\{ \tfrac{1}{\sqrt{3}}(\ket{0} + \ket{1} + \ket{2}), \tfrac{1}{\sqrt{3}}(\ket{0} + \xi^2\ket{1} + \xi\ket{2}), \tfrac{1}{\sqrt{3}}(\ket{0} + \xi \ket{1} + \xi^2 \ket{2}) \right\}, \\
    \cB_2 =& \left\{ \tfrac{1}{\sqrt{3}}(\ket{0} + \ket{1} + \xi \ket{2}), \tfrac{1}{\sqrt{3}}(\ket{0} + \xi^2\ket{1} + \xi^2\ket{2}), \tfrac{1}{\sqrt{3}}(\ket{0} + \xi \ket{1} + \ket{2}) \right\}, \\
    \cB_3 =& \left\{ \tfrac{1}{\sqrt{3}}(\ket{0} + \ket{1} + \xi^2 \ket{2}), \tfrac{1}{\sqrt{3}}(\ket{0} + \xi^2\ket{1} + \ket{2}), \tfrac{1}{\sqrt{3}}(\ket{0} + \xi \ket{1} + \xi \ket{2}) \right\}, \\
\end{aligned}
\end{equation}
where $\xi = e^{2\pi i / 3}$. This means that we get a QKD protocol with four measurement settings $x, y \in \{0, 1, 2, 3\}$ and three measurement outcomes $a, b \in \{0, 1, 2\}$.

For the honest implementation, we assume that Alice and Bob share a bipartite state
\begin{equation}
    \rho_{Q_AQ_B} = (1 - q)\ketbra{\psi}_{Q_AQ_B} + q \frac{1}{d^2} \mathds{1}_{Q_AQ_B},
\end{equation}
where $\ket{\psi}_{Q_A Q_B} \coloneqq \sum_{i=1}^{d} \tfrac{1}{\sqrt{d}} \ket{i i}_{Q_A Q_B}$ is the maximally entangled state in $d$ dimensions and $q \in [0, 1]$ is the depolarizing probability. For parameter estimation, we characterize the full distribution $p(ab|xy)$. We also note that the MUB protocol has been previously studied in~\cite{Mothsara_2026}. However, in~\cite{Mothsara_2026} the authors lower-bound the von Neumann entropy $H(A|IQ_E)$ in \eqref{eq:asymptotic_key_rate} by the min-entropy $H_{\min}(A|IQ_E)$ (which can be expressed as an SDP), leading to overly pessimistic key rates as illustrated in Figure~\ref{fig:plot_MUB}.

\begin{figure}[t!]
    \centering
    \includegraphics[width=\linewidth]{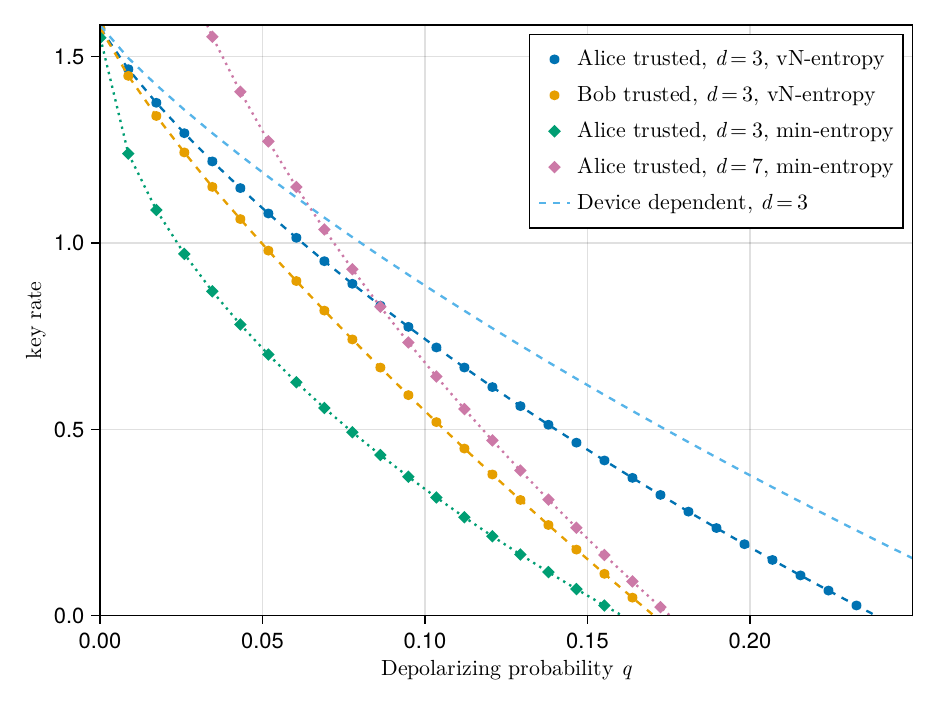}
    \caption{
        Key rates for the one-sided DIQKD protocol based on $d + 1$ mutually unbiased bases in $d$-dimensional Hilbert spaces. We compare bounds derived via the von Neumann entropy with bounds based on the min-entropy. The former are calculated using our method based on the KS objective function and the NPA-MP hierarchy, whereas for the latter we used the SDP relaxation of the min-entropy from \cite{Mothsara_2026}. We observe that even in dimension $d=3$, the von Neumann entropy provides better noise tolerance than the min-entropy at $d=7$.
    }
    \label{fig:plot_MUB}
\end{figure}

\subsection{$I_{3322}$ protocol}

The $I_{3322}$ protocol \cite{froissart1981constructive, sliwa2003symmetries, collins2004relevant, pal2010maximal} is based on three binary-outcome measurements per party. Alice randomly chooses between measuring the observables
\begin{equation}
    A_i = \cos(\alpha_i) \sigma_z + \sin(\alpha_i) \sigma_x,
\end{equation}
with the angles $\alpha_0 = 0$, $\alpha_1 = \pi/3$, and $\alpha_2 = 2\pi/3$. This can be described by the POVM
\begin{equation}
\begin{aligned}
    M_{Q_A}(0|i) = \frac{1}{2}(\mathds{1}_{Q_A} + A_i), \quad M_{Q_A}(1|i) = \frac{1}{2}(\mathds{1}_{Q_A} - A_i).
\end{aligned}
\end{equation} 
Similarly, a characterized Bob measures the observables
\begin{equation}
    B_i = \cos(\beta_i) \sigma_z + \sin(\beta_i) \sigma_x,
\end{equation}
with the measurement angles $\beta_0 = 4\pi/3$, $\beta_1 = \pi$, and $\beta_2 = 2\pi/3$.
 This can be described by the POVM
\begin{equation}
\begin{aligned}
    N_{Q_B}(0|i) = \frac{1}{2}(\mathds{1}_{Q_B} + B_i), \quad N_{Q_B}(1|i) = \frac{1}{2}(\mathds{1}_{Q_B} - B_i).
\end{aligned}
\end{equation}

Again, the measurement devices of Alice and Bob are characterized by the conditional joint probability distribution $p(ab|xy)$, where $x,y\in\{0,1,2\}$ denote the measurement settings chosen by Alice and Bob, respectively, and $a,b\in\{0,1\}$ are the corresponding outcomes. Experimentally, this distribution is estimated from the observed frequencies of outcomes for each pair of inputs.

The security analysis of the protocol is based on the violation of the
$I_{3322}$ Bell inequality. In terms of the expectation values of the
measurement observables, the corresponding Bell expression is

\begin{equation}
\begin{aligned}
    I_{3322} &\coloneqq
    -\langle A_1 \rangle - \langle B_0 \rangle - 2\langle B_1 \rangle + \langle A_0 B_0 \rangle + \langle A_0 B_1 \rangle + \langle A_1 B_0 \rangle \\
    &\hspace{10pt}+ \langle A_1 B_1 \rangle - \langle A_0 B_2 \rangle  + \langle A_1 B_2 \rangle - \langle A_2 B_0 \rangle + \langle A_2 B_1 \rangle \\
    &\leq \, 0,
\end{aligned}
\end{equation}
where the inequality holds for any local hidden-variable model. Quantum theory
allows for a violation of this bound; the maximum value achievable with
a pair of qubits is $0.25$, while the true quantum maximum, attained
only in the limit of infinite-dimensional Hilbert spaces, is
$\approx 0.250\,875\,38$ \cite{pal2010maximal}.

\begin{figure}[ht!]
    \centering
    \includegraphics[width=\linewidth]{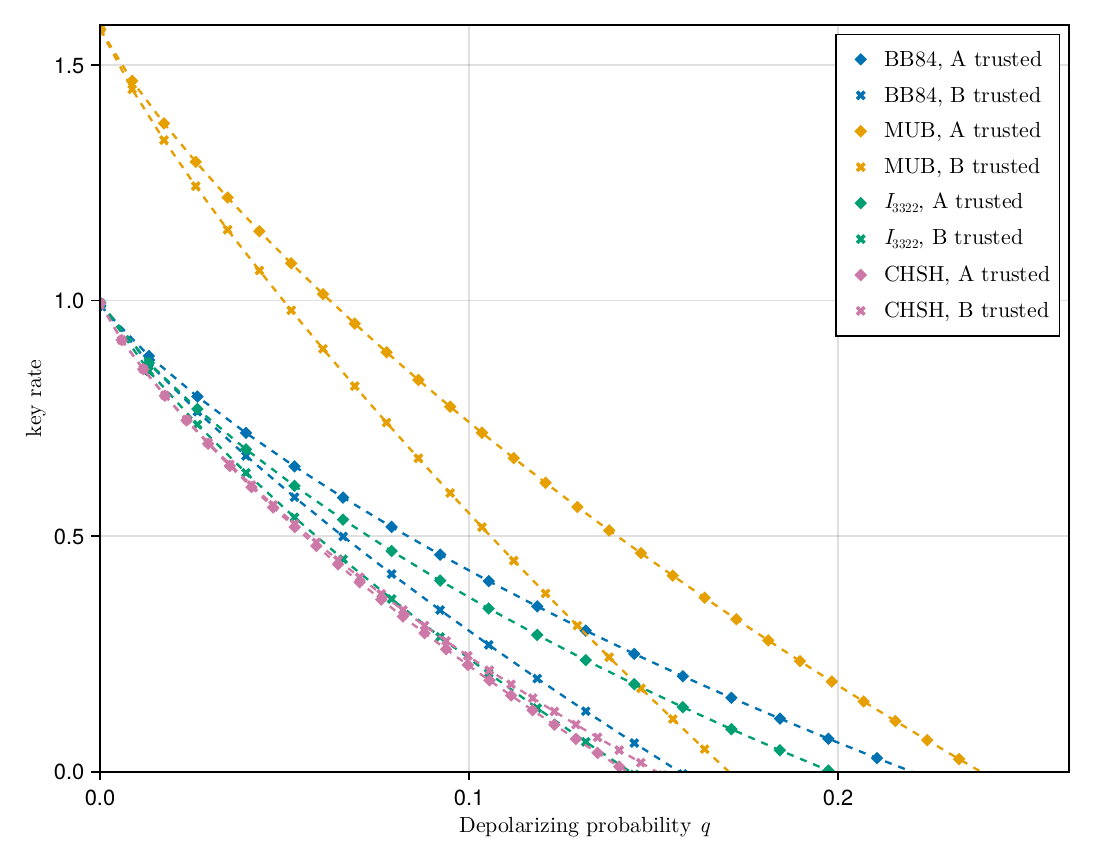}
    \caption{
        The asymptotic key rate of the various one-sided DIQKD protocols as a function of the depolarizing probability $q$. We remark that for BB84 with Alice trusted we obtain the same key rate as the device-independent case.
    }
    \label{fig:plot_all}
\end{figure}

\section{Technical background for finite-size security proofs}
\label{sec:technical_background}

In this appendix, we collect some technical preliminaries for the finite-size security proof in Appendix \ref{sec:finite-size-proof}.

\subsection{Security of QKD}

\begin{defi}[Completeness]
\label{def:completeness}
    The protocol is called $\varepsilon_\mathrm{comp}$-complete if there exists an honest implementation of the protocol such that
        \begin{equation}
            \mathrm{Pr}[\mathrm{abort}]\le\varepsilon_\mathrm{comp}.
        \end{equation}
\end{defi}

\begin{defi}[Soundness] \label{def:soundness}
    The protocol is called $\varepsilon_\mathrm{snd}$-sound if
        \begin{equation}
            \label{eq:soundness}
            \frac{1}{2} \| \rho_{K_A^l K_B^l \Efin \land \Omega} - \tau_{K_A^l K_B^l}\otimes \rho_{\Efin \land \Omega}\|_1 \leq \varepsilon_\mathrm{snd},
        \end{equation}
    where $\tau_{K_A^l K_B^l}=\frac{1}{2^l}\sum_{k\in\{0,1\}^l}\ketbra{kk}_{K_A^l K_B^l}$.
\end{defi}

\subsection{Entropic quantities}
Let $\rho_A$ be a density operator acting on the finite-dimensional Hilbert space
$\mathcal{H}_A$. The \emph{von Neumann entropy} \cite{nielsen2010quantum} of $\rho_A$ is defined as
\begin{equation}
H(A)_\rho
\coloneqq
-\Tr\!\left[\rho_{A} \log_2 \rho_{A}\right].
\end{equation}
Here the logarithm is taken to base $2$, so entropy is measured in bits.

Let $\rho_{AB}$ be a bipartite density operator on
$\mathcal{H}_A\otimes\mathcal{H}_B$, and let $\rho_{B}=\Tr_{A}[\rho_{AB}]$
denote its reduced density operator on subsystem $\mathcal{H}_B$. The
\emph{conditional von Neumann entropy} \cite{nielsen2010quantum} of $A$ conditioned on $B$ is
\begin{equation}
H(A|B)_\rho
\coloneqq
H(AB)_\rho-H(B)_\rho .
\end{equation}

For the finite-size security proof, we will also require the smooth min- and max-entropy~\cite{Renner05,Tomamichel2016}, which are defined as follows:
\begin{equation}
\begin{aligned}
    H_{\min}(A|B)_{\rho} \coloneqq& \max_{\lambda, \sigma} \{ \lambda : \rho_{AB} \leq 2^{-\lambda} \mathds{1}_{A} \otimes \sigma_{B} \}, \\
    H_{\max}(A|B)_{\rho} \coloneqq& \max_{\sigma} \log F(\rho_{AB}, \mathds{1}_{A} \otimes \sigma_{B}), \\
\end{aligned}
\end{equation}
where the maximization is over sub-normalized states $\sigma_B$ and $F(\rho, \sigma) = \left(\Tr\left[\abs{\sqrt{\rho} \sqrt{\sigma}}\right]\right)^2$ denotes the fidelity. Let $\varepsilon \geq 0$, we define the smooth min- and max-entropy by
\begin{equation}
\begin{aligned}
    H_{\min}^\varepsilon(A|B)_{\rho} &\coloneqq \max_{\tilde{\rho} \in \mathcal{B}_{\rho}^\varepsilon} H_{\min}(A|B)_{\tilde{\rho}}, \\
    H_{\max}^\varepsilon(A|B)_{\rho} &\coloneqq \min_{\tilde{\rho} \in \mathcal{B}_{\rho}^\varepsilon} H_{\max}(A|B)_{\tilde{\rho}},
\end{aligned}
\end{equation}
where $\mathcal{B}_{\rho}^{\varepsilon} \coloneqq \{ \tilde{\rho} : P(\rho, \tilde{\rho}) \leq \varepsilon,\, \tilde{\rho} \geq 0, \,\Tr(\tilde{\rho})\leq 1 \}$ denotes the $\varepsilon$-ball with respect to the purified distance $P(\rho, \tilde{\rho})$ (for a definition of the purified distance, see, for instance, \cite{Tomamichel2016}).

\subsection{Leftover Hashing}
Proving soundness (Definition~\ref{def:soundness}) requires showing an upper-bound on the trace-distance between the final state of the QKD protocol and the state describing the ideal key. Since during the execution of the QKD protocol the adversary may have obtained some side-information about the raw key, one needs to perform some additional classical post-processing to obtain a private key. A common approach to achieve this, called privacy amplification~\cite{Renner05}, is to apply a random hash-function from a two-universal family of hash functions (see Definition~\ref{def:two_universal} below). The Leftover Hashing Lemma (Lemma~\ref{lem:leftover_hashing}) then shows that this produces uniform and independent randomness when the raw key was sufficiently random (quantified by the smooth min-entropy).

\begin{defi}[Two-universal hash functions] \label{def:two_universal}
    We call a family of functions $\mathcal{F}$ from $\mathcal{X}$ to $\mathcal{Z}$ \emph{two-universal} if, for every $x \neq x'$ it holds that
    \begin{equation}
        \Pr_{f \in \mathcal{F}}[f(x) = f(x')] \leq \frac{1}{\abs{\mathcal{Z}}},
    \end{equation}
    where $f$ is chosen uniformly at random from $\mathcal{F}$.
\end{defi}

\begin{lem}[{Leftover Hashing Lemma \cite[Prop.~9]{Tomamichel_2017}}] \label{lem:leftover_hashing}
    Let $\sigma_{XE}$ be a sub-normalized state which is classical on $X$ and let $\varepsilon \in [0, \sqrt{\Tr(\sigma)}]$. Let $\cF$ be a two-universal family of hash functions from $\cX = \{0, 1\}^n$ to $\cK = \{0, 1\}^l$. Then
    \begin{equation}
        \frac{1}{2} \norm{\rho_{KFE} - \tau_{K} \otimes \rho_{FE}}_1 \leq \frac{1}{2} \sqrt{2^{l - H_{\min}^\varepsilon(X|E)_{\sigma}}} + 2\varepsilon,
    \end{equation}
    where $\tau_{K} = 2^{-l} \mathds{1}_{K}$ is the maximally mixed state and $\rho_{KFE} = \frac{1}{\abs{\cF}} \sum_{f \in \cF} f_{K|X}[\sigma_{XE}] \otimes \ketbra{f}_F$.
\end{lem}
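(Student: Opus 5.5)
The statement is the Leftover Hashing Lemma for sub-normalized classical-quantum states, cited from \cite[Prop.~9]{Tomamichel_2017}. The plan is the standard two-step route: first prove a non-smooth bound in terms of $H_{\min}(X|E)_{\tilde\sigma}$ for an arbitrary sub-normalized operator $\tilde\sigma_{XE}$, then lift it to the smooth min-entropy by choosing $\tilde\sigma$ optimally in the $\varepsilon$-ball. Throughout, I would fix $\tilde\sigma \in \mathcal{B}^\varepsilon_\sigma$ attaining $H_{\min}^\varepsilon(X|E)_\sigma$. We may take $\tilde\sigma$ classical on $X$, since pinching $X$ in its computational basis can only decrease the purified distance and can only increase $H_{\min}$.

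\textbf{Step 1 (non-smooth bound).} For $\tilde\sigma_{XE}$ classical on $X$, the goal is
\[
\tfrac12\|\tilde\rho_{KFE}-\tau_K\otimes\tilde\rho_{FE}\|_1\le \tfrac12\sqrt{2^{l-H_{\min}(X|E)_{\tilde\sigma}}}.
\]
Take any $\omega_E$ with $\tilde\sigma_{XE}\le 2^{-\lambda}\mathds{1}_X\otimes\omega_E$. Using a weighted Cauchy--Schwarz (Hölder) inequality, $\|A\|_1\le \sqrt{\Tr[\mathds{1}_K\otimes\omega_E]}\,\|(\mathds{1}\otimes\omega_E)^{-1/4}A(\mathds{1}\otimes\omega_E)^{-1/4}\|_2$, applied for each fixed $f$ and then averaged over $f$. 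The average squared weighted $2$-norm expands as $\Tr[(\omega^{-1/4}\tilde\sigma_{f(X)E}\,\omega^{-1/4})^2]$ minus the uniform term. Two-universality bounds the collision terms with $x\neq x'$ by $2^{-l}$ times the corresponding full sum, and these cancel against the subtracted $\tau_K$ contribution. What remains is at most the diagonal collision term $\sum_x\Tr[(\omega^{-1/4}\tilde\sigma_{E|x}\,\omega^{-1/4})^2]\le 2^{-\lambda}\Tr[\tilde\sigma]$. The prefactor $\Tr[\mathds{1}_K\otimes\omega_E]\le 2^l$ then gives the claimed $\tfrac12\sqrt{2^{l-\lambda}}$, after optimizing over $\lambda$ and $\omega$ (sub-normalization only helps).

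\textbf{Step 2 (smoothing).} Post-processing by the hashing channel $f_{K|X}$ together with recording $F$ is linear and trace non-increasing. Hence
\[
\|\rho_{KFE}-\tilde\rho_{KFE}\|_1\le\|\sigma-\tilde\sigma\|_1,
\qquad
\|\tau_K\otimes\rho_{FE}-\tau_K\otimes\tilde\rho_{FE}\|_1\le\|\sigma-\tilde\sigma\|_1 .
\]
Since the generalized trace distance is at most the purified distance, each of these is at most $2\varepsilon$. The triangle inequality then adds $2\cdot\tfrac12\cdot 2\varepsilon=2\varepsilon$ to the Step 1 bound, which gives exactly the stated inequality. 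The condition $\varepsilon\le\sqrt{\Tr\sigma}$ only ensures that the ball is nontrivial and that the optimizer exists.

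\textbf{Main obstacle.} The delicate step is the sub-normalized bookkeeping in Step 2. The trace-distance versus purified-distance comparison for sub-normalized states must use the generalized trace distance, including the $|\Tr\sigma-\Tr\tilde\sigma|$ term, so that the constant comes out as $2\varepsilon$ and not larger. It also has to be checked that the optimizer $\tilde\sigma$ can be chosen classical on $X$ without loss. If the constant does not come out cleanly, my fallback is to cite \cite{Tomamichel_2017} directly for that step.
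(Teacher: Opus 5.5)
Your proof is correct. The paper gives no proof of this lemma: it imports the statement verbatim from \cite[Prop.~9]{Tomamichel_2017}. Your two-step argument is the standard proof behind that citation, so the only difference is that you make it self-contained. The non-smooth part uses the weighted Cauchy--Schwarz inequality with $\mathds{1}_K\otimes\omega_E$ together with two-universality. The smoothing part pinches the optimizer to make it classical on $X$, uses that the CPTP maps $\sigma_{XE}\mapsto\rho_{KFE}$ and $\sigma_{XE}\mapsto\tau_K\otimes\rho_{FE}$ contract the trace norm, and closes with the triangle inequality.

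Two small remarks. First, the step you flag as the main obstacle is immediate. The quantity $\tfrac12\|\sigma-\tilde\sigma\|_1$ is bounded by the generalized trace distance, since the extra $|\Tr\sigma-\Tr\tilde\sigma|$ term only enlarges it, and that distance is at most $P(\sigma,\tilde\sigma)\le\varepsilon$. So the constant comes out as $2\varepsilon$ and no fallback is needed. Second, in Step~1 you should state explicitly that the cross terms $\Tr[\omega_E^{-1/2}\tilde\sigma_{E|x}\,\omega_E^{-1/2}\tilde\sigma_{E|x'}]$ are nonnegative. This is what justifies applying the two-universality upper bound $\Pr_f[f(x)=f(x')]\le 2^{-l}$ term by term. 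It also justifies enlarging the $x\neq x'$ sum to the full sum that cancels the $\tau_K$ contribution.
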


\subsection{Entropy accumulation}

In this section, we collect all definitions and theorems required for the GEAT \cite{Metger_2022}. The setup is depicted in Figure~\ref{fig:GEAT_setup}.

\begin{figure}[t!]
    \centering
    \begin{tikzpicture}[
        >=stealth,
    ]
        \node[blue_box, minimum height=1.8cm] (M1) at (0, 0) {$\cM_1$};
        \node[blue_box, minimum height=1.8cm] (M2) at (2.3, 0) {$\cM_2$};
        \node[blue_box, minimum height=1.8cm] (Mn) at (7, 0) {$\cM_n$};
    
        \draw[thick,->] ([xshift=-1.0cm,yshift=+0.4cm] M1.west) -- node[above]{$R_0$} ([yshift=+0.4cm] M1.west);
        \draw[thick,->] ([xshift=-1.0cm,yshift=-0.4cm] M1.west) -- node[above]{$E_0$} ([yshift=-0.4cm] M1.west);
        \draw[very thick,decorate,decoration={calligraphic brace,raise=0.1cm,amplitude=0.15cm}] ([xshift=-1.0cm,yshift=-0.5cm] M1.west) -- node[left=0.25cm]{$\rho^\mathrm{in}_{R_0 E_0}$} ([xshift=-1.0cm,yshift=+0.5cm] M1.west);
    
        \draw[thick,->] ([xshift=-0.2cm] M1.south) -- ([xshift=-0.5cm,yshift=-0.8cm] M1.south) node[below]{$A_1$};
        \draw[thick,->] ([xshift=+0.2cm] M1.south) -- ([xshift=+0.5cm,yshift=-0.8cm] M1.south) node[below]{$C_1$};
    
        \draw[thick,->] ([yshift=+0.4cm] M1.east) -- node[above]{$R_1$} ([yshift=+0.4cm] M2.west);
        \draw[thick,->] ([yshift=-0.4cm] M1.east) -- node[above]{$E_1$} ([yshift=-0.4cm] M2.west);
    
        \draw[thick,->] ([yshift=+0.4cm] M2.east) -- node[above]{$R_2$} ([xshift=+1.0cm, yshift=+0.4cm] M2.east);
        \draw[thick,->] ([yshift=-0.4cm] M2.east) -- node[above]{$E_2$} ([xshift=+1.0cm, yshift=-0.4cm] M2.east);
    
        \draw[thick,->] ([xshift=-0.2cm] M2.south) -- ([xshift=-0.5cm,yshift=-0.8cm] M2.south) node[below]{$A_2$};
        \draw[thick,->] ([xshift=+0.2cm] M2.south) -- ([xshift=+0.5cm,yshift=-0.8cm] M2.south) node[below]{$C_2$};
    
        \node at (4.65, +0.4) {$\cdots$};
        \node at (4.65, -0.4) {$\cdots$};
    
        \draw[thick,->] ([xshift=-1.0cm,yshift=+0.4cm] Mn.west) -- node[above]{$R_{n-1}$} ([yshift=+0.4cm] Mn.west);
        \draw[thick,->] ([xshift=-1.0cm,yshift=-0.4cm] Mn.west) -- node[above]{$E_{n-1}$} ([yshift=-0.4cm] Mn.west);
    
        \draw[thick,->] ([yshift=+0.4cm] Mn.east) -- node[above]{$R_n$} ([xshift=+1.0cm, yshift=+0.4cm] Mn.east);
        \draw[thick,->] ([yshift=-0.4cm] Mn.east) -- node[above]{$E_n$} ([xshift=+1.0cm, yshift=-0.4cm] Mn.east);
    
        \draw[thick,->] ([xshift=-0.2cm] Mn.south) -- ([xshift=-0.5cm,yshift=-0.8cm] Mn.south) node[below]{$A_n$};
        \draw[thick,->] ([xshift=+0.2cm] Mn.south) -- ([xshift=+0.5cm,yshift=-0.8cm] Mn.south) node[below]{$C_n$};

        \draw[very thick,decorate,decoration={calligraphic brace,raise=0.1cm,amplitude=0.15cm}] ([xshift=+1.0cm,yshift=+0.5cm] Mn.east) -- node[right=0.25cm]{$\rho^\mathrm{out}_{A^n C^n R_n E_n}$} ([xshift=+1.0cm,yshift=-0.5cm] Mn.east);
    \end{tikzpicture}
    \caption{Setup of the GEAT.}
    \label{fig:GEAT_setup}
\end{figure}

\begin{defi}[Non-signalling channel]
    Let $\cN : A B \rightarrow C D$ be a channel. We say that \emph{$\cN$ does not signal from $A$ to $D$} if there is a channel $\cR: B \rightarrow D$ such that
    \begin{equation}
        \Tr_{C} \circ\,\mathcal{N} = \cR \circ \Tr_{A}.
    \end{equation}
\end{defi}

\begin{defi}[GEAT channel]
    We call a sequence of channels $\{\cM_i\}_i$ with $\cM_i: E_{i-1}R_{i-1} \rightarrow A_{i} C_{i} E_{i} R_{i}$ a sequence of \emph{GEAT channels} if $\cM_i$ does not signal from $R_{i-1}$ to $E_{i}$ and there exist channels $\cT_i : A_i E_i \rightarrow A_i C_i E_i$ such that $\cM_i = \cT_i \circ \Tr_{C_i} \circ \cM_i$ and $\cT_i$ has the form
    \begin{equation} \label{eq:geat_projective_stats}
        \cT_{i}(\omega_{A_{i} E_{i}}) = \sum_{y,z} (\Pi_{A_i}^{(y)} \otimes \Pi_{E_i}^{(z)}) \omega_{A_{i}E_{i}} (\Pi_{A_i}^{(y)} \otimes \Pi_{E_i}^{(z)}) \otimes \ketbra{r(y,z)}_{C_i},
    \end{equation}
    where $\{\Pi_{A_i}^{(y)}\}_y$ and $\{\Pi_{E_i}^{(z)}\}_z$ are projective measurements, and $r:\cY\times\mathcal{Z}\to\cC$ is a deterministic function.
\end{defi}

\begin{defi}[Min-tradeoff function]
    Let $\{\cM_i\}_i$ be a sequence of GEAT channels. We call $f: \mathbb{P}_{\cC} \rightarrow \mathbb{R}$ a \emph{min-tradeoff function} if
    \begin{equation}
        \inf_{\omega \in \Sigma_i(p)} H(A_i|E_i\tilde{E}_{i-1})_{\cM_i[\omega]} \geq f(p) \quad \forall i, p \in \mathbb{P}_{\cC},
    \end{equation}
    with
    \begin{equation}
        \Sigma_i(p) = \{ \omega_{R_{i-1}E_{i-1}\tilde{E}_{i-1}} : \mel{c}{\Tr_{A_i E_i R_i \tilde{E}_{i-1}} \circ \cM_{i}[\omega]}{c}_{C_i} = p(c) \quad \forall c \in \cC \}.
    \end{equation}
\end{defi}

\begin{defi}[Frequency distribution]
    Let $\cC$ be a finite set and $c^n \in \cC^n$. Define the distribution $\mathrm{freq}_{c^n} \in \mathbb{P}_\cC$ by
    \begin{equation}
        \mathrm{freq}_{c^n}(c) = \frac{\abs{\{i : c_i = c\}}}{n}.
    \end{equation}
\end{defi}

\begin{defi}[Min, Max, Var]
    Let $f$ be a min-tradeoff function for $\{\cM_i\}_i$. Define
    \begin{equation}
    \begin{aligned}
        \mathrm{Max}(f) =& \max_{p \in \mathbb{P}_{\cC}} f(p), \\
        \mathrm{Min}_{\Sigma}(f) =& \min_{p : \Sigma(p) \neq \emptyset} f(p), \\
        \mathrm{Var}_{\Sigma}(f) =& \max_{p : \Sigma(p) \neq \emptyset} \sum_{c \in \cC} p(c) f(\delta_c)^2 - \left( \sum_{c} p(c) f(\delta_c) \right)^2,
    \end{aligned}
    \end{equation}
    where $\Sigma(p) = \cup_i \Sigma_i(p)$ and $\delta_c$ is the distribution with all the weight on element $c$: $\delta_c(c') = \delta_{c,c'}$. 
\end{defi}

\begin{thm}[Generalised Entropy Accumulation \cite{Metger_2022}] \label{thm:GEAT}
Let $\{\cM_i\}_i$ be a sequence of GEAT channels and $f$ be an affine min-tradeoff function. Then, for any $\varepsilon \in (0, 1)$, $\alpha \in (1, \tfrac{3}{2})$, initial state $\rho^{\mathrm{in}}_{R_0 E_0}$ and any convex set $\Omega \subseteq \cC^n$, it holds that
\begin{equation}
\begin{aligned}
    H_{\min}^\varepsilon (A^n|E_n)_{\rho^{\mathrm{out}}_{|\Omega}} &\geq \, nh - n \frac{\alpha - 1}{2 - \alpha}V^2 - \frac{g(\varepsilon) + \alpha \log(1/\Pr[\Omega])}{\alpha - 1} - n\left(\frac{\alpha - 1}{2 - \alpha}\right)^2 K(\alpha) \\
    &\eqqcolon \, n h - \kappa^{\textsc{geat}}(n, \alpha, \varepsilon, \Pr[\Omega], f),
\end{aligned}
\end{equation}
where $\rho^{\mathrm{out}}_{A^nC^nR_nE_n} = \cM_n \circ \ldots \circ \cM_1[\rho^{\mathrm{in}}]$, and
\begin{equation}
\begin{aligned}
    h &= \min_{c^n \in \Omega} f(\mathrm{freq}_{c^n}) \\
    g(\varepsilon) &= \log \frac{1}{1 - \sqrt{1 - \varepsilon^2}}, \\
    V &= \log (2 \abs{A}^2 + 1) + \sqrt{2 + \mathrm{Var}(f)}, \\
    K(\alpha) &= \frac{(2 - \alpha)^3}{6(3 - 2\alpha)^3 \ln 2} 2^{\frac{\alpha-1}{2-\alpha}(2 \log \abs{A} + \mathrm{Max}(f) - \mathrm{Min}_{\Sigma}(f))} \cdot \ln^3\left(2^{2 \log \abs{A} + \mathrm{Max}(f) - \mathrm{Min}_{\Sigma}(f)} + e^2\right).
\end{aligned}
\end{equation}
\end{thm}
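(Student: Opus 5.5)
Since Theorem~\ref{thm:GEAT} is quoted from \cite{Metger_2022}, we would not re-derive it from scratch. Instead we would reconstruct the argument of \cite{Metger_2022} in three stages: a Rényi chain rule, a conversion of the min-tradeoff function into a Rényi lower bound, and the passage from Rényi to smooth min-entropy conditioned on $\Omega$.

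\textbf{Step 1 (Rényi chain rule).} First we would work with the sandwiched Rényi conditional entropy $H_\alpha$ for $\alpha\in(1,3/2)$. The key tool is the chain rule of \cite{Metger_2022}. For a channel $\cM_i: E_{i-1}R_{i-1}\to A_iC_iE_iR_i$ that does not signal from $R_{i-1}$ to $E_i$, it gives
\[
H_\alpha(A^{i}|E_i)_{\cM_i[\rho]} \geq H_\alpha(A^{i-1}|E_{i-1})_{\rho} + \inf_{\omega} H_{\frac{1}{2-\alpha}}(A_i|E_i\tilde E_{i-1})_{\cM_i[\omega]} .
\]
Its proof uses the non-signalling condition to write $\Tr_{A_iC_iR_i}\circ\cM_i=\cR_i\circ\Tr_{R_{i-1}}$. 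This lets us write $\cM_i$ on $E$ as an isometric extension of $\cR_i$ followed by a map acting on the remaining systems. A duality argument for Rényi divergences (the $\alpha\leftrightarrow 1/(2-\alpha)$ shift) then gives the inequality, with the optimisation over purifications $\tilde E_{i-1}$. Iterating this over $i=1,\dots,n$ yields a sum of single-round Rényi terms.

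\textbf{Step 2 (incorporating $C^n$ and the tradeoff function).} To condition on $\Omega$ we would use the standard trick of attaching to each round an auxiliary classical register $D_i$. Its state is prepared from $C_i$ so that $H(D_i)$ encodes $\mathrm{Max}(f)-f(\delta_{C_i})$. This is possible because $C_i$ can be read off via the projective form \eqref{eq:geat_projective_stats} of $\cT_i$ without disturbing $A_iE_i$. Because $f$ is affine, the single-round quantity $H(A_iD_i|E_i\tilde E_{i-1})$ is then at least $\mathrm{Max}(f)$ for every input. Next we pass from von Neumann entropy to the Rényi entropy of order $1/(2-\alpha)$ using a continuity bound in $\alpha$. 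This produces the second-order term $\frac{\alpha-1}{2-\alpha}V^2$ and the third-order term $(\frac{\alpha-1}{2-\alpha})^2K(\alpha)$, with $V$ and $K$ controlled by $\log|A|$, $\mathrm{Var}(f)$ and $\mathrm{Max}(f)-\mathrm{Min}_\Sigma(f)$.

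\textbf{Step 3 (conditioning and smoothing).} Conditioning on $\Omega$ costs $\frac{\alpha}{\alpha-1}\log(1/\Pr[\Omega])$ in Rényi entropy. On $\Omega$, convexity of $\Omega$ and affinity of $f$ ensure that the $D$-registers contribute at most $n(\mathrm{Max}(f)-h)$, and we remove them at exactly this cost. Finally, the Rényi-to-smooth-min conversion $H^\varepsilon_{\min}\geq H_\alpha-\frac{g(\varepsilon)}{\alpha-1}$ gives the stated bound.

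\textbf{Main obstacle.} We expect the chain rule of Step~1 to be the hardest part, in particular the duality argument that handles the non-signalling condition with updated side information $E_i$. The second most delicate part is the $\alpha$-continuity estimates that fix the exact constants in $K(\alpha)$. For both, our first approach would be to follow \cite{Metger_2022} verbatim.
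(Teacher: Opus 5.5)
Your proposal matches the paper. The paper gives no proof of its own and simply imports this result from \cite{Metger_2022}, and your three-stage outline is a faithful summary of that cited argument: the non-signalling Rényi chain rule with single-round order $\tfrac{1}{2-\alpha}$, then the $D_i$-register construction with the second-order continuity bound, then conditioning on $\Omega$ and the Rényi-to-smooth-min conversion. One minor point is that convexity of $\Omega$ plays no role when you remove the $D$-registers; affinity of $f$ (so that $\sum_i f(\delta_{c_i}) = n f(\mathrm{freq}_{c^n})$) together with $h$ being the minimum over $\Omega$ already gives the cost $n(\mathrm{Max}(f)-h)$.
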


\section{Finite-size security proof} \label{sec:finite-size-proof}

Based on the preliminaries given in Appendix~\ref{sec:technical_background}, we can now formulate the full finite-size security proof for a general one-sided DIQKD protocol as stated in Box~\hyperref[box:protocol]{2}. The final security statement is given in Theorem~\ref{thm:sdi_qkd_security}.

\subsection{Protocol description}\label{subsec:protocol}

The following protocol closely follows the general QKD protocol presented in \cite{Metger_2023}.
\begin{tcolorbox}[
    title=Box 2: One-sided DIQKD protocol (detailed),
    title filled=false, 
    colback=black!5!white,
    colbacktitle=black!5!white,
    colframe=black!40!white,
    coltitle=black,
    enhanced,
    breakable,
]
\label{box:protocol}
\textbf{Protocol Arguments} \\
{\setlength{\tabcolsep}{2pt}
\begin{tabular}{rcp{8.5cm}}
    $n \in \mathbb{N}$ & : & Number of protocol rounds \\
    $l \in \mathbb{N}$ & : & Length of the final secret key \\
    $\Xi^{(E)} \subseteq \mathrm{CPTP}(Q_E, Q_E Q_A Q_B)$ & : & Set of attacks available to Eve. Here $E$ denotes Eve's side-information and $Q_A$ and $Q_B$ are quantum systems distributed to Alice and Bob \\
    $\Xi^{(A)} \subseteq \mathrm{CPTP}(Q_A R_A, A R_A)$ & : & Set of possible instruments for Alice with $A$ being a classical register with alphabet $\cA$. $R_A$ is Alice's memory register. \\
    $\Xi^{(B)} \subseteq \mathrm{CPTP}(Q_B R_B, B R_B)$ & : & Set of possible instruments for Bob with $B$ being a classical register with alphabet $\cB$. $R_B$ is Bob's memory register. \\
    $\textsc{pd} : \cA \times \cB \rightarrow \cI$ & : & Function describing public discussion \\
    $\textsc{rk} : \cA \times \cI \rightarrow \cS$ & : & Function describing Alice's raw key generation \\
    $\textsc{ev} : \cS \times \cB \times \cI \rightarrow \cC$ & : & Function ``evaluating'' each round by assigning a label from the alphabet $\cC$ \\
    $\Omega_{\textsc{pe}} \subseteq \cC^n$ & : & Set describing accepted statistics. \\
    $\lambda_{\textsc{ec}} \in \mathbb{N}$ & : & Number of communicated bits during error correction \\
    $\varepsilon_{\textsc{kv}} \in (0, 1)$ & : & Tolerated error for key verification \\
    $\varepsilon_{\textsc{pa}} \in (0, 1)$ & : & Tolerated error for privacy amplification \\
    $\varepsilon_{\mathrm{snd}} \in (0, 1)$ & : & Desired security parameter \\
\end{tabular}
}

\textbf{Protocol steps}
\begin{enumerate}[itemsep=-0.1em]
    \item \textit{Data generation.} For each $i \in [n]$:
    \begin{enumerate}[itemsep=-0.1em]
        \item Eve applies an attack channel $\cE \in \Xi^{(E)}$ to produce quantum systems $Q_A$ and $Q_B$ which are distributed to Alice and Bob respectively.
        \item Alice and Bob apply their instruments $\cM \in \Xi^{(A)}$ and $\cN \in \Xi^{(B)}$ to the state that they received plus their own personal memory registers $R_A$ and $R_B$. They record their outcomes in the registers $A_i$ and $B_i$.
        \item Alice and Bob perform classical communication to exchange the registers $I_i = \textsc{pd}(A_i, B_i)$.
        The register $I_i$ becomes available to Eve.
        \item Alice sets her raw key bit to $S_i = \textsc{rk}(A_i, I_i)$.
    \end{enumerate}
    
    \item \textit{Error correction.} Alice and Bob communicate at most $\lambda_{\mathrm{EC}}$ bits to perform error correction. At the end, Bob ends up with a guess for Alice's raw key in the registers $\hat{S}^n$.
    
    \item \textit{Key validation.} Alice picks a two-universal hash function $\textsc{Hash}$ hashing to $\lceil \log \frac{1}{\varepsilon_{\textsc{KV}}} \rceil$ bits. Alice sends her choice of hash function and $\textsc{Hash}(S^n)$ to Bob. Bob computes $\textsc{Hash}(\hat{S}^n)$. If their hashes disagree, they abort the protocol.  A transcript of the error correction and key validation steps is stored in the register $O$.
    
    \item \textit{Parameter estimation.} Bob computes $\hat{C}_i = \textsc{ev}(\hat{S}_i, B_i, I_i)$. If $\mathrm{freq}_{\hat{C}^n} \notin \Omega_{\textsc{pe}}$, then Alice and Bob abort the protocol.
    \item \textit{Privacy amplification.} Alice and Bob apply a random function $F$ from a two-universal set of hash functions to $S^n$ and $\hat{S}^n$ to obtain their final secret keys $K_A^l = F(S^n)$ and $K_B^l = F(\hat{S}^n)$.
\end{enumerate}
\end{tcolorbox}

A single round of the quantum phase of the protocol is shown in Figure~\ref{fig:GEAT_singleround}.

Let $K_A^l$ and $K_B^l$ be the respective keys for Alice and Bob, and let $(\rho_{|\Omega})_{K_A^lK_B^lE}$ be the quantum state that describes the final keys and Eve's knowledge (both quantum and classical) at the end of the protocol, conditioned on the event $\Omega$ that the protocol did not abort. Proving security of the protocol encompasses showing two properties, namely \emph{completeness} (Definition \ref{def:completeness}) and \emph{soundness} (Definition \ref{def:soundness}).

\subsection{Completeness}
To show that the protocol is complete, we have to bound the probability of aborting for an honest implementation, i.e., in a setting where the adversary is passive. To achieve this, we first observe that the protocol can abort in two places, in the key validation step and in the parameter estimation step. Let us bound both cases separately.

The key validation step can only fail if error correction was not successful. This probability can be made small by making $\lambda_\textsc{ec}$ sufficiently large. How large depends on the exact noise model of the honest implementation and the performance of the error-correcting code. The following bound was shown in~\cite{Metger_2023} for the error correction procedure from~\cite{Renes_2012}.
\begin{lem}[{\cite[Supplementary Lemma 1]{Metger_2023}}] \label{lem:comp_ec}
    Let $\varepsilon_\textsc{kv}^\mathrm{comp} \in (0, 1)$ and choose $\lambda_\textsc{ec}$ such that
    \begin{equation}
        \lambda_\textsc{ec} \geq n H(S|IB)_{p^\mathrm{hon}} + 2\sqrt{n}\sqrt{1 + 2 \log \frac{2}{\varepsilon_\textsc{kv}^\mathrm{comp}}} \log(1 + 2\abs{\cS}) + 2 \log \frac{2}{\varepsilon_\textsc{kv}^\mathrm{comp}},
    \end{equation}
    where $p^\mathrm{hon}$ is the distribution describing the honest distribution of $S$, $B$, and $I$.
    Then there exists an error correcting procedure which, for the honest implementation, produces an incorrect key (i.e. $\hat{S} \neq S$) with probability at most $\varepsilon_\textsc{kv}^\mathrm{comp}$.
\end{lem}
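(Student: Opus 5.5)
The plan is to build the error-correction procedure from a one-shot Slepian--Wolf style protocol, in the spirit of~\cite{Renes_2012}. Bob must recover $S^n$ from his side information $B^nI^n$ using $\lambda_\textsc{ec}$ bits sent by Alice. The one-shot result I would invoke says the following: if Alice sends a random two-universal hash of $S^n$ of length $\lambda_\textsc{ec} \geq H_{\max}^{\varepsilon'}(S^n|B^nI^n)_{p^\mathrm{hon}} + 2\log\frac{1}{\varepsilon''}$, then Bob's decoder outputs $\hat S^n \neq S^n$ with probability at most about $\varepsilon' + \varepsilon''$. Bob's decoder is a hypothesis-testing or maximum-likelihood decoder; since all registers here are classical, this is a classical statement. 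I would then split $\varepsilon_\textsc{kv}^\mathrm{comp}$ evenly, say $\varepsilon'=\varepsilon''=\varepsilon_\textsc{kv}^\mathrm{comp}/2$. This produces the additive term $2\log\frac{2}{\varepsilon_\textsc{kv}^\mathrm{comp}}$, up to conventions for the smoothing parameter.

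Next, I would exploit that in the honest implementation the rounds are i.i.d., so $p^\mathrm{hon}_{S^nB^nI^n} = (p^\mathrm{hon}_{SBI})^{\otimes n}$. It then remains to bound the smooth max-entropy of an i.i.d. classical distribution by its Shannon value plus a second-order correction. For this I would use the asymptotic equipartition property in its explicit finite form, \cite[Cor.~6.5]{Tomamichel2016}:
\begin{equation*}
H_{\max}^{\varepsilon}(S^n|B^nI^n) \leq nH(S|BI) + 4\sqrt{n}\,\log(2+|\cS|^{1/2}\cdots)\sqrt{\log\tfrac{2}{\varepsilon^2}}.
\end{equation*}
The constant can equivalently be written via $\log(1+2|\cS|)$ after rewriting $\eta$. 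Substituting $\varepsilon=\varepsilon_\textsc{kv}^\mathrm{comp}/2$ gives a term of the form $2\sqrt{n}\sqrt{1+2\log\frac{2}{\varepsilon_\textsc{kv}^\mathrm{comp}}}\,\log(1+2|\cS|)$. With a small amount of algebra on $\log\frac{2}{\varepsilon^2}$, this matches the middle term of the statement.

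Finally, I would combine the two steps. Any $\lambda_\textsc{ec}$ satisfying the hypothesis of the lemma exceeds $H_{\max}^{\varepsilon'}(S^n|B^nI^n)+2\log\frac{2}{\varepsilon_\textsc{kv}^\mathrm{comp}}$, so the one-shot procedure applies and its failure probability is at most $\varepsilon_\textsc{kv}^\mathrm{comp}$. I would also note that the key-validation hash check then passes whenever $\hat S^n=S^n$.

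I expect the main obstacle to be bookkeeping rather than concept. The difficulty is matching the exact constants: the factor $2$ in front of $\sqrt{n}$, the $\log(1+2|\cS|)$ dimension factor, and the $1+2\log$ under the square root. These depend on which AEP variant and which smoothing convention (purified versus trace distance, normalized versus subnormalized) are used. I would fix the convention by following the proof of \cite[Supplementary Lemma~1]{Metger_2023} directly, and choose the smoothing split so that both error contributions add to exactly $\varepsilon_\textsc{kv}^\mathrm{comp}$.
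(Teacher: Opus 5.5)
The paper gives no proof of its own; it imports the bound from \cite{Metger_2023} for the procedure of \cite{Renes_2012}. Your two-step structure is the argument behind that bound: a one-shot reconciliation that succeeds once $\lambda_\textsc{ec}\ge H_{\max}^{\varepsilon'}(S^n|B^nI^n)+2\log\frac{1}{\varepsilon''}$ with $\varepsilon'=\varepsilon''=\varepsilon_\textsc{kv}^\mathrm{comp}/2$, followed by an AEP for the i.i.d.\ honest statistics. The split correctly produces the term $2\log\frac{2}{\varepsilon_\textsc{kv}^\mathrm{comp}}$.

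The step that fails is your choice of AEP, together with the claim that its constant ``can equivalently be written'' with $\log(1+2|\cS|)$. The $\varepsilon$-dependence is fine: $\log\frac{2}{\varepsilon^2}=1-2\log\varepsilon$, which equals $1+2\log\frac{2}{\varepsilon_\textsc{kv}^\mathrm{comp}}$ at $\varepsilon=\varepsilon_\textsc{kv}^\mathrm{comp}/2$. The dimension prefactor is the problem. From $\eta\le 2+\sqrt{|\cS|}$ you only get $4\log\eta\le 2\log\bigl(|\cS|+4\sqrt{|\cS|}+4\bigr)$. This upper bound exceeds $2\log(1+2|\cS|)$ for every $|\cS|\le 21$; for a binary key it is roughly $7.09$ versus $4.64$. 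So your bound on $\eta$ does not deliver the stated prefactor. Your route proves the lemma only with a strictly larger threshold on $\lambda_\textsc{ec}$, which is a weaker statement than the one claimed.

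The fix is to use the AEP obtained as the i.i.d.\ special case of the entropy accumulation theorem of \cite{Dupuis_2020}, with a constant tradeoff function and no conditioning event. Its second-order term has exactly the required shape and holds for all $n$:
\begin{equation*}
  H_{\max}^{\varepsilon}(S^n|B^nI^n)_{(p^\mathrm{hon})^{\otimes n}} \le nH(S|IB)_{p^\mathrm{hon}} + 2\sqrt{n}\,\log(1+2|\cS|)\sqrt{1-2\log\varepsilon}.
\end{equation*}
With $\varepsilon=\varepsilon_\textsc{kv}^\mathrm{comp}/2$ this gives precisely the first two terms of the hypothesis. Adding the $2\log\frac{2}{\varepsilon_\textsc{kv}^\mathrm{comp}}$ from the reconciliation step yields the stated threshold, with total failure probability at most $\varepsilon_\textsc{kv}^\mathrm{comp}/2+\varepsilon_\textsc{kv}^\mathrm{comp}/2=\varepsilon_\textsc{kv}^\mathrm{comp}$.
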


\begin{figure}[t!]
    \centering
    \begin{tikzpicture}
        \node[green_box] (XiE) at (0, 0) {$\cE$};
        \node[blue_box] (XiA) at (-3, 0) {$\cM$};
        \node[blue_box] (XiB) at (+3, 0) {$\cN$};

        \draw[->,>=stealth, very thick] (XiE.west) -- node[above]{$Q_A$} (XiA.east);
        \draw[->,>=stealth, very thick] (XiE.east) -- node[above]{$Q_B$} (XiB.west);

        \draw[->,>=stealth,very thick] ([yshift=0.6cm] XiA.north) node[above]{$R_A$} -- (XiA.north);
        \draw[->,>=stealth,very thick] (XiA.south) -- ([yshift=-0.6cm] XiA.south) node[below]{$R_A$};
        
        \draw[->,>=stealth,very thick] ([yshift=0.6cm] XiB.north) node[above]{$R_B$} -- (XiB.north);
        \draw[->,>=stealth,very thick] (XiB.south) -- ([yshift=-0.6cm] XiB.south) node[below]{$R_B$};

        \draw[->,>=stealth,very thick] ([yshift=0.6cm] XiE.north) node[above]{$Q_E$} -- (XiE.north);
        \draw[->,>=stealth,very thick] (XiE.south) -- ([yshift=-0.6cm] XiE.south) node[below]{$Q_E$};

        \draw[->,>=stealth,very thick] (XiA.west) -- ([xshift=-0.6cm] XiA.west) node[left]{$A$};
        \draw[->,>=stealth,very thick] (XiB.east) -- ([xshift=+0.6cm] XiB.east) node[right]{$B$};
    \end{tikzpicture}
    \caption{
        A single round of the one-sided DIQKD protocol. Eve distributes some shared state to Alice and Bob (the map $\cE \in \Xi^{(E)}$) who then apply instruments $\cM \in \Xi^{(A)}$ and $\cN \in \Xi^{(B)}$ to obtain the outcomes $A$ and $B$.
    }
    \label{fig:GEAT_singleround}
\end{figure}

The other part where the protocol may abort is during parameter estimation, i.e., when $\mathrm{freq}_{\hat{C}^n} \notin \Omega_{\textsc{pe}}$.

\begin{lem} \label{lem:comp_pe}
    Let $p^\mathrm{hon} \in \mathbb{P}_\cC$ denote the honest distribution of the register $C$. Assume that $\Omega_\textsc{pe}$ is chosen as $\Omega_\textsc{pe} = \{ p \in \mathbb{P}_\cC : |p(c) - p^\mathrm{hon}(c)| \leq \delta \quad \forall c \in \cC \}$. If
    \begin{equation}
        \delta \geq \sqrt{\frac{1}{2n} \ln \frac{2\abs{\cC}}{\varepsilon_\textsc{pe}^\mathrm{comp}}},
    \end{equation}
    then the honest implementation aborts with probability at most $\varepsilon_\textsc{pe}^\mathrm{comp}$.
\end{lem}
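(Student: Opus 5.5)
In the honest implementation every round is run with the same honest state and honest measurements, and no memory is carried over. The labels $\hat C_1,\dots,\hat C_n$ are therefore i.i.d. random variables with distribution $p^\mathrm{hon}$. The plan is to apply Hoeffding's inequality to each coordinate of the empirical frequency vector $\mathrm{freq}_{\hat C^n}$ and then take a union bound over the alphabet $\cC$.

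First, I will make the i.i.d. claim precise. In the honest implementation Eve's channel prepares the same state in each round, and the instruments act trivially on the memory registers $R_A,R_B$. The tuple $(A_i,B_i,I_i,S_i)$ is then i.i.d. across rounds. We also need $\hat S_i = S_i$ so that $\hat C_i = \textsc{ev}(S_i,B_i,I_i)$. The event $\hat S^n\neq S^n$ is already controlled by Lemma~\ref{lem:comp_ec}, so here I will either bound the abort probability of parameter estimation conditioned on correct error correction, or define $p^\mathrm{hon}$ as the law of $\textsc{ev}(S,B,I)$ and work with $C_i$ in place of $\hat C_i$. The two failure probabilities can then be combined later by a union bound. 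I will state this convention explicitly, since it is the only subtle point.

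Next, fix $c\in\cC$ and let $Z_i^{(c)}=\mathds{1}[C_i=c]\in\{0,1\}$. These are i.i.d. with mean $p^\mathrm{hon}(c)$, and $\mathrm{freq}_{C^n}(c)=\frac1n\sum_i Z_i^{(c)}$. Two-sided Hoeffding gives
\[
\Pr\bigl[\,|\mathrm{freq}_{C^n}(c)-p^\mathrm{hon}(c)|>\delta\,\bigr]\le 2e^{-2n\delta^2}.
\]
A union bound over the $|\cC|$ letters shows that the probability that $\mathrm{freq}_{C^n}\notin\Omega_\textsc{pe}$ is at most $2|\cC|e^{-2n\delta^2}$. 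Under the stated condition on $\delta$ we have $2n\delta^2\ge \ln(2|\cC|/\varepsilon_\textsc{pe}^\mathrm{comp})$, so this bound is at most $\varepsilon_\textsc{pe}^\mathrm{comp}$.

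The only real obstacle is the interplay between $\hat C$ and $C$ described in the first step; the concentration argument itself is routine.
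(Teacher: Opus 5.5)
Your proposal matches the paper's proof: Hoeffding on the indicators of $C_i=c$ for each letter, a union bound over $\cC$ giving $2|\cC|e^{-2n\delta^2}$, and then the choice of $\delta$. Your remark about $\hat C$ versus $C$ addresses a point the paper glosses over, since it works directly with $\mathrm{freq}_{C^n}$. Of your two options, the clean one is to bound via $C^n$ and add $\Pr[\hat S^n\neq S^n]$ by a union bound; conditioning on correct error correction need not preserve the i.i.d.\ structure that Hoeffding requires.
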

\begin{proof}
For each $c \in \cC$, let $X^c_i$ be the indicator variable which is equal to $1$ if $C_i = c$ and $0$ otherwise. By Hoeffding's inequality \cite{Hoeffding1963}, we have that
\begin{equation}
    \Pr[\abs{\frac{1}{n} \sum_i X_i^c - p^{\mathrm{hon}}(c)} > \delta] \leq 2 e^{-2n \delta^2}.
\end{equation}
Hence, by the union bound, we have that
\begin{equation}
    \Pr[\mathrm{freq}_{C^n} \notin \Omega_{\textsc{pe}}] \leq 2\abs{\cC} e^{-2n\delta^2},
\end{equation}
where we noted that $\frac{1}{n} \sum_i X_i^c = \mathrm{freq}_{C^n}(c)$. By the given choice of $\delta$, this is less than $\varepsilon_{\textsc{pe}}^\mathrm{comp}$.
\end{proof}

\subsection{Soundness}

For the following theorem we consider a channel $\cE \in \Xi^{(E)}$ which distributes a state which is then measured by the instruments $\cM \in \Xi^{(A)}$ and $\cN \in \Xi^{(B)}$ (see Figure~\ref{fig:GEAT_singleround}). We will also denote by $M_{Q_A R_A}^a$ and $N_{Q_B R_B}^b$ the POVM elements of $\cM_{A R_B|Q_A R_A}$ and $\cN_{B R_B|Q_B R_B}$ when observing the outcomes $a$ and $b$ respectively.

\begin{thm} \label{thm:sdi_qkd_security}
    Assume that for every $\cM \in \Xi^{(A)}$ and $\cN \in \Xi^{(B)}$, the channel 
    \begin{equation} \label{eq:non_signalling}
        \textsc{pd}_{I|AB} \circ (\cM_{A R_A|Q_A R_A} \otimes \cN_{B R_B|Q_B R_B})
    \end{equation}
    does not signal from $R_A R_B$ to $I$ and denote by $\Efin$ Eve's quantum side-information at the end of the protocol.
    Let $\textsc{ca}$ be an arbitrary min-tradeoff function, i.e., an affine function satisfying
    \begin{equation}
        \textsc{ca}\big(\nu^{\rho,\cE,\cM,\cN}_C\big) \, \leq \, H(S|I Q_E)_{\nu^{\rho,\cE,\cM,\cN}} \quad \forall \rho,\,\cE \in \Xi^{(E)},\,\cM \in \Xi^{(A)},\,\cN \in \Xi^{(B)}, \\
    \end{equation}
    with
    \begin{equation}
    \begin{aligned}
        \nu^{\rho,\cE,\cM,\cN}_{SIC B Q_E} \coloneqq \sum_{a, b} & [\textsc{rk}(a, i)]_{S} \otimes [\textsc{pd}(a, b)]_I \otimes [\textsc{ev}(s, b, i)]_C \otimes \ketbra{b}_{B} \\
        &\otimes \Tr_{Q_A R_A Q_B R_B}[(M^a_{Q_A R_A} \otimes N^b_{Q_B R_B}) \circ \cE[\rho_{R_A R_B Q_E}]],
    \end{aligned}
    \end{equation}
    where $i$ and $s$ are shorthands for $\textsc{pd}(a, b)$ and $\textsc{rk}(a, i)$ respectively.

    Then, the one-sided DIQKD protocol in Box~\hyperref[box:protocol]{2} is $\varepsilon_\mathrm{snd}$-sound when instantiated with parameters satisfying
    \begin{equation} \label{eq:finite_size_bound}
    \begin{aligned}
        l &\leq \, n \min_{c^n \in \Omega_\textsc{pe}} \textsc{ca}(\mathrm{freq}_{c^n}) - \min_{\alpha \in (1, 3/2)} \kappa^{\textsc{geat}}\Big(n, \alpha, \frac{\varepsilon_s}{4}, \varepsilon_\mathrm{snd} - \varepsilon_\textsc{kv}, \textsc{ca} \Big) - 2 \log \frac{1}{2 \varepsilon_{\textsc{pa}}} \\
        &\hspace{10pt}- H_{\max}^{\varepsilon_s/4}(C^n|S^n I^n E_\mathrm{fin})_{\rho_{\land \tilde{\Omega}}} - \lambda_\textsc{ec} - \lceil \log \tfrac{1}{\varepsilon_\textsc{kv}} \rceil - 2\log\frac{1}{1-\sqrt{1-(\tfrac{\varepsilon_s}{4})^2}}, \\
        \varepsilon_s &= \frac{1}{2}(\varepsilon_\mathrm{snd} - \varepsilon_\textsc{kv} - \varepsilon_\textsc{pa}), \\
    \end{aligned}
    \end{equation}
    where $\rho$ is the state at the end of the protocol, $\Efin$ is Eve's side information at the end of the protocol,  $\kappa^{\textsc{geat}}$ is as defined in Theorem~\ref{thm:GEAT}, and
    \begin{equation}
    \begin{aligned}
        \tilde{\Omega} \coloneqq& \{ S^n, \hat{S}^n, \hat{C}^n : \mathrm{freq}_{\hat{C}^n} \in \Omega_\textsc{pe} \land (S^n = \hat{S}^n) \}.
    \end{aligned}
    \end{equation}
\end{thm}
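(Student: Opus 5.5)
The proof follows the standard GEAT-based template of \cite{Metger_2023}, adapted to the one-sided setting. First, we split off key validation. Let $\Omega$ be the event that the protocol does not abort and let $\tilde\Omega$ be as in the statement. If $S^n\neq\hat S^n$, then by two-universality of \textsc{Hash} the key-validation check passes with probability at most $\varepsilon_\textsc{kv}$. Hence $\rho_{\land\Omega}$ and $\rho_{\land\tilde\Omega}$ differ in trace norm by at most $\varepsilon_\textsc{kv}$, and on $\tilde\Omega$ we have $K_A^l=K_B^l$. It therefore suffices to bound the distance from ideal of $\rho_{K_A^l F \Efin\land\tilde\Omega}$ by $\varepsilon_\mathrm{snd}-\varepsilon_\textsc{kv}$. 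We may assume $\Pr[\tilde\Omega]\ge \varepsilon_\mathrm{snd}-\varepsilon_\textsc{kv}$, since otherwise the bound is trivial. Applying Lemma~\ref{lem:leftover_hashing} with smoothing $\varepsilon_s$, together with the choice of $l$, we only need
\[
H_{\min}^{\varepsilon_s}(S^n|I^nO\,\tilde E)_{\rho_{\land\tilde\Omega}} \ge l+2\log\tfrac{1}{2\varepsilon_\textsc{pa}}
\]
(up to the usual renormalisation bookkeeping). Here $\Efin$ consists of Eve's quantum system $\tilde E$, the public registers $I^n$, and the transcript $O$.

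Next, we strip off the classical leakage. The chain rule for smooth min-entropy with a classical register of bounded size gives
\[
H_{\min}^{\varepsilon_s}(S^n|I^nO\tilde E)\ge H_{\min}^{\varepsilon_s}(S^n|I^n\tilde E)-\lambda_\textsc{ec}-\lceil\log\tfrac1{\varepsilon_\textsc{kv}}\rceil .
\]
To make the statistics $C^n$ available to the GEAT, we then use the chain rule of \cite{Metger_2023}, in the form $H_{\min}(AB|C)\le H_{\min}(A|BC)+H_{\max}(B|AC)$ plus correction terms. We apply it to $C^n$ and $S^n$, so that
\[
H_{\min}^{\varepsilon_s}(S^n|I^n\tilde E)\ge H_{\min}^{\varepsilon_s/4}(S^nC^n|I^n\tilde E)-H_{\max}^{\varepsilon_s/4}(C^n|S^nI^n\tilde E)-2\log\tfrac{1}{1-\sqrt{1-(\varepsilon_s/4)^2}} .
\]
This produces the $H_{\max}$ term and the $g$-type correction that appear in \eqref{eq:finite_size_bound}. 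Since $S^n$ is a deterministic function of the registers on $\tilde\Omega$, we may regard $C^n$ as generated from $\hat S^n=S^n$. Conditioning on $\tilde\Omega$ is an event on $C^n$ lying within $\Omega_\textsc{pe}$, which is what the GEAT requires.

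The main step is casting the rounds as GEAT channels (Theorem~\ref{thm:GEAT}). For each round we set the output register to $A_i:=S_i$, the statistics register to $C_i$, the memory register to $R_{i}:=R_AR_B$ (the devices' memories), and the adversary's register to $E_i:=I^iQ_E$. The channel $\cM_i$ composes the attack $\cE$, the instruments $\cM\otimes\cN$, \textsc{pd}, \textsc{rk} and \textsc{ev}, keeping $B_i$ inside the memory. The non-signalling requirement from $R_{i-1}$ to $E_i$ is exactly the hypothesis \eqref{eq:non_signalling}. The reason is that $\cE$ acts only on $Q_E$, and the only new content of $E_i$ beyond $Q_E$ is $I_i$, which by hypothesis does not depend on $R_AR_B$. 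This is where Condition~\ref{cond:non_signalling} enters, and I expect it to be the delicate point: one must check that the marginal channel onto $I_iQ_E$ factorizes as $\cR\circ\Tr_{R_{i-1}}$ even though $\cE$ also outputs $Q_E$ correlated with $Q_AQ_B$. I would verify this by writing $\Tr_{S_iC_iR_i}\circ\cM_i=(\textsc{pd}\circ(\cM\otimes\cN)\otimes\mathrm{id}_{Q_E})\circ\cE$ and applying the hypothesis with $Q_E$ as a reference system. The projective form \eqref{eq:geat_projective_stats} holds because $C_i$ is a classical function of $S_i$ and of the classical registers $B_i,I_i$; these could be copied into $E_i$, or one can observe that the reading of $C_i$ is done from registers already in $S_iE_i$. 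The min-tradeoff condition reduces to the single-round hypothesis on \textsc{ca}. Any state $\omega_{R_{i-1}E_{i-1}\tilde E_{i-1}}$ fits into the form $\nu^{\rho,\cE,\cM,\cN}$ by absorbing $E_{i-1}\tilde E_{i-1}$ into $Q_E$ and $\rho$, and strong subadditivity lets us drop the extra conditioning. Finally, applying Theorem~\ref{thm:GEAT} with $\varepsilon=\varepsilon_s/4$ and $\Pr[\Omega]\ge\varepsilon_\mathrm{snd}-\varepsilon_\textsc{kv}$, and optimizing over $\alpha$, gives $H_{\min}^{\varepsilon_s/4}(S^nC^n|I^n\tilde E)\ge n\min_{c^n\in\Omega_\textsc{pe}}\textsc{ca}(\mathrm{freq}_{c^n})-\kappa^\textsc{geat}$; here $C^n$ is included among the outputs or reconstructed from them. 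Combining these bounds with the choice of $l$ yields $\varepsilon_\mathrm{snd}$-soundness, with $\varepsilon_s$ balanced as $2\varepsilon_s+\varepsilon_\textsc{pa}+\varepsilon_\textsc{kv}=\varepsilon_\mathrm{snd}$.
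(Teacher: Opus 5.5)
Your overall route (key-validation split, leftover hashing, removing $O$, the chain rule that produces the $H_{\max}$ penalty, then the GEAT) is the paper's. Your non-signalling check with $Q_E$ as a reference system is also fine. The gap is in the GEAT instantiation itself, which does not satisfy the theorem's hypotheses as you set it up.

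With output register $A_i := S_i$ and $E_i := I^iQ_E$, the projective-statistics condition \eqref{eq:geat_projective_stats} fails. $C_i = \textsc{ev}(S_i,B_i,I_i)$ depends on Bob's outcome $B_i$; in test rounds it typically records whether $S_i$ and $B_i$ disagree. You keep $B_i$ in the memory $R_i$, so $C_i$ cannot be read off from $S_iE_i$. Neither remedy you offer works:
\begin{itemize}
\item $B_i$ is simply not in $S_iE_i$, so $C_i$ cannot be "read from registers already there".
\item Copying $B_i$ into $E_i$ violates non-signalling, because $B_i$ depends on the device memory $R_B$ while hypothesis \eqref{eq:non_signalling} only controls $I$. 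It would also force \textsc{ca} to lower-bound $H(S|IBQ_E)$ rather than $H(S|IQ_E)$.
\end{itemize}
The right choice, which the paper makes and which your own chain-rule step was already preparing for, is the output register $A_i := S_iC_i$. Then $C_i$ is produced by a computational-basis measurement on $A_i$ alone. The min-tradeoff condition follows from $H(S_iC_i|E_i\tilde E_{i-1})\geq H(S_i|E_i\tilde E_{i-1})$ for classical $C_i$, together with the hypothesis on \textsc{ca} once $I^{i-1}\tilde E_{i-1}$ is absorbed into Eve's input. No conditioning needs to be dropped at that point. Dropping conditioning via strong subadditivity would in any case go the wrong way for a lower bound.

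The second gap is the conditioning event. $\tilde\Omega$ is not an event on $C^n$. It equals $\Omega_C\land\Omega_{\textsc{ec}}^{\mathrm{corr}}$, where $\Omega_C$ is the event $\mathrm{freq}_{C^n}\in\Omega_{\textsc{pe}}$ and $\Omega_{\textsc{ec}}^{\mathrm{corr}}$ is the event $S^n=\hat S^n$. The latter involves Bob's guess and is not determined by $C^n$, whereas the GEAT only allows conditioning on subsets of $\cC^n$. The paper removes the extra event by applying \cite[Lemma~10]{Tomamichel_2017} twice,
\[
H_{\min}^{\varepsilon_s/4}(S^nC^n|I^n\Efin)_{\rho_{\land\tilde\Omega}}\geq H_{\min}^{\varepsilon_s/4}(S^nC^n|I^n\Efin)_{\rho_{\land\Omega_C}}\geq H_{\min}^{\varepsilon_s/4}(S^nC^n|I^n\Efin)_{\rho_{|\Omega_C}},
\]
and then uses $\Pr[\Omega_C]\geq\Pr[\tilde\Omega]>\varepsilon_{\mathrm{snd}}-\varepsilon_{\textsc{kv}}$ in $\kappa^{\textsc{geat}}$.

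Finally, the general chain rule you quote, with $H_{\min}(A|BC)$ on the right-hand side, is false: take $A=B$ uniform on a large alphabet and $C$ trivial. The valid version has $H_{\min}(A|C)$, which is the form your displayed inequality actually uses.
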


\begin{remark}
    The $H_{\max}^\varepsilon$ term quantifies how much information is leaked through the testing registers $C^n$. Many protocols of interest perform infrequent sampling which means that only with probability $\gamma$ a test round is performed and otherwise the register $C_i$ takes some fixed value. One can then bound $H_{\max}^{\varepsilon_s/4}(C^n|S^n I^n \Efin) \lesssim n \gamma$, see Lemma~\ref{lem:hmax_bound}
\end{remark}

\begin{proof}[Proof of Theorem~\ref{thm:sdi_qkd_security}]
Let $\Omega$ denote the event of the protocol accepting (i.e. both key validation and the parameter estimation in Box~\hyperref[box:protocol]{2} accept) and $\Omega_{\textsc{ec}}^\mathrm{corr}$ denote the event that error-correction succeeded, that is $\hat{S}^n = S^n$. Let us denote by $\Efin$ Eve's quantum side information at the end of the protocol (hence her total side information is $O I^n \Efin$).
Then, the state $\rho_{K_A^lK_B^l O I^n \Efin \land \Omega}$ can be decomposed as
    \begin{equation}      
        \rho_{\land \Omega} = \rho_{\land \Omega \land \Omega_{\textsc{ec}}^\mathrm{corr}} + \rho_{\land \Omega \land (\Omega_{\textsc{ec}}^\mathrm{corr})^c}.
    \end{equation}
Note that by the properties of two-universal hashing, we have that $\mathrm{Pr}[\Omega \land (\Omega_{\textsc{ec}}^\mathrm{corr})^c] \leq \varepsilon_{\textsc{kv}}$. We can use the triangle inequality
\begin{align}
    &\frac{1}{2}\norm{\rho_{K_A^l K_B^l O I^n \Efin \land \Omega} - \tau_{K_A^l K_B^l} \otimes \rho_{O I^n \Efin \land \Omega}}_1 \\
    &\leq \frac{1}{2}\norm{\rho_{K_A^l K_B^l O I^n \Efin \land \Omega \land \Omega_{\textsc{ec}}^\mathrm{corr}} - \tau_{K_A^l K_B^l} \otimes \rho_{O I^n \Efin \land \Omega \land \Omega_{\textsc{ec}}^\mathrm{corr}}}_1 \\
    &\hspace{10pt}+ \frac{1}{2}\norm{\rho_{K_A^l K_B^l O I^n \Efin \land \Omega \land (\Omega_{\textsc{ec}}^\mathrm{corr})^c} - \tau_{K_A^l K_B^l} \otimes \rho_{O I^n \Efin \land \Omega \land (\Omega_{\textsc{ec}}^\mathrm{corr})^c}}_1 \\
    &\leq \frac{1}{2}\norm{\rho_{K_A^l K_B^l O I^n \Efin \land \Omega \land \Omega_{\textsc{ec}}^\mathrm{corr}} - \tau_{K_A^l K_B^l} \otimes \rho_{O I^n \Efin \land \Omega \land \Omega_{\textsc{ec}}^\mathrm{corr}}}_1 + \varepsilon_{\textsc{kv}}, \label{eq:trace_dist1}
\end{align}
where we used that $\mathrm{Pr}[\Omega \land (\Omega_{\textsc{ec}}^\mathrm{corr})^c] \leq \varepsilon_\textsc{kv}$ for the second inequality.

It remains to show that the first term is upper bounded by $\varepsilon_\mathrm{sec} \coloneqq \varepsilon_\mathrm{snd} - \varepsilon_\textsc{kv}$. Since the states are conditioned on the event $\Omega_\textsc{ec}^\mathrm{corr}$, we know that $S^n = \hat{S}^n$ and hence $K_A^l=K_B^l$. Therefore, the trace distance does not change when we trace out $K_B^l$. For ease of notation, set $\tilde{\Omega}\coloneqq\Omega\wedge\Omega_\textsc{ec}^\mathrm{corr}$ and decompose $\varepsilon_\mathrm{sec} = 2\varepsilon_s + \varepsilon_\textsc{pa}$. If $\mathrm{Pr}[\tilde{\Omega}] \leq \varepsilon_\mathrm{sec}$, the bound holds trivially. Therefore, we focus on the case $\mathrm{Pr}[\tilde{\Omega}] > \varepsilon_\mathrm{sec}$. From the quantum leftover hashing lemma (Lemma~\ref{lem:leftover_hashing}) we get 
\begin{equation} \label{eq:leftoverhash}
\begin{aligned}
    \frac{1}{2} \norm{\rho_{K_A^l O I^n \Efin \land \tilde{\Omega}} - \tau_{K_A^l} \otimes \rho_{O I^n \Efin \land \tilde{\Omega}}}_1 
    \leq 2\varepsilon_s + \frac{1}{2} \sqrt{2^{l-H_\mathrm{min}^{\varepsilon_s}(S^n|O I^n \Efin)_{\rho_{\land \tilde{\Omega}}}}}.
\end{aligned}
\end{equation}
Note that $H_\mathrm{min}^{\varepsilon_s}(S^n|O I^n \Efin)$ is evaluated on the quantum state right before privacy amplification. To ensure that the right hand side in \eqref{eq:leftoverhash} is upper bounded by $2\varepsilon_s + \varepsilon_\textsc{pa}$, we require
\begin{equation}
    l\le H_\mathrm{min}^{\varepsilon_s}(S^n|O I^n \Efin)_{\rho_{\land \tilde{\Omega}}} - 2\log\frac{1}{2\varepsilon_\textsc{pa}}.
\end{equation}

It is left to bound the entropy $H_\mathrm{min}^{\varepsilon_s}(S^n|O I^n \Efin)_{\rho_{\land \tilde{\Omega}}}$. First, we note that the side information leaked during error correction, $O$, consists of two parts: the number of error-correction bits communicated publicly between the honest parties, $\lambda_\textsc{ec}$, and the number of bits announced during key validation, $\lceil \log \frac{1}{\varepsilon_{\textsc{ec}}}\rceil$ (in principle, $O$ also contains the choice of hash function, but this is uncorrelated to $S^nI^n\Efin$). Using \cite[Lemma 6.8]{Tomamichel2016}, we obtain
\begin{align}
    H_{\min}^{\varepsilon_s}(S^n|O I^n \Efin)_{\rho_{\land \tilde{\Omega}}}
    &\geq H_\mathrm{min}^{\varepsilon_s}(S^n|I^n \Efin)_{\rho_{\land\tilde{\Omega}}} - \log \abs{O} \\
    &\geq H_\mathrm{min}^{\varepsilon_s}(S^n|I^n \Efin)_{\rho_{\land\tilde{\Omega}}} - \lambda_{\textsc{ec}} - \lceil\log\tfrac{1}{\varepsilon_\textsc{ec}}\rceil,
\end{align}
where we used that $\log \abs{O} \leq \lambda_{\textsc{ec}} + \lceil\log\tfrac{1}{\varepsilon_\textsc{ec}}\rceil$.

Next, we would like to apply the GEAT to lower-bound the smooth min-entropy. However, we cannot apply the GEAT directly since the testing registers $\hat{C}_i$ cannot be reconstructed from $S_i$ and Eve's side information in round $i$ (as required by~(\ref{eq:geat_projective_stats})). To overcome this, we want to include $\hat{C}^n$ in Alice's secret register. This can be done via the chain rule in \cite[Theorem 6.1]{Tomamichel2016}:
\begin{equation}
\begin{aligned}
    H_\mathrm{min}^{\varepsilon_s}(S^n|I^n \Efin)_{\rho_{\land\tilde{\Omega}}}
    &\geq H_\mathrm{min}^{\varepsilon_s/4}(S^n\hat{C}^n|I^n \Efin)_{\rho_{\land\tilde{\Omega}}} \\
    &\hspace{10pt}- H_\mathrm{max}^{\varepsilon_s/4}(\hat{C}^n|S^n I^n \Efin)_{\rho_{\land\tilde{\Omega}}} - 2\log\frac{1}{1-\sqrt{1-\left(\tfrac{\varepsilon_s}{4}\right)^2}},
\end{aligned}
\end{equation}
Next, we note that since we consider the state conditioned on $\tilde{\Omega}$ we have $\hat{S}^n = S^n$ and therefore $\hat{C}_i = \textsc{ev}(V_i, I_i, \hat{S}_i) = \textsc{ev}(V_i, I_i, S_i) \eqqcolon C_i$. Therefore,
\begin{equation}
    H_\mathrm{min}^{\varepsilon_s/4}(S^n\hat{C}^n|I^n \Efin)_{\rho_{\land\tilde{\Omega}}}
    = H_\mathrm{min}^{\varepsilon_s/4}(S^nC^n|I^n \Efin)_{\rho_{\land\tilde{\Omega}}}.
\end{equation}
Let us denote by $\Omega_{\hat{C}}$ the event that $\mathrm{freq}_{\hat{C}^n} \in \Omega_{\textsc{pe}}$, by $\Omega_{C}$ the event that $\mathrm{freq}_{C^n} \in \Omega_{\textsc{pe}}$ and by $\Omega_{\textsc{kv}}$ the event that $\textsc{Hash}(S^n) = \textsc{Hash}(\hat{S}^n)$.
Then, we can write $\tilde{\Omega} = \Omega \land \Omega_{\textsc{ec}}^\mathrm{corr} = \Omega_{\hat{C}} \land \Omega_{\textsc{kv}} \land \Omega_{\textsc{ec}}^\mathrm{corr} = \Omega_{C} \land \Omega_{\textsc{ec}}^ \mathrm{corr}$. Hence, applying \cite[Lemma 10]{Tomamichel_2017} twice, we have that
\begin{equation}
    H_\mathrm{min}^{\varepsilon_s/4}(S^nC^n|I^n \Efin)_{\rho_{\land\tilde{\Omega}}}
    \geq H_\mathrm{min}^{\varepsilon_s/4}(S^nC^n|I^n \Efin)_{\rho_{\land\Omega_C}}
    \geq H_\mathrm{min}^{\varepsilon_s/4}(S^nC^n|I^n \Efin)_{\rho_{|\Omega_C}}.
\end{equation}
To apply the GEAT, consider the channels
\begin{equation}
    \cM_{SIC Q_E R_A R_B|R_A R_B Q_E}^{(i)} \coloneqq \textsc{ev} \circ \textsc{rk} \circ \textsc{pd} \circ (\cM_{A R_A|Q_A R_A}^{(i)} \otimes \cN_{B R_B|Q_B R_B}^{(i)}) \circ \cE_{Q_A Q_B Q_E|Q_E}^{(i)},
\end{equation}
for some $\cM^{(i)} \in \Xi^{(A)}$, $\cN^{(i)} \in \Xi^{(B)}$, and $\cE^{(i)} \in \Xi^{(E)}$. Then, by~(\ref{eq:finite_size_bound}), $\textsc{ca}$ is a valid min-tradeoff function for $\cM^{(i)}$ (discarding $C$ can only decrease the entropy) and by (\ref{eq:non_signalling}), the non-signalling condition of the GEAT is satisfied.
Furthermore, since we assumed that $\mathrm{Pr}[\tilde{\Omega}] > \varepsilon_{\mathrm{sec}}$, we have that $\mathrm{Pr}[\Omega_C] \geq \mathrm{Pr}[\tilde{\Omega}] > \varepsilon_{\mathrm{sec}} = \varepsilon_{\mathrm{snd}} - \varepsilon_{\textsc{KV}}$. We can now apply the GEAT to obtain
\begin{equation}
    H_\mathrm{min}^{\varepsilon_s/4}(S^nC^n|I^n \Efin)_{\rho_{|\Omega_C}} \geq
    n \min_{c^n \in \Omega_\textsc{pe}} \textsc{ca}(\mathrm{freq}_{c^n}) - \min_{\alpha \in (1, 3/2)} \kappa^{\textsc{geat}} \Big( n, \alpha, \frac{\varepsilon_s}{4}, \varepsilon_\mathrm{snd} - \varepsilon_\textsc{kv}, \textsc{ca} \Big),
\end{equation}
which implies the claimed bound.
\end{proof}

The following Lemma can be used to bound the $H_{\max}$ term appearing in Theorem~\ref{thm:sdi_qkd_security}. 
\begin{lem} \label{lem:hmax_bound}
    Assume that there is a subset $\cI' \subseteq \cI$ such that for all $i' \in \cI'$, $\textsc{ev}$ is entirely determined by $i'$, i.e., $\textsc{ev}(u, v, i') = \textsc{ev}(u', v', i')$ for all $u, v, u', v'$. Then
    \begin{equation}
        H_{\max}(C^n|I^n S^n \Efin)_{\rho\land\tilde{\Omega}} \leq \log |\cC| \max_{i^n: \rho_{\land\tilde{\Omega}}(i^n) > 0} |\{j : i_j \notin \cI'\}|.
    \end{equation}
    In particular, the above bound also holds for $H_{\max}^\varepsilon$ for any $\varepsilon \geq 0$.
\end{lem}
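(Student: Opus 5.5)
The plan is to use the structure that, whenever the public register $I_j$ takes a value in $\cI'$, the register $C_j$ is a deterministic function of $I_j$. Then only the rounds with $i_j \notin \cI'$ can contribute to the max-entropy, and each contributes at most $\log|\cC|$.

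First, since $I^n$ is classical, I will decompose the sub-normalized state as
\begin{equation*}
\rho_{C^n I^n S^n \Efin \land\tilde{\Omega}} = \sum_{i^n} \ketbra{i^n}_{I^n} \otimes \rho_{C^n S^n \Efin \land\tilde{\Omega}, i^n}.
\end{equation*}
I will then use the standard property that the max-entropy conditioned on a classical register is bounded by the worst case over its values. Concretely, $H_{\max}(C^n|I^n S^n \Efin)_\rho \leq \max_{i^n : \rho(i^n)>0} H_{\max}(C^n|S^n\Efin)_{\rho_{|i^n}}$. This follows from the fidelity expression: for cq states one has $2^{H_{\max}} = \sum_{i^n} p(i^n)\, 2^{H_{\max}(\cdot|\cdot)_{\rho_{|i^n}}}$ with the optimal $\sigma$ chosen block-diagonal in $I^n$. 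An average of this form is at most the maximum. The subnormalization only changes the overall weight $\sum_{i^n} p(i^n) \leq 1$, and this weight can only decrease the right-hand side, so it is harmless.

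Second, I fix $i^n$ and let $J = \{j : i_j \notin \cI'\}$. For $j \notin J$, the value $c_j = \textsc{ev}(\cdot,\cdot,i_j)$ is fixed by hypothesis. Hence conditioned on $I^n = i^n$, the register $C^n$ is supported on strings that agree with a fixed string outside $J$. Its support therefore has size at most $|\cC|^{|J|}$. I then apply the dimension bound $H_{\max}(X|E) \leq \log|\mathrm{supp}(X)|$. This holds because $H_{\max}$ is invariant under the isometric embedding of the classical register into the span of its support, combined with the standard bound $H_{\max}(X|E)\leq \log d_X$ for the embedded register. The result is $H_{\max}(C^n|S^n\Efin)_{\rho_{|i^n}} \leq |J|\log|\cC|$. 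Combining this with the first step gives the claim.

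For the smoothed version, I will note that $\varepsilon = 0$ is included in the ball, so $H_{\max}^\varepsilon \leq H_{\max}$ holds trivially.

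I expect the main obstacle to be the first step, namely handling the sub-normalized state and the classical-conditioning decomposition carefully, for instance checking that the optimal $\sigma$ may be taken block-diagonal via pinching and data processing of the fidelity. If this gets messy, the fallback is to bypass it entirely. I would bound the support of $C^n I^n$ jointly by embedding it into a register with a dimension count, after a controlled-unitary on $I^n$ that relabels the fixed coordinates. This map is invariant for the conditional max-entropy since it is an isometry controlled by the conditioning register.
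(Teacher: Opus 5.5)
Your proof is correct and follows essentially the same route as the paper: decompose over the classical register $I^n$ via $2^{H_{\max}} = \sum_{i^n}\rho_{\land\tilde{\Omega}}(i^n)\,2^{H_{\max}(\cdot|I^n=i^n)}$, bound this average by the maximum (using that the total weight is at most one), and then count the at most $|\{j : i_j \notin \cI'\}|$ undetermined coordinates of $C^n$. The one difference is that the paper first applies data processing to discard $S^n\Efin$, so the decomposition and the support bound act on a purely classical state; this spares you the block-diagonal-$\sigma$ and embedding arguments that you expected to be the main obstacle.
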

\begin{proof}

By data-processing, we have that
\begin{equation}
    H_{\max}(C^n|S^n I^n \Efin)_{\rho_{\land \tilde{\Omega}}} \leq H_{\max}(C^n|I^n)_{\rho_{\land\tilde{\Omega}}}.
\end{equation}
Let us denote by $\rho_{\land \tilde{\Omega}}(i^n)$ the probability of $i^n$ in the state $\rho_{\land \tilde{\Omega}}$. We can bound 
\begin{equation}
\begin{aligned}
    H_{\max}(C^n|I^n)_{\rho_{\land \tilde{\Omega}}}
    &= \log \sum_{i^n} 2^{H_{\max}(C^n|I^n=i^n)} \rho_{\land\tilde{\Omega}}(i^n) \\
    &\leq \max_{i^n : \rho_{\land\tilde{\Omega}}(i^n) > 0} H_{\max}(C^n|I=i^n) \\
    &\leq \max_{i^n : \rho_{\land\tilde{\Omega}}(i^n) > 0} \log \abs{\cC} \abs{\{j : i_j \notin \cI'\}} \\
    &= \log \abs{\cC} \max_{i^n : \rho_{\land\tilde{\Omega}}(i^n) > 0} \abs{\{j : i_j \notin \cI'\}}, \\
\end{aligned}
\end{equation}
where for the second inequality we used that for every $i^n$, only the $C_j$ where $i_j \notin \cI'$ are not fixed.
\end{proof}

The Lemma above can, for example, be used when the protocol only performs infrequent test rounds, say with probability $\gamma$. Since for non-test rounds, the register $C_i$ takes a fixed value, the lemma above gives 
    \begin{equation}
        H_{\max}(C^n|I^n S^n \Efin) \leq \log \abs{C} n (\gamma + \delta),
    \end{equation}
where $\delta$ is a small tolerance given in Lemma~\ref{lem:comp_pe}.

\subsubsection{Collective attack bound}
To apply the security proof in Theorem~\ref{thm:sdi_qkd_security}, we need to instantiate the protocol with a valid min-tradeoff function, i.e., an affine function $\textsc{ca}: \mathbb{P}_{\cC} \rightarrow \mathbb{R}$ satisfying
\begin{equation} \label{eq:min_tradeoff}
    \textsc{ca}\big(\nu^{\rho,\cE,\cM,\cN}_C\big) \, \leq \, H(S|I Q_E)_{\nu^{\rho,\cE,\cM,\cN}} \quad \forall \rho,\, \cE \in \Xi^{(E)},\, \cM \in \Xi^{(A)},\, \cN \in \Xi^{(B)}, \\
\end{equation}
where $\nu$ is as given in Theorem~\ref{thm:sdi_qkd_security}. Since $\textsc{ca}$ is affine, we can write it as
\begin{equation} \label{eq:min_tradeoff_expansion}
    \textsc{ca}(p) = h_\lambda + \lambda^T p,
\end{equation}
where $\lambda \in \mathbb{R}^{\abs{\cC}}$ is the gradient of $\textsc{ca}$. Hence, the constraint in (\ref{eq:min_tradeoff}) is satisfied when
\begin{equation} \label{eq:h_lambda}
    h_\lambda \leq \inf_{\nu} \big(H(S|I Q_E)_{\nu} - \lambda^T \nu_C\big).
\end{equation}
Therefore, to obtain a valid min-tradeoff function we can pick an arbitrary $\lambda \in \mathbb{R}^{\abs{\cC}}$, compute $h_\lambda$ according to (\ref{eq:h_lambda}), and then obtain the min-tradeoff function $\textsc{ca}$ from (\ref{eq:min_tradeoff_expansion}). Since this procedure produces a valid min-tradeoff function for any choice of $\lambda$, we can optimize $\lambda$ numerically. Alternatively, one can pick the crude choice $\lambda = \nabla H(S|IV) |_{{p^{\mathrm{exp}}}}$, where $p^\mathrm{exp}$ are the expected statistics, and the gradient can be obtained from the dual optimization problem as outlined in \cite[Appendix E]{brown2024device}.

Using the min-tradeoff function constructed above, we can prove the following corollary which shows that the asymptotic key rate is given by the optimization problem in \eqref{eq:asymptotic_key_rate}.
\begin{cor}[Asymptotic key rate] \label{cor:asymptotic_key_rate}
    Let $r_\infty \coloneqq \lim_{n \rightarrow \infty} \frac{l}{n}$ be the asymptotic key rate of the protocol. Assume that the honest implementation achieves the distribution $\nu^\mathrm{hon}_C$ on the testing registers. Then
    \begin{equation}
        r_\infty = \inf_{\nu} (H(S|I Q_E)_{\nu} - H(S|IB)_{\nu}),
    \end{equation}
    where the infimum is over all states $\nu_{S I C B Q_E}$ of the form given in Theorem~\ref{thm:sdi_qkd_security} subject to the additional constraint that $\nu_C = \nu^{\mathrm{hon}}_C$.
\end{cor}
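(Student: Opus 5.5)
We will derive the corollary from Theorem~\ref{thm:sdi_qkd_security} by dividing the admissible key length $l$ in \eqref{eq:finite_size_bound} by $n$ and letting $n\to\infty$. This gives the achievability direction ($\geq$). Strictly speaking, the statement as an equality also requires a converse. We will read the corollary as the paper intends, namely that $r_\infty$ is the rate certified by the protocol, and we will show that the certified rate reaches the infimum.

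\emph{Step 1: completeness parameters.}
We choose the completeness parameters so that the protocol behaves well honestly.
\begin{itemize}
\item By Lemma~\ref{lem:comp_ec}, we set $\lambda_\textsc{ec} = nH(S|IB)_{\nu^{\mathrm{hon}}} + O(\sqrt n)$.
\item By Lemma~\ref{lem:comp_pe}, we take $\Omega_\textsc{pe}$ to be a $\delta_n$-box around $\nu^\mathrm{hon}_C$ with $\delta_n = O(\sqrt{\log(1/\varepsilon)/n}) \to 0$.
\item All $\varepsilon$'s are held fixed.
\end{itemize}
With these choices, the remaining terms behave as follows.
\begin{itemize}
\item $\kappa^{\textsc{geat}}$: choosing $\alpha-1 \sim 1/\sqrt n$ makes it $O(\sqrt n)$, provided $V$, $\mathrm{Var}(f)$, $\mathrm{Max}(f)$ and $\mathrm{Min}_\Sigma(f)$ are bounded independently of $n$. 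This holds because the min-tradeoff function $\textsc{ca}$ is fixed (independent of $n$) once $\lambda$ is fixed.
\item The terms $\log(1/\varepsilon)$ and $\lceil\log\tfrac{1}{\varepsilon_\textsc{kv}}\rceil$ are $O(1)$.
\item The $H_{\max}$ term: we use infrequent testing with probability $\gamma_n \to 0$ (e.g.\ $\gamma_n = n^{-1/4}$). Lemma~\ref{lem:hmax_bound} then bounds it by $n(\gamma_n + \delta_n)\log|\cC| = o(n)$. A test fraction tending to zero requires rescaling $\lambda$ in $\textsc{ca}$ by $1/\gamma_n$, which spoils the boundedness needed for $\kappa^{\textsc{geat}}$. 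To handle this, we will first prove the result for fixed $\gamma$ and then let $\gamma\to 0$ after $n\to\infty$, which suffices for the limit.
\end{itemize}
Together these give
\[
\liminf_n \frac{l}{n} \;\geq\; \textsc{ca}(\nu^\mathrm{hon}_C) - H(S|IB)_{\nu^\mathrm{hon}} ,
\]
using continuity of the affine function $\textsc{ca}$ on the shrinking box $\Omega_\textsc{pe}$.

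\emph{Step 2: optimizing over the min-tradeoff function.}
From \eqref{eq:h_lambda}, for each $\lambda$ we have
\[
\textsc{ca}_\lambda(\nu^\mathrm{hon}_C) = \inf_\nu \bigl(H(S|IQ_E)_\nu - \lambda^T(\nu_C - \nu^\mathrm{hon}_C)\bigr).
\]
Taking the supremum over $\lambda$ is a Lagrangian dual of the problem $g(p) := \inf\{H(S|IQ_E)_\nu : \nu_C = p\}$ evaluated at $p = \nu^\mathrm{hon}_C$. The function $g$ is convex in $p$: mixing attacks corresponds to adding a classical flag given to Eve, which keeps $\nu_C$ affine, and conditional entropy satisfies $H(S|IQ_E F) \le H(S|IQ_E)$ where $F$ is the flag. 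Convex duality (Fenchel--Moreau) then gives $\sup_\lambda \textsc{ca}_\lambda(p) = \bar g(p)$, the lower semicontinuous hull of $g$. This equals $g(\nu^\mathrm{hon}_C)$ when $\nu^\mathrm{hon}_C$ lies in the relative interior of the feasible set, or by continuity of $g$ otherwise.

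\emph{Step 3: matching the error-correction term.}
The error-correction term $H(S|IB)$ depends only on the classical statistics of $S$, $I$ and $B$. Under the constraint $\nu_C = \nu^\mathrm{hon}_C$, provided $C$ determines the relevant joint statistics as in the protocols considered, this equals $H(S|IB)_{\nu^\mathrm{hon}}$. It can therefore be moved inside the infimum, giving the stated formula.

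The main obstacle is Step 2: justifying that the supremum over affine min-tradeoff functions recovers the constrained infimum exactly (no duality gap, boundary issues), together with the interchange of the limits $n\to\infty$ and $\gamma\to0$ while keeping the GEAT second-order terms sublinear.
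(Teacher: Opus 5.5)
Your proposal follows the same route as the paper's proof: divide the bound of Theorem~\ref{thm:sdi_qkd_security} by $n$, take $\lambda_\textsc{ec}$ from Lemma~\ref{lem:comp_ec} and the shrinking acceptance box from Lemma~\ref{lem:comp_pe}, set $\alpha = 1 + 1/\sqrt{n}$, and choose $\textsc{ca}$ so that $\textsc{ca}(p^{\mathrm{hon}})$ equals the constrained infimum of $H(S|IQ_E)$. The paper does not justify three points you treat explicitly: it simply asserts that such a $\textsc{ca}$ exists (your Fenchel--Moreau argument, with its caveat at boundary points), it absorbs the $H_{\max}$ term into $O(\sqrt{n})$ without comment (your vanishing-test-fraction limit), and it places $H(S|IB)$ inside the infimum without comment (your proviso that $C$ fixes the error-correction statistics) --- these are added rigor on the same argument, not a different one.
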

\begin{proof}
By the bound from Lemma~\ref{lem:comp_ec}, we have that
\begin{equation}
    \lambda_\textsc{ec} = n H(S|IB)_{p^{\mathrm{hon}}} + O(\sqrt{n}).
\end{equation}
Furthermore, by Lemma~\ref{lem:comp_pe} and linearity of $\textsc{ca}$, we have that
\begin{equation}
    \min_{c^n \in \Omega_\textsc{pe}} \textsc{ca}(\mathrm{freq}_{c^n}) \geq \textsc{ca}(p^\mathrm{hon}) - O(1/\sqrt{n})
\end{equation}
Moreover, we can choose the min-tradeoff function $\textsc{ca}$ such that
\begin{equation}
     \textsc{ca}(p^\mathrm{hon}) = \inf_{\nu} H(S|I Q_E)_{\nu},
\end{equation}
where the infimum runs over all states $\nu$ such that $\nu_C = p^{\mathrm{hon}}$.
By picking $\alpha = 1 + 1/\sqrt{n}$ in Theorem~\ref{thm:sdi_qkd_security}, we can compute
\begin{equation}
\begin{aligned}
    r_\infty
    =& \lim_{n \rightarrow \infty} \frac{l}{n} \\
    =& \lim_{n \rightarrow \infty} \frac{1}{n}(n \textsc{ca}(p^\mathrm{hon}) - \lambda_\textsc{ec} - O(\sqrt{n})) \\
    =& \lim_{n \rightarrow \infty} \left(\textsc{ca}(p^\mathrm{hon}) - \frac{\lambda_\textsc{ec}}{n} - O(1/\sqrt{n}) \right) \\
    =& \inf_\nu (H(S|I Q_E)_\nu - H(S|IB)_\nu)
\end{aligned}
\end{equation}
as claimed.
\end{proof}

\section{Relaxation of the conditional von Neumann entropy}
\label{sec:vonNeumann_relaxation}

As mentioned in Section \ref{sec:methods}, two ways have been proposed in the literature to approximate the conditional von Neumann entropy by polynomials such that optimization problems like \eqref{eq:key-rate-opti} can be solved using methods from non-commutative polynomial optimization.

The first one (called BFF in Section \ref{sec:comparison}) can be found in \cite{brown2024device}:

\begin{thm} \label{thm:bff-relax}
    Let $\rho_{ABQ_E} = \sum_{a,b} \ketbra{a,b}_{AB} \otimes \rho_{Q_E \land a, b}$ be a (not necessarily normalized) quantum state which is classical on $A$ and $B$. Then
    \begin{equation}
    \begin{aligned}
        &H(A|BQ_E)_\rho \\
        \geq & \inf_{Z_{Q_E|i,a,b}} \sum_{i = 1}^{n} \frac{w_i}{t_i \ln 2} \Big\{ \Tr[\rho] + \sum_{a,b}
        \begin{aligned}[t]
            &\Tr[\rho_{Q_E \land a, b}\left( Z_{Q_E|i,a,b} + Z^*_{Q_E|i,a,b} + (1 - t_i) Z^*_{Q_E|i,a,b}Z_{Q_E|i,a,b} \right)] \\
            &+ t_i \Tr[\rho_{Q_E \land b} Z_{Q_E|i,a,b}Z^*_{Q_E|i,a,b}] \Big\},
        \end{aligned}
    \end{aligned}
    \end{equation}
    where $\rho_{Q_E \land b} = \sum_{a} \rho_{Q_E \land a, b}$ and $w_i$, $t_i$ are the weights and nodes of an $n$-point Gauss-Radau quadrature on $[0, 1]$ with endpoint $t_n = 1$. Here, the infimum is taken over $Z_{Q_E|i,a,b}$ such that $\|Z_{Q_E|i,a,b}\|_\infty \leq \alpha_i$, where $\alpha_i = \frac{3}{2}\max\{t_i^{-1}, (1-t_i)^{-1}\}$.
\end{thm}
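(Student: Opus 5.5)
We follow the route of \cite{brown2024device}. First we write the conditional entropy as a relative entropy. Since $\rho$ is classical on $A$ and $B$,
\begin{equation*}
H(A|BQ_E)_\rho = -D\big(\rho_{ABQ_E}\,\big\|\,\mathds{1}_A\otimes\rho_{BQ_E}\big) = -\sum_{a,b} D\big(\rho_{Q_E\land a,b}\,\big\|\,\rho_{Q_E\land b}\big),
\end{equation*}
with $D$ the (unnormalized) Umegaki relative entropy. So it suffices to bound $-D(\rho\|\sigma)$ from below for each pair $(a,b)$, with $\rho=\rho_{Q_E\land a,b}\le\sigma=\rho_{Q_E\land b}$. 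We do this through the integral representation of the logarithm, $\log x = \frac{1}{\ln 2}\int_0^1 \frac{x-1}{t(x-1)+1}\,dt$. In operator form (via the relative modular operator, or Frenkel's formula), this gives
\begin{equation*}
-D(\rho\|\sigma) = \frac{1}{\ln 2}\int_0^1 \frac{1}{t}\, f_t(\rho,\sigma)\,dt .
\end{equation*}
Here $f_t$ is a quantity that admits a variational characterization. Concretely, for each fixed $t\in(0,1]$ we use the identity
\begin{equation*}
\inf_{Z}\Big\{\Tr[\rho(Z+Z^*+(1-t)Z^*Z)]+t\Tr[\sigma ZZ^*]\Big\} = -\Tr\big[\rho\,\big((1-t)\rho+t\sigma\big)^{-1}\rho\big]\ \text{(in the modular sense)},
\end{equation*}
which is obtained by completing the square in $Z$. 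Adding $\Tr[\rho]$ then gives exactly the $t$-integrand of $-D$.

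Next comes the discretization step. The Gauss--Radau quadrature with endpoint $t_n=1$ applied to the integrand in $t$ yields a one-sided bound. The relevant function is $g_t(x) = \frac{x-1}{t(x-1)+1}\cdot\frac{1}{t}$, viewed as a function of $t$ for fixed $x>0$. Its $(2n-1)$-th derivative in $t$ has a fixed sign, so the Radau remainder has a definite sign. We check that this sign makes the quadrature an \emph{upper} bound on $\log x$, and hence a lower bound on $-D$. The bound then transfers to operators through the spectral decomposition of the relative modular operator $\Delta_{\sigma,\rho}$, since the quadrature inequality holds pointwise on its spectrum. Combining the quadrature with the per-node variational formula gives the claimed expression. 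Finally we sum over $(a,b)$, noting that $\sum_{a,b}\Tr[\rho_{Q_E\land a,b}]=\Tr[\rho]$, and take the infimum over $Z$ independently for each $(i,a,b)$.

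The remaining and main obstacle is the norm restriction $\|Z_{Q_E|i,a,b}\|_\infty\le\alpha_i$. The unconstrained minimizer is $Z^\star = -\rho\,((1-t)\rho+t\sigma)^{-1}$ (suitably interpreted). Because $0\le\rho\le\sigma$, we have $(1-t)\rho+t\sigma\ge t\rho$, and also $\ge(1-t)\rho$ via $\sigma\ge\rho$. This suggests $\|Z^\star\|\le\min\{t^{-1},(1-t)^{-1}\}$ only heuristically, since $\rho$ and $\sigma$ need not commute. We would therefore bound $Z^\star$ through the modular operator, where $\Delta_{\sigma,\rho}$ has spectrum in which the relevant function is bounded by $\max\{t^{-1},(1-t)^{-1}\}$. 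The factor $3/2$ then gives slack that accommodates non-invertibility, handled by restricting to supports and a limiting argument. Restricting the infimum to a set that still contains (an approximation of) the minimizer keeps the inequality valid. Everything else amounts to bookkeeping of constants and factors of $\ln 2$.
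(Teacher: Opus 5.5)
The paper gives no proof of this statement; it quotes it from \cite{brown2024device}. Your outline follows that route: cq decomposition of $-D$, integral representation of the logarithm, Gauss--Radau quadrature with fixed node $t_n=1$, the per-node formula obtained by completing the square for the superoperator $(1-t)R_\rho+tL_\sigma$, and transfer through $\Delta_{\sigma,\rho}$. This part is sound for the bound \emph{without} the norm restriction, up to a slip in the discretization.

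The integrand is $f_x(t)=\frac{x-1}{t(x-1)+1}=\frac1t\bigl(1-\frac{1}{(1-t)+tx}\bigr)$, not $f_x(t)/t$; your $g_t$ is not integrable at $t=0$. Since $f_x^{(2n-1)}=-(2n-1)!\,(x-1)^{2n}/((1-t)+tx)^{2n}\le 0$, a Radau rule whose fixed node is the right endpoint has nonnegative remainder. So the rule is a \emph{lower} bound on $\ln x$; for $n=1$ it reads $1-1/x\le\ln x$. That is the direction you need, because in your setup $x$ runs over the spectrum of $\Delta_{\sigma,\rho}$ and $-D$ is the expectation of $\log\Delta_{\sigma,\rho}$ in the vector $\rho^{1/2}$. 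An upper bound, as you state it, would point the wrong way.

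The genuine gap is the norm restriction. Restricting the domain can only raise an infimum. So for $t_i<1$ you must show that the minimizer $Z^\star$, the solution of $t\sigma Z+(1-t)Z\rho=-\rho$, satisfies $\|Z^\star\|_\infty\le\alpha_i$; for $t_n=1$ there is nothing to show, since $\alpha_n=\infty$.

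Your mechanism does not deliver this. The functional calculus of $\Delta_{\sigma,\rho}$ lives on Hilbert--Schmidt space, so bounding $1/((1-t)+tx)$ on its spectrum only controls $\|Z^\star\|_2$, by $\|\Pi_\rho\|_2/(1-t)$, which grows with the rank. In eigenbases $\{e_k\}$ of $\sigma$ and $\{f_j\}$ of $\rho$, $Z^\star$ is (up to sign) the Schur product of the matrix $\bigl(r_j/(ts_k+(1-t)r_j)\bigr)_{k,j}$ (zero where $r_j=0$) with the unitary $(\langle e_k|f_j\rangle)_{k,j}$. The operator norm of such a product is not bounded by the supremum of the multiplier. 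Your argument also never uses $\rho\le\sigma$, and without that hypothesis the operator norm of $Z^\star$ is not uniformly bounded.

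A working argument is as follows. Write $Z^\star=-\int_0^\infty e^{-st\sigma}\rho\,e^{-s(1-t)\rho}\,ds$, which converges because $\operatorname{supp}\rho\subseteq\operatorname{supp}\sigma$. For unit vectors $u,v$, Cauchy--Schwarz gives
\[
|\langle u,Z^\star v\rangle|^2\le\int_0^\infty\langle u,e^{-st\sigma}\rho\,e^{-st\sigma}u\rangle\,ds\int_0^\infty\langle v,\rho\,e^{-2s(1-t)\rho}v\rangle\,ds\le\frac{1}{2t}\cdot\frac{1}{2(1-t)},
\]
where $\rho\le\sigma$ is used in the first factor. Hence $\|Z^\star\|_\infty\le\frac{1}{2\sqrt{t(1-t)}}\le\frac12\max\{t^{-1},(1-t)^{-1}\}$, comfortably inside $\alpha_i$. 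No limiting argument over supports is needed.
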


The second one (called KS in Section \ref{sec:comparison}) can be found in \cite{kossmann2025reliableentropyestimationobserved,kossmann2025optimisingrelativeentropysemidefinite}:
\begin{thm} \label{thm:ks_relax}
    Let $\rho_{AB Q_E} = \sum_{a,b} \ketbra{a,b}_{AB} \otimes \rho_{Q_E \land a, b}$ be a (not necessarily normalized) quantum state which is classical on $A$ and $B$. Then
    \begin{equation}
    \begin{aligned}
        &H(A|BQ_E)_\rho \\
        \geq&\ \frac{1}{\ln 2}\left( \Tr[\rho](\abs{A} - 1) + \inf \sum_{k=0}^{r} \sum_{a, b} \Tr[-\left(\alpha_{k} \rho_{Q_E \land a, b} + \beta_{k} \rho_{Q_E \land b} \right) P_{Q_E|k,a,b}] \right),
    \end{aligned}
    \end{equation}
    where the infimum runs over Hermitian projectors $P_{Q_E|k,a,b}$ and $\rho_{Q_E \land b} = \sum_{a} \rho_{Q_E \land a, b}$. Here $r \in \mathbb{N}$ and the real coefficients $\{\alpha_k\}_{k=0}^{r}$ and $\{\beta_k\}_{k=0}^{r}$ are the parameters of the $(r+1)$-term approximation of the logarithm underlying the bound; their explicit values are given in \cite{kossmann2025reliableentropyestimationobserved,kossmann2025optimisingrelativeentropysemidefinite}, and the bound becomes tighter as $r$ increases.
\end{thm}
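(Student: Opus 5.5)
The plan is to go through Frenkel's integral representation of the relative entropy and then discretise the integral so that the error is one-sided. The discretised integrand is a maximisation over projectors, which gives the stated infimum. First I would write the conditional entropy as a sum of relative entropies. With the convention $H(A|BQ_E)_\rho=-D(\rho_{ABQ_E},\mathds{1}_A\otimes\rho_{BQ_E})$ and classicality on $AB$,
\begin{equation*}
H(A|BQ_E)_\rho=-\sum_{a,b}D(\rho_{Q_E\land a,b},\rho_{Q_E\land b}).
\end{equation*}
It then suffices to find an upper bound on each $D(\rho_{Q_E\land a,b},\rho_{Q_E\land b})$ that is affine in $(\rho_{Q_E\land a,b},\rho_{Q_E\land b})$ up to a supremum over projectors. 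The same coefficients must work for all $(a,b)$.

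Next I would invoke Frenkel's representation \cite{frenkel2023integral}, in nats:
\begin{equation*}
D(\rho,\sigma)=\Tr[\rho-\sigma]+\int_{\mathbb{R}}\frac{dt}{|t|(t-1)^2}\,\Tr\big[((1-t)\rho+t\sigma)_-\big],
\end{equation*}
where $X_-$ is the negative part. For $t\in[0,1]$ the operator is positive, so only $t<0$ and $t>1$ contribute. I would then use the variational formula $\Tr[X_-]=\sup_P\Tr[-XP]$, taken over Hermitian projectors $P$. Summing the trace terms over $a,b$ gives $\Tr[\rho]-|A|\Tr[\rho]$. After the overall minus sign this produces the term $\Tr[\rho](|A|-1)$, and the factor $1/\ln 2$ converts nats to bits.

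The key step is to replace the integral by a finite sum that upper bounds it. The map $t\mapsto\Tr[((1-t)\rho+t\sigma)_-]$ is convex, because it is the negative part of an affine operator family, i.e.\ a supremum of affine functions of $t$. I would change variables to map the two half-lines onto bounded intervals, for example $s=1/t$, so that the tails become finite. On each interval I would then use the chord bound for convex functions: a convex function lies below the linear interpolation between nodes. Integrating the chords against the positive weight gives
\begin{equation*}
\int\cdots\le\sum_{k=0}^{r}c_k\,\Tr[(X(t_k))_-]
\end{equation*}
with $c_k\ge 0$. Writing $c_kX(t_k)=\alpha_k\rho+\beta_k\sigma$ and pulling the minus sign through the supremum, the sup becomes the stated infimum over the $P_{Q_E|k,a,b}$. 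Since the coefficients $(\alpha_k,\beta_k)$ do not depend on $(a,b)$, this gives exactly the displayed bound, and adding nodes refines the chords and tightens it.

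I expect the main obstacle to be making this quadrature bound rigorous at the singular points of the weight $1/(|t|(t-1)^2)$, namely $t\to0^-$, $t\to1^+$ and $t\to\pm\infty$. The integrand vanishes at $t=0$ and $t=1$, since $X$ is positive on $[0,1]$, so near those points convexity alone does not obviously control the weighted chord error. I would first try an endpoint node at each of $t=0$ and $t=1$, where the negative part is zero, and combine it with convexity to show the integrand is at most linear in the distance to the node. If that is not enough to make the weighted chord integral finite, I would split off small neighbourhoods there and bound them by $\Tr[\rho]$-proportional terms absorbed into the coefficients. I would handle the tails via the $s=1/t$ substitution, where the integrand behaves like $\Tr[(\sigma-\rho)_-]$-type limits.
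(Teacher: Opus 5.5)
The paper gives no proof of this statement and defers to the KS references. Your architecture matches that route: Frenkel's formula, a one-sided discretisation of the integral exploiting convexity of $f(t)=\Tr[((1-t)\rho+t\sigma)_-]$, and $\Tr[X_-]=\sup_P\Tr[-XP]$ over projectors. Several pieces are fine: the reduction $H(A|BQ_E)_\rho=-\sum_{a,b}D(\rho_{Q_E\land a,b}\|\rho_{Q_E\land b})$, the $\Tr[\rho](|A|-1)$ term, the sign convention (up to relabelling the coefficients), and your treatment of $t\to 0^-$, where a simple pole meets linear vanishing.

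The step you flagged at $t\to 1^+$ is a genuine gap, and neither of your proposed fixes works there. The weight $1/(|t|(t-1)^2)$ has a double pole at $t=1$, while $f$ vanishes there only linearly. A chord from a node at $t=1$ therefore gives the coefficient $\frac{1}{t_1-1}\int_1^{t_1}\frac{dt}{t(t-1)}=\infty$ on $f(t_1)$.

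Bounding that neighbourhood by a multiple of $\Tr[\rho]$ is also impossible for general pairs. Take $\rho=|+\rangle\langle+|$ and $\sigma=\mathrm{diag}(1,\epsilon)$. Then $D(\rho\|\sigma)=\tfrac12\ln(1/\epsilon)$, while the $t<0$ part stays below $1$, so the $t>1$ contribution blows up at bounded traces. More bluntly, no finite sum $\sum_k\Tr[(\alpha_k\rho+\beta_k\sigma)_-]$ can upper-bound $D(\rho\|\sigma)$ for all pairs, since each term is at most $|\alpha_k|\Tr[\rho]+|\beta_k|\Tr[\sigma]$.

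The missing idea is to use the structure of the pairs that actually occur: $\rho_{Q_E\land a,b}\le\rho_{Q_E\land b}$. Then $(1-t)\rho_{Q_E\land a,b}+t\rho_{Q_E\land b}=\rho_{Q_E\land a,b}+t(\rho_{Q_E\land b}-\rho_{Q_E\land a,b})\ge 0$ for all $t\ge 0$. So the whole half-line $t>1$, including both the $1^+$ singularity and the $+\infty$ tail, contributes nothing.

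On $t<0$ your plan goes through. The cleanest version is the substitution $t=-s/(1-s)$, which turns the remaining integral into $\int_0^1\frac{ds}{s}\Tr[(\rho_{Q_E\land a,b}-s\rho_{Q_E\land b})_-]$. Here the numerator $g(s)$ is convex in $s$ and vanishes at $s=0$. Chords on a grid $0=s_0<\dots<s_r=1$ then give finite nonnegative weights $c_k$; the first interval contributes exactly $g(s_1)$. In the statement's convention this yields $\alpha_k=-c_k$ and $\beta_k=c_ks_k$, and refining nested grids tightens the bound. If you instead use $s=1/t$ for the tail, apply the chord bound to $|s|\,f(1/s)$, which is again the negative part of an affine family and hence convex, not to $f(1/s)$ itself.
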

Both the KS and BFF approximation of the conditional von Neumann entropy can be combined with either the NPA-AC (Section \ref{subsec:npa_hierarchy_with_matrix_constraints}) or NPA-MP (Section \ref{sec:matrix-NPA-methods}) hierarchy.  To illustrate this point, for some of these combinations we will write down the non-commutative polynomial optimization problems approximating $H(A\vert X = x_0,Q_E)_\sigma$, where $Q_E$ is the quantum system held by Eve
and $\sigma_{A Q_E|X=x_0} = \sum_{a} \ketbra{a}_{A} \otimes \tr_{Q_Q}[M_{Q_A}(a|x_0) \rho_{Q_A Q_E}]$ is the state after measuring $\rho$ using $M(a|x_0)$.
In the remaining part of this section, we will first write down the BFF approximation + NPA-MP type constraints and then the KS approximation + NPA-AC type constraints. The remaining combinations, KS approximation + NPA-MP type constraints and BFF approximation + NPA-AC type constraints, can easily be obtained from Theorems \ref{thm:bff-relax} and \ref{thm:ks_relax}. While the resulting problems look quite different, they all approximate \eqref{eq:key-rate-opti}.

First, we will use the BFF approximation in Theorem \ref{thm:bff-relax} to bring \eqref{eq:key-rate-opti} into a form that can subsequently be tackled with the NPA-MP hierarchy. Thus, $H(A\vert X = x_0,Q_E)_\sigma$ is lower bounded by

\begin{align}
    \inf & \sum_{i=1}^{n} \frac{w_i}{t_i \ln{2}} \Big( 1 + \sum_{a \in \mathcal A} \bra{\psi_{Q_AQ_BQ_E}}(M(a|x_0) \otimes (Z_{a,i} + Z_{a,i}^\ast  &\nonumber \\ & + (1-t_i)Z_{a,i}^\ast Z_{a,i}) + t_i(\mathds{1}_{Q_A} \otimes Z_{a,i} Z_{a,i}^\ast)) \ket{\psi_{Q_AQ_BQ_E}} \Big) &\\
    \mathrm{s.t.} &  \quad \sum_{abxy} c_{abxyi} \bra{\psi_{Q_AQ_BQ_E}}M(a|x)\otimes N(b|y) \ket{\psi_{Q_AQ_BQ_E}} \geq q_i & \forall i \in \mathcal C  \nonumber\\
    & \quad \sum_b N(b|y) = \mathds{1}_{Q_BQ_E} & \forall y \in \mathcal Y \nonumber \\
      & \quad N(b|y) \geq 0 & \forall b \in \mathcal B, \forall y \in \mathcal Y \nonumber\\
    & \quad Z_{a,i} Z^\ast_{a,i} \leq \alpha_i^2 \mathds{1}_{Q_BQ_E}  & \forall a \in \mathcal A, \forall i \in [n] \nonumber\\
     & \quad Z^\ast_{a,i} Z_{a,i}  \leq \alpha_i^2 \mathds{1}_{Q_BQ_E} & \forall a \in \mathcal A, \forall i \in [n] \nonumber\\
     & \quad [ Z_{a,i}, N(b|y)] = 0 & \forall a, b, y, \forall i \in [n]\nonumber\\
    & \braket{\psi_{Q_AQ_BQ_E}|\psi_{Q_AQ_BQ_E}}=1 &\nonumber
\end{align}
Here, we optimize over $Z_{a,i}$, $\ket{\psi}$, $N(b|y)$, and the size of $Q_BQ_E$ (the $Q_A$ system and the measurements $M(a|x_0)$ are fixed). We only impose that operators on $Q_B$ and operators on $Q_E$ commute instead of a tensor product structure.

Next, we will use the KS approximation in Theorem \ref{thm:ks_relax} to bring \eqref{eq:key-rate-opti} into a from that can subsequently be tackled with the NPA-AC hierarchy. $H(A\vert X = x_0,Q_E)_\sigma$ is then lower bounded by
\begin{equation}\label{eq:optimization_problem_eta}
\begin{aligned}
    \inf \ \frac{1}{\ln 2}&\left(\vert A\vert - 1
    + \sum_{k=0}^r \sum_{a \in \mathcal A}
       \Big\langle\psi_{Q_A Q_B Q_E}\Big\vert
          -\alpha_k
            \left(\sum_{s,t=1}^m \sum_{b\in \mathcal B}
               \lambda_{s,t}^{(a\vert x_0)} \eta_{s,t,b\vert \tilde{y}}\right) P_k^{(a)} 
          - \beta_k P_k^{(a)}
       \Big\vert\psi_{Q_A Q_B Q_E}\Big\rangle\right) \\
  \mathrm{s.t.}\quad &\sum_{a,b,x,y} c_{abxyi}
      \Big\langle \psi_{Q_A Q_B Q_E}\Big\vert
         \sum_{s,t=1}^m \lambda_{s,t}^{(a\vert x)} \eta_{s,t,b\vert y}
      \Big\vert\psi_{Q_A Q_B Q_E}\Big\rangle
      \geq q_i,\quad  i \in \mathcal C, \\
   &(P_k^{(a)})^2 = P_k^{(a)}, \quad (P_k^{(a)})^\ast = P_k^{(a)},\quad 0\leq k \leq r,\ a \in \mathcal A, \\
   &\eta_{i,j,b\vert y} P_k^{(a)} = P_k^{(a)} \eta_{i,j,b\vert y}
      \quad\forall\, i,j,\ b\in \mathcal B,\ y\in \mathcal Y,\ 0\leq k\leq r,\ a\in \mathcal A, \\
   &\text{and the generators \(\eta_{i,j,b\vert y}\) satisfy the relations}
\end{aligned}
\end{equation}
\begin{align}
\tag{R1}
    &\eta_{i,j,b\vert y}^\ast = \eta_{j,i,b\vert y}
      &&\forall\, i,j,\ b\in \mathcal B,\ y\in \mathcal Y,\\[0.3em]
\tag{R2}
    &\eta_{i,j,b\vert y}\,\eta_{k,\ell,b\vert y}
      = \delta_{j,k}\,\eta_{i,\ell,b\vert y}
      &&\forall\, i,j,k,\ell,\ b\in \mathcal B,\ y\in \mathcal  Y,\\[0.3em]
\tag{R3}
    &\eta_{i,j,b\vert y}\,\eta_{k,\ell,b'\vert y} = 0
      &&\forall\, i,j,k,\ell,\ y\in \mathcal Y,\ b\neq b',\\[0.3em]
\tag{R4}
    &\sum_{i=1}^m \sum_{b\in \mathcal B} \eta_{i,i,b\vert y} = \mathds{1}
      &&\forall\, y\in \mathcal Y,\\[0.3em]
\tag{R5}
    &\sum_{b\in \mathcal  B} \eta_{i,j,b\vert y}
      = \sum_{b\in \mathcal B} \eta_{i,j,b\vert y'}
      &&\forall\, i,j,\ \forall\, y,y'\in \mathcal Y.
\end{align}
Observe that we choose a fixed but arbitrary $\tilde{y} \in \mathcal{Y}$ in the program above.  As mentioned initially, we will not give the KS approximation + NPA-MP type constraints and the BFF approximation + NPA-AC type constraints for \eqref{eq:key-rate-opti} to avoid repetition.

\section{NPA hierarchy with matrix-valued polynomials} \label{sec:matrix-NPA}

In this section, we prove convergence of the hierarchy of SDPs presented in Section~\ref{sec:matrix-NPA-methods}.

\subsection{Proof of convergence}

In this section, we will generalize the SDP hierarchies in \cite{johnston2016extended, escola2025lossy}, which generalize the NPA hierarchy \cite{pironio2010convergent}. The existence of such a generalization was already pointed out in \cite{pironio2010convergent}, because the proof should follow from the Positivstellensatz in \cite{helton2004positivstellensatz}. Here, we will give an explicit proof for the sake of accessibility to researchers in quantum cryptography. We will follow very closely the proof strategy in \cite{pironio2010convergent} and prove convergence directly. We want to solve the following problem:
We assume that $\mathcal Q$, $\mathcal R$ are finite sets of indices, that $m \in \mathbb N$ and $m_j \in \mathbb N$ for all $j \in \mathcal Q$ are fixed dimensions, and that $p$, $r^{(\ell)}$ are Hermitian $\mathcal B(\mathbb C^m)$-valued polynomials in $2g$ non-commutative variables $x\coloneq (x_1, \ldots, x_g, x_1^\ast, \ldots, x_g)$, $g \in \mathbb N$. Being Hermitian in particular implies that $p(X)$, $r^{(\ell)}(X)$ are self-adjoint for any self-adjoint $X$ and any $\ell \in \mathcal R$.
The $q^{(j)}$ are Hermitian $\mathcal B(\mathbb C^{m_j})$-valued polynomials in $2g$ non-commutative variables $x$ for any $j \in \mathcal Q$.

Then, the optimization problem is the following:
\begin{align*}
    \mathrm{minimize} \quad & \quad \bra{\psi} p(X) \ket{\psi} \\
    \mathrm{such~that} \quad & \quad q^{(j)}(X) \geq 0 \qquad \forall j \in \mathcal Q \\
    \quad & \quad \bra{\psi}r^{(\ell)}(X) \ket{\psi} \geq 0  \qquad \forall \ell \in \mathcal R \tag{\textbf{P}} \label{eq:problem-P} \\
    \quad & \quad \langle \psi | \psi \rangle = 1 \\
    \quad & \quad \ket{\psi} \in \mathbb C^m \otimes \mathcal H \\
    \quad & \quad X \in \mathcal B(\mathcal H)^g \\
    \quad & \quad \mathcal H \mathrm{~Hilbert~space}
\end{align*}
Note that the Hilbert space is not restricted to being finite-dimensional.

\begin{remark}
    Note that constraints of the form $a(X)\ket{\psi} = 0$ with $a \in \mathcal B(\mathbb C^m, \mathbb C^n)[x,x^\ast]$ can be imposed as inequality constraints by imposing
    \begin{equation}
        -\bra{\psi} a(X)^\ast a(X) \ket{\psi} \geq 0 \,.
    \end{equation}
    Equality constraints of the form $q^{(j)}(X) = 0$ and $\bra{\psi} r^{(\ell)} \ket{\psi} = 0$ can also be imposed as two inequalities. We refer to \cite[Section  3.5]{pironio2010convergent} for a discussion of these questions and of how equality constraints can be imposed best.
\end{remark}

 Let $\mathcal B(\mathbb C^m)[x, x^\ast] = \mathcal B(\mathbb C^m) \otimes \mathbb C[x,x^\ast]$ be the algebra of of polynomials with coefficients in $\mathcal B(\mathbb C^m)$ and non-commutative variables $x = (x_1, \ldots, x_g)$, $x^\ast = (x_1^\ast, \ldots, x_g^\ast)$. Likewise, let $\mathcal B(\mathbb C^{m}, \mathbb C^n)[x, x^\ast]$ be the polynomials in non-commutative variables $x = (x_1, \ldots, x_g)$, $x^\ast = (x_1^\ast, \ldots, x_g^\ast)$ with coefficients in $\mathcal B(\mathbb C^{m}, \mathbb C^n)$, where $n \in \mathbb N$.  We will often see the variables as $2g$ variables $x_1, \ldots, x_{2g}$ where $x_{g+i}\coloneq x_i^\ast$ for all $i \in [g]$. We can write any $p \in \mathcal B(\mathbb C^m, \mathbb C^n)[x, x^\ast]$ as 
\begin{equation}
    p = \sum_{w} p_w w, 
\end{equation}
where $w$ are words in $2g$ variables and $p_w \in B(\mathbb C^m, \mathbb C^n)[x, x^\ast]$. Note that the sum is finite and can be restricted to run over words of length $|w|\leq \deg{p}$. The evaluation of a polynomial $p \in \mathcal B(\mathbb C^n, \mathbb C^m)[x, x^\ast]$ on a tuple $X \in \mathcal B(\mathcal H)^g$ for some Hilbert space $\mathcal H$ is defined as
\begin{equation}
        p(X) = \sum_{w} p_w \otimes w(X).
\end{equation}
By $w: X \mapsto w(X)$, we mean the map that replaces every $x_i$ in the word $w$ by $X_i$ and every $x_i^*$ in the word $w$ by $X_i^*$.

We identify the $*$-operation with an involution, i.e., if $w=w_1 w_2 \ldots w_n$, then $w^\ast=w_n^\ast \ldots w_2^\ast w_1^\ast$. Hence also $p^\ast \coloneqq \sum_w p_w^\ast w^\ast$, where $p_w^\ast$ is the Hilbert-space adjoint of $p_w$. A polynomial $p$ is \emph{Hermitian} if $p = p^\ast$. We denote the empty word by $\emptyset$. We will write $\mathcal W_d$ for the set words $w$ of length $|w| \leq d$ and write $\mathcal W_\infty$ for the set of all words without length restrictions.  For convenience, we will denote the orthonormal basis of $\mathbb C^{|\mathcal W_d|}$ by $\{\ket{w}\}_{w \in \mathcal W_d}$. We write $\mathcal B(\mathbb C^m, \mathbb C^n)[x, x^\ast]_d$ and $\mathcal B(\mathbb C^m)[x, x^\ast]_d$ for the respective polynomials of degree at most $d \in \mathbb N$. Thus, more formally, $p$, $r^{(\ell)} \in \mathcal B(\mathbb C^m)[x, x^\ast]$ for all $\ell \in \mathcal R$, and $q^{(j)} \in \mathcal B(\mathbb C^{m_j})[x, x^\ast]$ for all $j \in \mathcal Q$, where $p$, $q^{(j)}$, and $r^{(\ell)}$ are Hermitian for all $j \in \mathcal Q$, $\ell \in \mathcal R$.

Moreover, let $y=(y_w)_{|w| \leq d} \in \mathcal B(\mathbb C^m)^{|\mathcal W_d|}$ be a $|\mathcal W_d|$-tuple of matrices of dimension $m$ and let $y^{ij}_w$ be the entry $\bra{i}y_w\ket{j}$ for all $i$, $j \in [m]$. Then, for $n \in \mathbb N$ we can define $L_y: \mathcal B(\mathbb C^n)[x,x^\ast]_d \to \mathcal B(\mathbb C^n) \otimes \mathcal B(\mathbb C^m)$ to be a linear operator on polynomials such that $L_y(p) = \sum_{|w| \leq d} p_w \otimes y_w$. Using this linear operator, for given $y \coloneqq (y_w)_{|w| \leq 2k}$ we can define the \emph{moment matrix} of order $k$, $M_k(y)$, as a block matrix with blocks of size $m \times m$, where
\begin{equation}
    M_k(y)(s,t) = L_{y}(s^* t) = y_{s^*t}
\end{equation}
for $s$, $t \in \mathcal W_k$. Thus, $M_k(y)$ is a matrix of size $m|\mathcal W_k|$.

In a similar way, given a $\mathcal B(\mathbb C^n)$-valued polynomial $q = \sum_{|w|\leq d}q_ww$ of degree $d$ and $y=(y_w)_{|w| \leq 2k+d}$ we define the \emph{localizing matrix} $M_k(qy)$ of order $k$ as a block matrix  with blocks of size $nm \times nm$, where
\begin{equation}
    (M_k(qy))(s,t) = L_{y}(s^\ast q t) = \sum_{|w|\leq d} q_w \otimes y_{s^\ast w t}
\end{equation}
for $s$, $t \in \mathcal W_k$.

Let $k \geq k_0 \coloneq \lceil \max_{j \in \mathcal Q, \ell \in \mathcal R}\{\operatorname{deg}(p), \operatorname{deg}(q^{(j)}), \operatorname{deg}(r^{(\ell)})\}/2\rceil$. We claim that the following optimization problems in terms of moment matrices converge to the solution of the former problem:

\begin{align*}
        \mathrm{minimize} \quad & \quad  \sum_{|w| \leq 2k} \Tr[p_w y_w^T] \\
    \mathrm{such~that} \quad & \quad M_k(y) \geq 0 \tag{$\mathbf{R_k}$} \label{eq:problem-Rk}  \\
    \quad & \quad M_{k-d_j}(q^{(j)}y) \geq 0  \qquad \forall j \in \mathcal Q \\
    \quad & \quad \sum_{|w| \leq 2k} \Tr[r^{(\ell)}_w y_w^T] \geq 0 \qquad \forall \ell \in \mathcal R\\ 
    \quad & \quad \sum_{i=1}^m y_{\emptyset}^{ii} = 1 \\
    \quad & \quad y_w =(y_w^{ij})_{i, j \in [m]} \in \mathcal B(\mathbb C^m) \qquad \forall w \in \mathcal W_{2k}
\end{align*}
Here, $p_w \in \mathcal B(\mathbb C^m)$ are the coefficients of the non-commutative polynomial $p$, the $r^{(\ell)}_w \in \mathcal B(\mathbb C^m)$ the coefficients of the non-commutative polynomials $r^{(\ell)}$, and $d_j \coloneq \lceil \operatorname{deg}(q^{(j)})/2\rceil $ for all $j \in \mathcal Q$. We take the transposition in the canonical basis $\{\ket{i}\}_{i \in [m]}$.

Intuitively, why this should work is the following: The $y_w$ are $m \times m$ matrices whose entries should correspond to
\begin{equation}
    y_w^{ij} = \bra{\psi_i} w(X) \ket{\psi_j}
\end{equation}
where $w$ is a word, $\mathcal H$ is a Hilbert space, $\ket{\psi_i}$, $\ket{\psi_j} \in \mathcal H$ are vectors, $X=(X_1, \ldots, X_g)$ is a tuple of operators in $\mathcal B(\mathcal H)$, and we have expanded the state $\ket{\psi} \in \mathbb C^m \otimes \mathcal H$ as
\begin{equation}
    \ket{\psi} = \sum_{i=1}^m \ket{i} \otimes \ket{\psi_i}.
\end{equation}
Since $\ket{\psi}$ is a quantum state and hence normalized, we obtain
\begin{equation} \label{eq:normalized-trace}
    \sum_{i=1}^m y_{\emptyset}^{ii} = \sum_{i=1}^m \langle \psi_i | \psi_i \rangle = \langle \psi | \psi \rangle = 1 \,.
\end{equation}
Moreover, for $k \geq k_0$,
\begin{align}
    \sum_{|w| \leq 2k}\Tr[p_w y_w^T] &= \sum_{|w| \leq 2k} \sum_{i,j\in [m]} \bra{i} p_w \ket{j} y_w^{ij}\nonumber  \\
    &= \sum_{|w|\leq 2k} \sum_{i,j \in [m]} \bra{i} p_w \ket{j} \bra{\psi_i} w(X) \ket{\psi_j} \nonumber \\
    &= \bra{\psi} p(X) \ket{\psi} \, \label{eq:same-objective-function} 
\end{align}
and similarly for the constraints indexed by $\ell \in \mathcal R$.   Of course, this calculation is not a proof of any kind and should just serve to build intuition concerning the objects appearing in the SDP hierarchy and which roles they are playing.

In the following, we will prove the claim that the optimal values for the problems \eqref{eq:problem-Rk} indeed converge to the optimal value of \eqref{eq:problem-P}. The proof will follow very closely the one in \cite{pironio2010convergent}.

We say that $y$ admits a \emph{moment representation} if there exist a Hilbert space $\mathcal H$, vectors $\ket{\psi_i} \in \mathcal H$ such that 
$\sum_{i \in [m]} \braket{\psi_i| \psi_i} =1$
and operators $X \in \mathcal B(\mathcal H)^g$ such that
\begin{equation}
    y_w^{ij} = \bra{\psi_i} w(X) \ket{\psi_j} \,.
\end{equation}
The next lemma morally states that if \eqref{eq:problem-P} is feasible, then so is \eqref{eq:problem-Rk} for appropriate $k$. Together with  \eqref{eq:same-objective-function}, Lemma \ref{lem:solutions-exist} implies that the optimal value $p^{(k)}$ of \eqref{eq:problem-Rk} is a lower bound for the optimal value $p^\opt$ of \eqref{eq:problem-P}, i.e., $p^{(k)} \leq p^\opt$ for all $k \geq k_0$.

\begin{lem} \label{lem:solutions-exist}
Let $y = (y_w)_{|w| \leq 2k}$ have a moment representation. Then, $\sum_{i=1}^m y_{\emptyset}^{ii}=1$ and $M_k(y) \geq 0$. If the moment representation is such that $q(X) \geq 0$ for some $q \in \mathcal B(\mathbb C^n)[x,x^\ast]$, then in addition $M_{k-d}(qy) \geq 0$ for $d = \lceil \operatorname{deg}(q)/2\rceil$ and $k \geq d$. Finally,  let $r \in \mathcal B(\mathbb C^m)[x, x^\ast]$ and $2k \geq \deg(r)$. Then, if the moment representation is such that $\bra{\psi} r(X) \ket{\psi} \geq 0$, then $\sum_{|w|\leq 2k} \Tr[r_{w} y_w^T] \geq 0$. 
\end{lem}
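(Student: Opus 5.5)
The plan is to write each moment-type object as a Gram matrix, or as a compression of a positive operator, by explicit vectors built from the moment representation. Fix $\mathcal H$, $\ket{\psi_i}$ and $X$ with $y_w^{ij} = \bra{\psi_i} w(X) \ket{\psi_j}$, and set $\ket{\psi} = \sum_i \ket{i}\otimes\ket{\psi_i}$. The normalisation is immediate: $\sum_i y_\emptyset^{ii} = \sum_i \braket{\psi_i|\psi_i} = 1$, as already observed in \eqref{eq:normalized-trace}.

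For $M_k(y)\geq 0$, define the isometry-like map $V: \mathbb C^{|\mathcal W_k|}\otimes \mathbb C^m \to \mathcal H$ by $V(\ket{t}\otimes\ket{j}) = t(X)\ket{\psi_j}$. Then the $(s,i),(t,j)$ entry of $V^\ast V$ is $\bra{\psi_i} s(X)^\ast t(X)\ket{\psi_j}$. Because evaluation is a $\ast$-homomorphism, $s(X)^\ast = s^\ast(X)$ and $s^\ast(X)t(X) = (s^\ast t)(X)$, so this entry equals $y^{ij}_{s^\ast t}$, which is exactly $M_k(y)(s,t)$. Hence $M_k(y) = V^\ast V \geq 0$. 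The only care needed is the index ordering convention: the block index is $(s,t)$ and the inner index is $(i,j)$, and the proof should simply match this.

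For the localizing matrix with $q\in\mathcal B(\mathbb C^n)[x,x^\ast]$ and $q(X)\geq 0$ on $\mathbb C^n\otimes\mathcal H$, I will use the map $W:\mathbb C^{|\mathcal W_{k-d}|}\otimes\mathbb C^n\otimes\mathbb C^m\to \mathbb C^n\otimes\mathcal H$ given by $W(\ket{t}\otimes\ket{\alpha}\otimes\ket{j}) = \ket{\alpha}\otimes t(X)\ket{\psi_j}$. Then $W^\ast q(X) W$ is positive. Its entry at $(s,\alpha,i),(t,\beta,j)$ is
\[
\sum_w \bra{\alpha}q_w\ket{\beta}\,\bra{\psi_i}s^\ast(X)w(X)t(X)\ket{\psi_j} = \sum_w \bra{\alpha}q_w\ket{\beta}\, y^{ij}_{s^\ast w t}.
\]
This is the corresponding entry of $\sum_w q_w\otimes y_{s^\ast w t}$. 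The word lengths satisfy $|s^\ast w t|\leq 2(k-d)+\deg q\leq 2k$, so all the required moments are available. The main (mild) obstacle here is bookkeeping: checking that the tensor ordering $q_w\otimes y$ in the definition of $M_k(qy)$ agrees with the ordering produced by $W$, up to a permutation of tensor factors. Such a permutation is a unitary conjugation and therefore preserves positivity.

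Finally, for $r$ with $2k\geq\deg r$, I will expand $\bra{\psi}r(X)\ket{\psi} = \sum_w\sum_{i,j}\bra{i}r_w\ket{j}\bra{\psi_i}w(X)\ket{\psi_j} = \sum_w \Tr[r_w y_w^T]$, exactly as in \eqref{eq:same-objective-function}. The sum runs over $|w|\leq\deg r\leq 2k$, so the assumption $\bra{\psi}r(X)\ket{\psi}\geq 0$ directly yields $\sum_{|w|\leq 2k}\Tr[r_w y_w^T]\geq 0$.
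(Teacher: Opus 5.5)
Your proposal is correct and is essentially the paper's argument: the paper checks $\bra{z}M_k(y)\ket{z} = \|\sum_j z_j(X)\ket{\psi_j}\|^2$ and $\bra{z}M_{k-d}(qy)\ket{z} = \bra{\zeta}q(X)\ket{\zeta}$ for a suitable $\ket{\zeta}\in\mathbb C^n\otimes\mathcal H$. These are exactly your factorizations $M_k(y)=V^\ast V$ and $M_{k-d}(qy)=W^\ast q(X)W$ written as quadratic forms, and the paper handles the normalization and the $r$-constraint by the same direct expansion (with your tensor ordering for $W$ the entries already match $\sum_w q_w\otimes y_{s^\ast w t}$, so no permutation is needed).
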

\begin{proof}
    The first assertion follows directly from  \eqref{eq:normalized-trace}. By definition, if $y$ has a moment representation, for all $s$, $t \in \mathcal W_k$ and all $i$, $j \in [m]$,
    \begin{equation}
        M_k^{ij}(y)(s,t) = y^{ij}_{s^\ast t} = \bra{\psi_i} s^\ast(X) t(X) \ket{\psi_j}.
    \end{equation}
Thus, for $\ket{z} \in \mathbb C^m \otimes \mathbb C^{|\mathcal W_k|}$, $\ket{z}  = \sum_{n = 1}^m \ket{n} \otimes (\sum_{|w| \leq k} z_w^{(n)} \ket{w})$, where $z_w^{(n)} \in \mathbb C$ for all $w \in \mathcal W_k$ and $n \in [m]$, we infer
\begin{align}
    \bra{z} M_k(y) \ket{z} &= \sum_{i,j \in [m]} \sum_{|v|,|w| \leq k} (z^{(i)}_v)^\ast M_k^{ij}(y)(v,w) z^{(j)}_w \\
    &= \sum_{i,j \in [m]} \sum_{|v|,|w| \leq k} (z^{(i)}_v)^\ast\bra{\psi_i} v^\ast(X) w(X) \ket{\psi_j} z^{(j)}_w \\
    &= \sum_{i,j \in [m]} \bra{\psi_i} z_i(X)^\ast z_j(X) \ket{\psi_j} \\
    &= \left\|\sum_{j \in [m]} z_j(X) \ket{\psi_j} \right\|_2^2 \geq 0.
\end{align}
Here, we have defined $z_i \in \mathbb C[x, x^\ast]_k$ as $z_i = \sum_{|w|\leq k} z_w^{(i)} w$ for all $i \in [m]$. This completes the proof of the second assertion.

For the third assertion, in the same vein for all $s$, $t \in \mathcal W_{d-k}$ and $b$, $c \in [n]$, $i$, $j \in [m]$, we have that
\begin{align}
    (\bra{i} \otimes \bra{b}) M_{k-d}(qy)(s,t) (\ket{j} \otimes \ket{c}) &= \sum_{|w| \leq \operatorname{deg}(q)} \bra{b}q_w\ket{c} y^{ij}_{s^\ast w t}\\& = \sum_{|w|\leq \operatorname{deg}(q)} \bra{b}q_w\ket{c} \bra{\psi_i} s^\ast(X) w(X) t(X) \ket{\psi_j}.
\end{align}
Hence, for $\ket{z} \in \mathbb C^n \otimes \mathbb C^m \otimes \mathbb C^{|\mathcal W_k|}$, $\ket{z}  = \sum_{b \in [n]} \sum_{i \in [m]} \ket{b} \otimes \ket{i} \otimes (\sum_{|w| \leq k} z_w^{(b,i)} \ket{w})$, where $z_w^{(b,i)} \in \mathbb C$ for all $w \in \mathcal W_k$ and $b \in [n]$, $i \in [m]$, we obtain
\begin{align}
    &\bra{z}M_{k-d}(qy)\ket{z} \\ &= \sum_{b,c \in [n]}\sum_{i,j \in [m]} \sum_{|v|, |w| \leq k-d} (z^{(b,i)}_v)^\ast (\bra{i} \otimes \bra{b}) M_{k-d}(qy)(s,t) (\ket{j} \otimes \ket{c}) z^{(c,j)}_w \\
    & =\sum_{b,c \in [n]} \sum_{i,j \in [m]} \sum_{|v|, |w| \leq k-d} (z^{(b,i)}_v)^\ast \sum_{|u| \leq \operatorname{deg}(q)} \bra{b}q_u\ket{c} \bra{\psi_i} v^\ast(X) u(X) w(X) \ket{\psi_j} z^{(c,j)}_w \\
    & =\sum_{b,c \in [n]} \sum_{i,j \in [m]} (\bra{b} \otimes \bra{\psi_i} (z^{(b,i)}(X))^\ast) q(X) (\ket{c} \otimes z^{(c,j)}(X) \ket{\psi_j}) \\
    &= \bra{\zeta} q(X) \ket{\zeta} \geq 0,
\end{align}
where $\ket{\zeta} = \sum_{b \in [n]} \sum_{i \in [m]} \ket{b} \otimes z^{(b,i)}(X) \ket{\psi_i} \in \mathbb C^n \otimes \mathcal H$ and $z^{(b,i)} = \sum_{|v|\leq k-d} z^{(b,i)}_v v \in \mathbb C[x,x^\ast]$ for all $b \in [n]$, $i \in [m]$. Finally,
\begin{align}
    \sum_{|w| \leq 2k}  \Tr[r_w y_w^T] & = \sum_{i,j \in [m]}\sum_{|w| \leq \operatorname{deg}(r)} \bra{i}r_w\ket{j}  y^{ij}_w \\
    & = \sum_{i,j \in [m]}\sum_{|w| \leq \operatorname{deg}(r)} \bra{i}r_w\ket{j}  \bra{\psi_i} w(X) \ket{\psi_j} \\
    &= \bra{\psi} r(X) \ket{\psi} \geq 0 \,.
\end{align}
This proves the last assertion.
\end{proof}
Next, we need to show that the constraints are well-behaved. For this, we will need to recall definitions from \cite{klep2007nichtnegativstellensatz, pironio2010convergent}.

Let $Q=\{q^{(j)}: j \in \mathcal Q\}$ be the polynomials appearing as positivity constraints in problem \eqref{eq:problem-P}. Then, the \emph{positivity domain} $\mathcal S_Q$ associated to $Q$ is the set of tuples $X=(X_1, \ldots, X_g)$ of bounded operators on a Hilbert space such that each $q^{(j)}(X)$, $j \in  \mathcal Q$, is a positive operator. The \emph{quadratic module} $\mathcal M_Q$ is the set of all elements of the form $\sum_i f_i^\ast f_i + \sum_i \sum_{j\in \mathcal Q} g_{ij}^\ast q^{(j)} g_{ij}$, where $f_i$, $g_{ij} \in \mathcal B(\mathbb C^\ell)[x, x^\ast]$ and the sums are finite. Here, $\ell = \max_{j \in \mathcal Q} m_j$ and we embed smaller matrices in the upper left corner of larger matrices by adding zeros. The quadratic module is Archimedean if there exists $C \in \mathbb R$ such that $C^2 - \sum_{i \in [2g]} x_i^\ast x_i \in \mathcal M_Q$. If this is the case, then $\mathcal S_Q$ is bounded because it means that for all $X\in \mathcal S_Q$, $C^2 - \sum_{i \in [2g]} X_i^\ast X_i \geq 0$. If $\mathcal S_Q$ is already bounded, we can always find $C$ large enough such that we can make $\mathcal M_Q$ Archimedean by adding $C^2 - \sum_{i \in [2g]} x_i^\ast x_i$ to $Q$ without changing $\mathcal S_Q$. 

\begin{lem} \label{lem:c-bounds-coefficients}
    Let $c= C^2 - \sum_{i \in [2g]} x_i^\ast x_i$ and $k \geq 1$. Let $y$ be a sequence such that $M_{k-1}(cy) \geq 0$, $M_k(y) \geq 0$, and $\sum_i y^{ii}_{\emptyset} = 1$. Then, $|y_w^{ij}| \leq C^{|w|}$ for all $i$, $j \in [m]$ and $|w|\leq 2k$.
\end{lem}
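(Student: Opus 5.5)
We will follow the scalar argument of \cite[Lemma~1]{pironio2010convergent}, with positivity of $m\times m$ blocks in place of positivity of scalars. The proof has three steps.

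\emph{Step 1: diagonal entries of $y_{w^\ast w}$.} For a word $w$ with $|w|\le k$ and $i\in[m]$, write $N_w^{i}\coloneqq y^{ii}_{w^\ast w}$. Since $M_k(y)\ge 0$, each diagonal block $y_{w^\ast w}$ is a positive semidefinite $m\times m$ matrix. In particular $N_w^i\ge 0$, and $\sum_i N_\emptyset^i = 1$.

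\emph{Step 2: induction on $|w|$.} We show $\sum_i N^i_{w}\le C^{2|w|}$ for all $|w|\le k$. Write $w = x_l v$ with $|v|=|w|-1\le k-1$. The localizing matrix $M_{k-1}(cy)$ is positive semidefinite, so its diagonal block at $(v,v)$ is positive. That block equals
\begin{equation}
C^2 y_{v^\ast v} - \sum_{l'} y_{v^\ast x_{l'}^\ast x_{l'} v}.
\end{equation}
Taking the trace over $[m]$ gives
\begin{equation}
\sum_{l'}\sum_i y^{ii}_{v^\ast x_{l'}^\ast x_{l'}v}\le C^2\sum_i N^i_v .
\end{equation}
By Step 1, each summand $\sum_i y^{ii}_{(x_{l'}v)^\ast(x_{l'}v)}$ is nonnegative, so dropping all terms except $l'=l$ gives $\sum_i N^i_w\le C^2\sum_i N^i_v$. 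Induction from the empty word then yields $\sum_i N^i_w\le C^{2|w|}$.

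\emph{Step 3: off-diagonal entries and long words.} Every word $w$ with $|w|\le 2k$ factors as $w=s^\ast t$ with $|s|,|t|\le k$. We may choose the split so that $|s|+|t|=|w|$, which is possible because the letter set is closed under $\ast$. The $2\times 2$ principal submatrix of $M_k(y)$ on the rows/columns $(s,i)$ and $(t,j)$ is
\begin{equation}
\begin{pmatrix} y^{ii}_{s^\ast s} & y^{ij}_{s^\ast t}\\ \overline{y^{ij}_{s^\ast t}} & y^{jj}_{t^\ast t}\end{pmatrix}.
\end{equation}
Hermiticity of $M_k(y)$ gives the lower-left entry. Positive semidefiniteness of this submatrix gives
\begin{equation}
|y^{ij}_w|^2\le N^i_s N^j_t\le C^{2|s|}C^{2|t|}=C^{2|w|},
\end{equation}
which is the claim.

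The main care point is Step 3. The moment matrix indexes entries by $y_{s^\ast t}$, so we must make sure every word of length at most $2k$ arises this way with the lengths adding up; this holds by reversing and adjoining the letters of the first half. A minor point is that the trace normalization controls $\sum_i N^i_w$ rather than each $N^i_w$ individually. This is harmless, since each $N^i_w$ is nonnegative and hence bounded by the sum.
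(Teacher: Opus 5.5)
Your proof is correct and follows the same route as the paper: an induction on $|w|$ that bounds the diagonal entries of $y_{w^\ast w}$ using the $(v,v)$ block of $M_{k-1}(cy)$ together with positivity of the diagonal blocks of $M_k(y)$, followed by the $2\times 2$ principal-minor argument for arbitrary entries. The differences are cosmetic. You induct on the trace $\sum_i y^{ii}_{w^\ast w}$ rather than entrywise, and you make explicit two points the paper leaves implicit: the diagonal bound is needed (and obtained) for all $|w|\le k$, not only $|w|\le k-1$, and every word of length at most $2k$ factors as $s^\ast t$ with $|s|+|t|=|w|$.
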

\begin{proof}
    By assumption, the diagonal elements of $M_{k-1}(cy)$ need to be positive, i.e., 
    \begin{equation}
        C^2 y_{w^\ast w}^{ii} - \sum_{j\in [2g]} y_{w^\ast x_j^\ast x_j w}^{ii} \geq 0
    \end{equation}
    for all $i \in [g]$ and  $|w|\leq k-1$. Thus, $\sum_{j\in [2g]} y_{w^\ast x_j^\ast x_j w}^{ii} \leq     C^2 y_{w^\ast w}^{ii}$ for all $i \in [g]$ and  $|w|\leq k-1$. Furthermore, $y_{w^\ast x_j^\ast x_j w}^{ii} \geq 0$ for all $|w|\leq k-1$ and $i$, $j \in [g]$, as $M_k(y) \geq 0$. Hence, it follows that for all $i$, $j \in [m]$ and $|w| \leq k-1$,
    \begin{equation}
        y_{w^\ast x_j^\ast x_j w}^{ii} \leq \sum_{\ell\in [2g]} y_{w^\ast x_\ell^\ast x_\ell w}^{ii} \leq     C^2 y_{w^\ast w}^{ii}.
    \end{equation}
    Moreover,  we infer from $\sum_i y^{ii}_{\emptyset} = 1$ that $y^{ii}_{\emptyset} \leq 1$. Combing with the above, we obtain $y_{x_j^\ast x_j}^{ii} \leq     C^2$ for all $i$, $j \in [m]$. By induction, we infer that $y_{w^\ast w}^{ii} \leq     C^{2|w|}$ for all $|w|\leq k-1$. This proves the assertion for diagonal elements. For off-diagonal elements, we can consider the submatrix
    \begin{equation}
        \begin{pmatrix}
            y^{ii}_{s^\ast s} & y^{ji}_{s^\ast t} \\  
            y^{ij}_{t^\ast s} & y^{jj}_{t^\ast t}   
        \end{pmatrix},
    \end{equation}
    where $i$, $j \in [m]$ and $|s|$, $|t|\leq k$, which is positive semidefinite because $M_k(y)$ is. Hence, $y^{ji}_{s^\ast t} = (y^{ij}_{t^\ast s})^\ast$. As the determinant of this positive semidefinite submatrix must be positive, we obtain $ y^{ii}_{s^\ast s} y^{jj}_{t^\ast t} \geq |y^{ij}_{t^\ast s}|^2$, from which $|y^{ij}_{t^\ast s}| \leq C^{|s|+|t|}$. This proves the assertion.
\end{proof}

Now, we will prove that $\mathcal M_Q$ being Archimedean leads to feasible solutions of the relaxations being bounded. We note that if $\mathcal M_Q$ is Archimedean, then we can express $C^2 - \sum_{i \in [2g]} x_i^\ast x_i = \sum_i f_i^\ast f_i + \sum_i \sum_{j\in \mathcal Q} g_{ij}^\ast q^{(j)} g_{ij}$ for an appropriate $C$, where $f_i \in \mathbb C[x, x^\ast]$, $g_{ij} \in \mathcal B(\mathbb C, \mathbb C^{m_j})[x, x^\ast]$ and we have removed unnecessary zeroes from the embedding into larger matrices necessary to define the quadratic module. Let us define $d_M = \max_{i,j}\{\deg{f_i}, \deg{g_{ij}}+d_j\}$, where $d_j = \lceil \operatorname{deg}(q^{(j)})/2\rceil$ for all $j \in \mathcal Q$.

\begin{lem} \label{lem:bounded-coeff}
    Let $2k \geq \max \{\deg{p^{(i)}}, \deg{q^{(j)}}: i \in \mathcal P, j \in \mathcal Q\}$, $\mathcal M_Q$ be Archimedean, and $y$ a feasible solution of problem $(\mathbf{R_{k-1+d_M}})$. Then, $|y_w^{ij}| \leq C^{|w|}$ for all $i$, $j \in [m]$ and $|w| \leq 2k$.
\end{lem}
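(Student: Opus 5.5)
The plan is to reduce the claim to Lemma~\ref{lem:c-bounds-coefficients}. It suffices to show that for a feasible $y$ of $(\mathbf{R_{k-1+d_M}})$, the truncation $(y_w)_{|w|\le 2k}$ satisfies $M_k(y)\ge 0$, $\sum_i y^{ii}_\emptyset=1$, and $M_{k-1}(cy)\ge 0$ with $c=C^2-\sum_{i\in[2g]}x_i^\ast x_i$. The first two conditions are immediate. $M_k(y)$ is a principal submatrix of $M_{k-1+d_M}(y)$ because $d_M\ge 1$, which we may assume after enlarging $d_M$ if necessary. The normalization is a constraint of the relaxation. The real work is the localizing condition for $c$, which is not one of the imposed constraints.

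\textbf{Main step.} We use the Archimedean certificate
\[
c=\sum_i f_i^\ast f_i+\sum_i\sum_{j\in\mathcal Q} g_{ij}^\ast q^{(j)} g_{ij}.
\]
Fix $\ket{z}=\sum_{n\in[m]}\ket{n}\otimes\sum_{|w|\le k-1}z^{(n)}_w\ket{w}$, and write $z_n=\sum_w z^{(n)}_w w\in\mathbb C[x,x^\ast]_{k-1}$. Expanding the block definition, $\bra{z}M_{k-1}(cy)\ket{z}$ equals the value of the linear map $L_y$ on the ``matrix-indexed'' polynomial $\sum_{n,n'} z_n^\ast c\, z_{n'}$. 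Here the pairing with $y$ is $\sum_{n,n'}\sum_w (z_n^\ast c z_{n'})_w\, y^{nn'}_w$. By linearity we split this along the certificate.

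\textbf{Sum-of-squares terms.} A term $f_i^\ast f_i$ contributes
\[
\sum_{n,n'} L(z_n^\ast f_i^\ast f_i z_{n'}) .
\]
This is $\bra{u}M_{k-1+d_M}(y)\ket{u}$, where $u$ collects the coefficients of the polynomials $f_i z_n$. These have degree at most $\deg f_i+k-1\le k-1+d_M$, so the term is nonnegative by $M_{k-1+d_M}(y)\ge 0$.

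\textbf{Localizing terms.} A term $g_{ij}^\ast q^{(j)} g_{ij}$, with $g_{ij}\in\mathcal B(\mathbb C,\mathbb C^{m_j})[x,x^\ast]$, contributes a quadratic form of $M_{k-1+d_M-d_j}(q^{(j)}y)$. This form is evaluated at the vector whose $(b,n)$ components are the coefficients of the scalar polynomials $(g_{ij})_b z_n$. Their degree is at most $\deg g_{ij}+k-1\le k-1+d_M-d_j$ by the definition of $d_M$, so this localizing matrix is a principal submatrix of a constrained one and the contribution is nonnegative.

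Summing the two kinds of terms gives $M_{k-1}(cy)\ge0$. The main obstacle is the careful index bookkeeping: checking that the tensor order of the blocks $q_w\otimes y_{s^\ast wt}$ matches the vector built from $g_{ij}$ and $z_n$, and that the degree bound $2(k-1+d_M)$ covers all words appearing. The condition $2k\ge\max\deg$ ensures the relevant words lie within the truncation.

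Finally, Lemma~\ref{lem:c-bounds-coefficients} applied to $(y_w)_{|w|\le 2k}$ yields $|y^{ij}_w|\le C^{|w|}$ for $|w|\le 2k$.
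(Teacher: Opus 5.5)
Your proposal is correct and follows the paper's proof. Both reduce to Lemma~\ref{lem:c-bounds-coefficients} by proving $M_{k-1}(cy)\ge 0$ from the Archimedean certificate, bounding the sum-of-squares terms by $M_{k-1+d_M}(y)$ and the localizing terms by $M_{k-1+d_M-d_j}(q^{(j)}y)$; the paper packages these as Claims~1 and~2, and you need not ``enlarge'' $d_M$ (which would strengthen the hypothesis), since $d_M=0$ would make the right-hand side of the certificate constant while $c$ has degree $2$, so $d_M\ge 1$ holds automatically.
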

\begin{proof}
    We write again $c = C^2 - \sum_{i=1}^{2n} x_i^\ast x_i$. If $\mathcal M_Q$ is Archimedean, then there are $f_i \in \mathbb C[x, x^\ast]$, $g_{ij} \in \mathcal B(\mathbb C, \mathbb C^{m_j})[x, x^\ast]$ such that 
    \begin{equation}
        c = \sum_{i} f_i^\ast f_i + \sum_{i} \sum_{j \in \mathcal Q} g_{ij}^\ast q^{(j)} g_{ij}.
    \end{equation}
    We will make two claims to conclude:
\begin{enumerate}
    \item []{\textbf{Claim 1}:}  Let $f \in \mathbb C[x,x^\ast]_d$ and let $M_{k+d}(y) \geq 0$. Then $M_k(f^\ast f y) \geq 0$.
    \item[] \textbf{Claim 2}: Let $ g \in \mathcal B(\mathbb C, \mathbb C^{m_j})[x, x^\ast]_d$ and let $M_{k+d}(q^{(j)}y) \geq 0$. Then $M_{k}(g^\ast q^{(j)} g y) \geq 0$.
\end{enumerate}
    Now, since $y$ is a feasible solution of of problem $(\mathbf{R_{k-1+d_M}})$, it holds that $M_{k-1+d_M}(y) \geq 0$ and that $M_{k-1+d_M-d_j}(q^{(j)}y) \geq 0$ for all $j \in \mathcal Q$. Assuming that both claims are true, we infer
    \begin{equation}
        M_{k-1}(cy) = \sum_i \underbrace{M_{k-1}(f_i^\ast f_i y)}_{\geq 0 \text{ (Claim 1)}} + \sum_{i} \sum_{j \in \mathcal Q} \underbrace{M_{k-1}(g_{ij}^\ast q^{(j)} g_{ij}y)}_{\geq 0 \text{ (Claim 2)}} \geq 0.
    \end{equation}
    Then, we can use Lemma \ref{lem:c-bounds-coefficients} to conclude that $|y_w^{ij}| \leq C^{|w|}$ for all $i$, $j \in [m]$ and $|w|\leq 2k$.

    It remains hence to prove the claims. We start with Claim 1: Let $\ket{z}  = \sum_{n \in [m]} \sum_{|w|\leq k} \ket{n} \otimes z_w^{(n)} \ket{w}$, where $z_w^{(n)} \in \mathbb C$ for all $w \in \mathcal W_k$ and $n \in [m]$. Then,
    \begin{align}
        \bra{z} M_k(f^\ast f y) \ket{z} & = \sum_{i,j \in [m]} \sum_{|v|, |w| \leq k} (z_v^{(i)})^\ast M_k^{ij}(f^\ast f y)(v,w) z_w^{(j)} \\
        &= \sum_{i,j \in [m]} \sum_{|v|, |w| \leq k}(z_v^{(i)})^\ast z_w^{(j)} \sum_{|u_1|, |u_2|\leq d} f^\ast_{u_1} f_{u_2} \underbrace{y^{ij}_{v^\ast u_1^\ast u_2 w}}_{=M_{k+d}^{ij}(y)(u_1 v, u_2 w)} \\
        &= \bra{\mu} M_{k+d}(y) \ket{\mu} \geq 0 \, ,
    \end{align}
    where $\ket{\mu} = \sum_{n \in [m]} \sum_{|v|\leq k} \sum_{|u| \leq d} \ket{n} \otimes (z^{(n)}_v f_u) \ket{uv} \in  \mathbb C^m \otimes \mathbb C^{|\mathcal W_{k+d}|}$. This proves the first claim.

    For the second claim, we can use the same $\ket{z}$ to infer that
    \begin{align}
        &\bra{z}M_k(g^\ast q^{(\ell)} g y) \ket{z} \\&= \sum_{i,j \in [m]} \sum_{|v|,|w| \leq k} \sum_{|u_1|, |u_2| \leq d} \sum_{r \leq \deg{q^{(j)}}} (z_{v}^{(i)})^\ast z_{w}^{(j)} (g_{u_1})^\ast  q_r^{(\ell)} g_{u_2} y^{ij}_{v^\ast u_1^\ast r u_2 w} \\
        &= \bra{\mu^\prime} M_{k+d}(q_i y) \ket{\mu^\prime} \geq 0 \, ,
    \end{align}
     where $\ket{\mu^\prime} = \sum_{n\in [m]} \sum_{|v| \leq k} \sum_{|u| \leq d} z^{(n)}_v g_u \otimes \ket{n} \otimes \ket{uv} \in \mathbb C^{m_j} \otimes \mathbb C^m \otimes \mathbb C^{|\mathcal W_{k+d}|}$. This proves the second claim and concludes the proof of the lemma.
    \end{proof}

    \begin{lem} \label{lem:limit-exists}
        Let $\mathcal M_Q$ be Archimedean and let $k$ be large enough. Let $p^{(k)}$ be the solution of the optimization problem \eqref{eq:problem-Rk}. Then, the optima $p^{(k)}$ form a non-decreasing bounded sequence and $\hat p = \lim_{p \to \infty} p^{(k)}$ exists.
    \end{lem}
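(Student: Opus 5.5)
The plan is to show two things: the sequence $p^{(k)}$ is monotone non-decreasing, and it is bounded above. Convergence then follows from the monotone convergence theorem for real sequences. I will take ``$k$ large enough'' to mean $k \geq k_0$ and $k \geq 1 + d_M$, with $d_M$ as defined before Lemma~\ref{lem:bounded-coeff}, and I assume throughout that \eqref{eq:problem-P} is feasible (otherwise $p^\opt=+\infty$ and the statement is vacuous or must be read in the extended reals).

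\emph{Monotonicity.} Let $y=(y_w)_{|w|\leq 2k+2}$ be feasible for $(\mathbf{R_{k+1}})$, and let $y'=(y_w)_{|w|\leq 2k}$ be its truncation.
\begin{itemize}
\item $M_k(y')$ is a principal submatrix of $M_{k+1}(y)$, indexed by the words in $\mathcal W_k \subseteq \mathcal W_{k+1}$. Hence it is positive semidefinite.
\item Likewise, $M_{k-d_j}(q^{(j)}y')$ is a principal submatrix of $M_{k+1-d_j}(q^{(j)}y)$, so it is positive semidefinite.
\item The normalisation $\sum_i y^{ii}_\emptyset=1$ is unchanged by truncation.
\item The linear constraints $\sum_w \Tr[r^{(\ell)}_w y_w^T]\geq 0$ only involve words with $|w|\leq \deg r^{(\ell)} \leq 2k_0\leq 2k$, so they are unchanged.
\item The objective only involves words with $|w|\leq \deg p\leq 2k$, so it takes the same value on $y'$ as on $y$.
\end{itemize}
Thus every feasible point of $(\mathbf{R_{k+1}})$ yields a feasible point of \eqref{eq:problem-Rk} with the same objective value. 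The minimum over the larger feasible set can only be smaller, so $p^{(k)}\leq p^{(k+1)}$.

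\emph{Boundedness.} For the upper bound, Lemma~\ref{lem:solutions-exist} combined with \eqref{eq:same-objective-function} gives $p^{(k)}\leq p^\opt<\infty$ for all $k\geq k_0$. For a lower bound, fix $k$ large enough and let $y$ be feasible for \eqref{eq:problem-Rk}. Since $\mathcal M_Q$ is Archimedean, Lemma~\ref{lem:bounded-coeff} (applied with $k-d_M+1$ in place of $k$, which still covers all $|w|\leq\deg p$) gives $|y^{ij}_w|\leq C^{|w|}$ for the relevant words. The objective is then bounded below uniformly:
\begin{equation}
\Big|\sum_w \Tr[p_w y_w^T]\Big| \leq \sum_{|w|\leq \deg p} m^2 \|p_w\|_{\max}\, C^{|w|},
\end{equation}
a constant independent of $k$. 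This also shows that $p^{(k)}$ is finite, and, since the feasible set is closed and bounded, that the optimum is attained.

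A monotone sequence bounded above converges, so $\hat p = \lim_{k\to\infty} p^{(k)}$ exists and satisfies $\hat p\leq p^\opt$. The only delicate step is the index bookkeeping: the hypothesis of Lemma~\ref{lem:bounded-coeff} requires feasibility at level $k-1+d_M$ while controlling words up to length $\deg p$. This is exactly why $k$ must be taken large enough. Beyond that, everything reduces to the fact that principal submatrices of positive semidefinite matrices are positive semidefinite.
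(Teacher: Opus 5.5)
Your proposal is correct and follows the paper's proof: monotonicity by truncating a feasible point of $(\mathbf{R_{k+1}})$ to level $k$ (principal submatrices of the moment and localizing matrices, with the objective and the $r^{(\ell)}$-constraints depending only on low-degree moments), the upper bound $p^{(k)}\leq p^\opt$ from Lemma~\ref{lem:solutions-exist}, a uniform lower bound from Lemma~\ref{lem:bounded-coeff}, and monotone convergence. Two minor bookkeeping points. First, applying Lemma~\ref{lem:bounded-coeff} at level $k-d_M+1$ requires $2(k-d_M+1)\geq\max\{\deg p,\deg q^{(j)}\}$, so your explicit threshold $k\geq\max\{k_0,1+d_M\}$ should be replaced by something like $k\geq k_0+d_M-1$. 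Second, your aside that the optimum is attained does not follow, since that lemma only bounds moments of length at most $2(k-d_M+1)$ and not the whole feasible set of \eqref{eq:problem-Rk}; attainment is, however, not needed for the statement.
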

    \begin{proof}
        Since problem \eqref{eq:problem-P} has a solution, Lemma \ref{lem:solutions-exist} implies that the problem $(\mathbf{R_{k_0}})$ is feasible. Positivity of the moment matrix of order $k+1$ implies positivity of the moment matrix of order $k$ and the same holds for the localizing matrix. Moreover, note that the objective function and the constraints indexed by $\ell \in \mathcal R$ only depend on $y_w$ with $|w| \leq \deg(p)$ and $|w| \leq \max_{\ell \in \mathcal R} \deg(r^{(\ell)})$, respectively. Therefore, we have $p^{(k)} \leq p^{(k+1)} $ for $k \geq k_0$.
        
        For $k \geq k_1$, where $k_1\geq k_0$ is large enough, and all $|w| \leq 2 k$, we can use Lemma \ref{lem:bounded-coeff} to infer in addition $|y_w^{ij}| \leq C^{|w|}$. Thus, the objective function
        \begin{equation}
            \sum_{j,\ell \in [m]} \sum_{|w|\leq 2k} \bra{j}p_w\ket{\ell}  y_w^{j\ell}
        \end{equation}
    is bounded on feasible solutions as well. As $\mathcal M_Q$ is Archimedean, the positivity domain $\mathcal S_Q$  is bounded and therefore the solution of the original problem \eqref{eq:problem-P}, which we call $p^\opt$, is also bounded. Hence, for any $k \geq k_1$ large enough
    \begin{equation}
        p^{(k_1)} \leq p^{(k)} \leq p^{(k+1)} \leq p^\opt
    \end{equation}
    and $(p^{(k)})_{k\geq k_1}$ is a bounded non-decreasing sequence. By the monotone convergence theorem, the limit $\hat p$ thus exists.
      \end{proof}

\begin{lem} \label{lem:infinite-sequence}
    Let $\mathcal M_Q$ be Archimedean and let $\hat p = \lim_{k \to \infty} p^{(k)}$ be the limit of the optimal solutions $p^{(k)}$ of the relaxations \eqref{eq:problem-Rk}. Then, there exists an infinite sequence $(\hat y)_{w \in \mathcal W_\infty}$ with values in $\mathcal B(\mathbb C^m)$ for each $w$ such that $|\hat y_w^{ij}|\leq C^{|w|}$ and such that
    \begin{align}
        &\sum_{w} \Tr[p_w^{(i)} \hat y_w^T] = \hat p \nonumber\\
         &\sum_{i=1}^m \hat y_{\emptyset}^{ii} = 1 \\
        &M_k(\hat y) \geq 0  \nonumber\\
     &M_{k-d_j}(q^{(j)}\hat y) \geq 0  \qquad \forall j \in \mathcal Q \nonumber\\
    &\sum_{w}  \Tr[r^{(\ell)}_w \hat y_w^T] \geq 0 \qquad \forall \ell \in \mathcal R \nonumber
\end{align}
for all $k$ large enough.
\end{lem}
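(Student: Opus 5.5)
This follows the diagonal/compactness argument of \cite[Lemma~4]{pironio2010convergent}. For each $k \geq k_1$ (with $k_1$ as in Lemma~\ref{lem:limit-exists}) I pick an optimal, or $1/k$-optimal, feasible point $y^{(k)} = (y^{(k)}_w)_{|w|\leq 2k}$ of $(\mathbf{R_k})$. I then extend it to an element of $\prod_{w \in \mathcal W_\infty} \mathcal B(\mathbb C^m)$ by setting $y^{(k)}_w = 0$ for $|w| > 2k$. I will take the claimed $\hat y$ to be a limit point of these extended sequences.

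\textbf{Step 1: uniform bounds.} By Lemma~\ref{lem:bounded-coeff}, applied with $k$ replaced by $k - d_M + 1$, every feasible point of $(\mathbf{R_k})$ satisfies $|y^{(k),ij}_w| \leq C^{|w|}$ for all $|w| \leq 2(k-d_M+1)$. The remaining entries, those of length between $2(k-d_M+1)$ and $2k$, are not controlled by this lemma. To fix this I will truncate: set $\tilde y^{(k)}_w = y^{(k)}_w$ for $|w|\leq 2(k-d_M+1)$ and $0$ otherwise. Each coordinate of $\tilde y^{(k)}$ then lies in the compact set $\{z \in \mathcal B(\mathbb C^m): |z^{ij}| \leq C^{|w|}\}$.

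\textbf{Step 2: convergence.} The sequence $(\tilde y^{(k)})_k$ lives in the product space $\prod_{w} \{|z^{ij}|\leq C^{|w|}\}$. This space is compact and metrizable in the product topology, by Tychonoff over the countable index set $\mathcal W_\infty$. Equivalently, one can rescale by $C^{-|w|}$ and use sequential Banach--Alaoglu in $\ell^\infty$. So there is a subsequence $k_\nu$ with $\tilde y^{(k_\nu)}_w \to \hat y_w$ for every word $w$, and the bound $|\hat y^{ij}_w| \leq C^{|w|}$ is preserved in the limit.

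\textbf{Step 3: passing the constraints to the limit.} Fix any $k$ large enough. For $\nu$ large, $k_\nu - d_M + 1 \geq k$. Each of the following depends only on finitely many coordinates with $|w| \leq 2k$, where $\tilde y^{(k_\nu)}$ agrees with $y^{(k_\nu)}$:
\begin{itemize}
\item the moment matrix $M_k(\tilde y^{(k_\nu)})$, which is a principal submatrix of the feasible $M_{k_\nu}(y^{(k_\nu)})$ and hence positive semidefinite;
\item the localizing matrices $M_{k-d_j}(q^{(j)}\tilde y^{(k_\nu)})$, which are principal submatrices of feasible localizing matrices and hence positive semidefinite;
\item the linear constraints $\sum_w \Tr[r^{(\ell)}_w \tilde y_w^T]\geq 0$ and the normalization $\sum_i \tilde y^{ii}_\emptyset = 1$;
\item the objective $\sum_w \Tr[p_w \tilde y_w^T]$, which equals $p^{(k_\nu)}$ up to $1/k_\nu$.
\end{itemize}
The cone of positive semidefinite matrices is closed, and all these maps are finite linear combinations of coordinates, hence continuous. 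Taking $\nu\to\infty$ therefore gives $M_k(\hat y)\geq 0$, $M_{k-d_j}(q^{(j)}\hat y)\geq0$, the $\mathcal R$-constraints, and the normalization. For the objective, Lemma~\ref{lem:limit-exists} gives $\sum_w \Tr[p_w\hat y_w^T] = \lim_\nu p^{(k_\nu)} = \hat p$.

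The step I expect to need care is the bookkeeping in Step 1: the degree shift $d_M$ in Lemma~\ref{lem:bounded-coeff}, and ensuring that the truncation does not corrupt the constraints used in Step 3. The second issue is avoided by only evaluating constraints of order $k \leq k_\nu - d_M + 1$. The phrase ``for all $k$ large enough'' in the statement absorbs exactly this shift.
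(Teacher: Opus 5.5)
Your proposal is correct and follows the argument the paper invokes: the paper's proof simply defers to Lemma~5 of \cite{pironio2010convergent}. That argument takes uniform bounds $|y_w^{ij}|\leq C^{|w|}$ from Lemma~\ref{lem:bounded-coeff}, pads with zeros, extracts a pointwise convergent subsequence via Banach--Alaoglu (your Tychonoff/rescaling step is the same compactness argument), and passes the finitely supported constraints and objective to the limit. Your explicit truncation at length $2(k-d_M+1)$ and your use of $1/k$-optimal points only make precise the bookkeeping that the paper leaves implicit.
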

\begin{proof}
    In the same way as Lemma 5 in  \cite{pironio2010convergent}, this statement follows from Lemmas \ref{lem:bounded-coeff} and \ref{lem:limit-exists} together with the Banach-Alaoglu theorem.
\end{proof}
     
    \begin{thm}\label{matrix-NPA-converges}
        Let $p^\opt$ be the optimal value of \eqref{eq:problem-P}. If $\mathcal M_Q$ is Archimedean, then $p^\opt = \lim_{k \to \infty} p^{(k)}$.
    \end{thm}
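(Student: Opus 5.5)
Since $p^{(k)} \leq p^\opt$ for all $k$ by Lemma~\ref{lem:solutions-exist} and \eqref{eq:same-objective-function}, we already have $\hat p \leq p^\opt$. It remains to show $\hat p \geq p^\opt$. The plan is to exhibit a feasible point of \eqref{eq:problem-P} whose objective value equals $\hat p$. We use a GNS-type construction applied to the infinite sequence $\hat y$ from Lemma~\ref{lem:infinite-sequence}, following \cite[Thm.~1]{pironio2010convergent}.

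\textbf{Step 1: a pre-Hilbert space.} Consider the vector space $V = \mathbb C^m \otimes \mathbb C[x,x^\ast]$ of formal combinations $\sum_i \ket{i}\otimes f_i$. Equip it with the sesquilinear form
\begin{equation}
\Big\langle \sum_i \ket{i}\otimes f_i, \sum_j \ket{j}\otimes h_j \Big\rangle = \sum_{i,j}\sum_{v,w} \overline{(f_i)_v}\,(h_j)_w\, \hat y^{ij}_{v^\ast w}.
\end{equation}
This form is positive semidefinite because $M_k(\hat y)\geq 0$ for all $k$. Quotient by the null space $\mathcal N$ and complete to obtain a Hilbert space $\mathcal K$. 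Set $\ket{\Psi} = \sum_i \ket{i}\otimes \emptyset$, i.e.\ the classes of $\ket{i}\otimes 1$. Then $\braket{\Psi|\Psi} = \sum_i \hat y^{ii}_\emptyset = 1$.

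\textbf{Step 2: left multiplication.} Define $X_l$ on $V$ by $\ket{i}\otimes f \mapsto \ket{i}\otimes x_l f$. We must check two things: that $X_l$ preserves $\mathcal N$, and that it is bounded with $\|X_l\|\leq C$. Both follow from the Archimedean property, as in Lemma~\ref{lem:bounded-coeff}. The relation $C^2-\sum_i x_i^\ast x_i \in \mathcal M_Q$ gives $\langle \phi, (C^2 - x_l^\ast x_l)\phi\rangle \geq 0$ for every $\phi\in V$. This uses Claims~1 and~2 from that proof, now valid at every level because $\hat y$ satisfies all moment and localizing constraints. The inequality yields boundedness, and via Cauchy--Schwarz it also gives $X_l\mathcal N\subseteq\mathcal N$. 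Hence $X_l$ extends to $\mathcal K$. One also checks that $x_l^\ast$ acts as the adjoint of $X_l$, since $\hat y_{v^\ast w}$ pairs words consistently with the involution.

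\textbf{Step 3: the identification $\mathbb C^m\otimes\mathcal H$.} Problem~\eqref{eq:problem-P} asks for a vector in $\mathbb C^m\otimes \mathcal H$ with $X$ acting only on $\mathcal H$, so we need $\mathcal K \cong \mathbb C^m \otimes \mathcal H$. We will realize this by taking $\mathcal H$ to be the completion of $\mathbb C[x,x^\ast]$ modulo null vectors under a suitable form. If that does not work directly, a simpler route is to set $\mathcal H = \mathcal K$, let $\ket{\psi_i}$ be the class of $\ket{i}\otimes 1$ in $\mathcal K$, and set $\ket{\psi} = \sum_i \ket{i}\otimes\ket{\psi_i} \in \mathbb C^m\otimes\mathcal K$. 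Then $\bra{\psi_i} w(X)\ket{\psi_j} = \hat y^{ij}_w$, so $\hat y$ has a moment representation. This is exactly what is needed, since the objective and the $r^{(\ell)}$ constraints depend only on these moments.

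\textbf{Step 4: feasibility and conclusion.} The objective value is $\bra{\psi}p(X)\ket{\psi} = \sum_w \Tr[p_w \hat y_w^T] = \hat p$, and the $r^{(\ell)}$ constraints hold by the same computation. The constraints $q^{(j)}(X)\geq 0$ are the main obstacle. We must show $\bra{\zeta}q^{(j)}(X)\ket{\zeta}\geq 0$ for $\ket{\zeta}$ in the dense subspace spanned by $\ket{b}\otimes f(X)\ket{\psi_i}$, and then pass to all of $\mathbb C^{m_j}\otimes\mathcal H$. On the dense subspace, the quantity is exactly a quadratic form of the localizing matrix $M_{k-d_j}(q^{(j)}\hat y)$, which is positive semidefinite; this reverses the computation in the proof of Lemma~\ref{lem:solutions-exist}. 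Since $q^{(j)}(X)$ is bounded (all $X_l$ are bounded), positivity extends to the whole space by continuity. Therefore $(X,\psi)$ is feasible for \eqref{eq:problem-P}, so $p^\opt\leq \hat p$, and combined with $\hat p\leq p^\opt$ this proves $p^\opt = \lim_k p^{(k)}$.
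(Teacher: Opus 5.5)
Your proposal is correct and follows the paper's proof essentially step by step. It uses the same positive semidefinite form on $\mathbb C^m\otimes\mathbb C[x,x^\ast]$ built from $\hat y$, and makes left multiplication bounded and $\mathcal N$-invariant via $M_k(c\hat y)\geq 0$ and Claims~1--2 of Lemma~\ref{lem:bounded-coeff}. It then makes precisely the paper's choice $\mathcal H=\mathcal K$, $\ket{\psi}=\sum_i\ket{i}\otimes\ket{\psi_i}$ with $\ket{\psi_i}$ the class of $\ket{i}\otimes 1$, followed by the same localizing-matrix and density argument for $q^{(j)}(X)\geq 0$. The only loose end is the vector $\ket{\Psi}$ of Step~1: read as a single vector of $\mathcal K$, its norm is $\sum_{i,j}\hat y^{ij}_\emptyset$ rather than $\sum_i\hat y^{ii}_\emptyset$, but it plays no role once Step~3 works with $\ket{\psi}\in\mathbb C^m\otimes\mathcal K$.
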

    \begin{proof}
        We will start by constructing a suitable Hilbert space, taking inspiration from the construction of Stinespring dilations. Define a function $\langle \cdot, \cdot \rangle: (\mathbb C^m \otimes \mathbb C[x, x^\ast]) \times  (\mathbb C^m \otimes \mathbb C[x, x^\ast]) \to \mathbb C$ by 
        \begin{equation}
        \langle \varphi \otimes s , \eta \otimes t \rangle = \bra{\varphi} L_{\hat y}(s^\ast t) \ket{\eta}    
        \end{equation}
        for all $\ket{\eta}$, $\ket{\varphi} \in \mathbb C^m$ and $s$, $t \in \mathbb C[x, x^\ast]$, where $\hat y$ is the infinite sequence attaining $\hat p$ from Lemma \ref{lem:infinite-sequence}. By extending this function bi-linearly, we have defined a conjugate symmetric bilinear function on $\mathbb C^m \otimes \mathbb C[x, x^\ast]$. It is also positive semidefinite since $M_k(\hat y)$ is for $k$ large enough,
        because for $d_s = \max_{i \in [m]} s^{(i)}$,
        \begin{align}
            \sum_{i,j \in [m]} \bra{i}L_{\hat y}((s^{(i)})^\ast s^{(j)}) \ket{j} &= \sum_{i,j \in [m]}  \sum_{v, w \in \mathcal W_{d_s}} (s^{(i)}_v)^\ast  \bra{i} L_{\hat y}(v^\ast w) \ket{j} s^{(j)}_w  \\ &= \sum_{v, w \in \mathcal W_{d_s}} (s^{(i)}_v)^\ast  \bra{i} M_{d_s}(\hat y)(v, w)  \ket{j} s^{(j)}_w \label{eq:positivity-for-hat-y} \\
            &= \bra{s} M_{d_s}(\hat y) \ket{s}
        \end{align}
        with $\ket{s} = \sum_{i \in [m]} \sum_{w \in \mathcal W_{d_s}} s^{(i)}_w \ket{i} \otimes \ket{w}$. Therefore, it is a positive semidefinite bilinear form and hence satisfies the Cauchy-Schwarz inequality. Thus,
        \begin{equation}
            \mathcal N \coloneq \{ b \in \mathbb C^m \otimes \mathbb C[x, x^\ast]\, : \, \langle b, b \rangle = 0 \}
        \end{equation}
        is a linear subspace of $\mathbb C^m \otimes \mathbb C[x, x^\ast]$. Consequently, $\langle \cdot, \cdot \rangle$ is an inner product on $\mathbb C^m \otimes \mathbb C[x, x^\ast] \setminus \mathcal N$. 

            We can now define for any $x_i$, $i \in [2g]$, a linear function $\pi(x_i):\mathbb C^m \otimes \mathbb C[x, x^\ast] \to \mathbb C^m \otimes \mathbb C[x, x^\ast]$ as $\pi(x_i) \sum_{i} \ket{\psi_i} \otimes p^{(i)} = \sum_i \ket{\psi_i} \otimes  (x_ip^{(i)})$. As $c = C^2 - \sum_{i=1}^{2n} x_i^\ast x_i \in \mathcal M_Q$, it follows that $M_{k}(c\hat y) \geq 0$ for any sufficiently large $k$ as in Lemma \ref{lem:bounded-coeff}. Therefore, for any $s$, $t \in \mathcal W_k$,
            \begin{equation}
                0 \leq M_k(c\hat y) = C^2   M_k(\hat y) - \sum_{i \in [2g]}  M_k(x_i^\ast x_i\hat y)\, .
            \end{equation}
            As $M_k(x_i^\ast x_i\hat y) \geq 0$ for all $i \in [2g]$ (see Claim 1 in Lemma \ref{lem:bounded-coeff}), we obtain
            \begin{equation}\label{eq:bound-by-moment-matrix}
                0 \leq M_k(x_i^\ast x_i\hat y)\leq C^2  M_k(\hat y).
            \end{equation}
            Hence, for any $k$ large enough, any $b = \sum_{j\in [m]} \ket{j} \otimes b^{(j)} \in \mathcal N$, where $b^{(j)} \in \mathbb C[x,x^\ast]$ for all $j \in [m]$, and any $i \in [2g]$,
            \begin{align}
                \langle \pi(x_i) b, \pi(x_i) b \rangle &= \sum_{j, \ell \in [m]}  \bra{j} L_{\hat y}(b_j^\ast x_i^\ast x_i b_\ell) \ket{\ell} \nonumber\\ & 
                =\sum_{j, \ell \in [m]} 
                \sum_{v, w \in \mathcal W_k} (b_v^{(j)})^\ast b_w^{(\ell)}\bra{j} \underbrace{L_{\hat y}(v^\ast x_i^\ast x_i w)}_{=M_k(x_i^\ast x_i \hat y)(v,w)}\ket{\ell} \nonumber \\
                & = \bra{b} M_k(x_i^\ast x_i \hat y) \ket{b} \nonumber \\
                &\leq C^2 \bra{b} M_k( \hat y) \ket{b} \nonumber
                \\ & =C^2 \langle b, b \rangle \label{eq:bounded-Xi} \,,
            \end{align}
            where we have used  \eqref{eq:bound-by-moment-matrix} and $\ket{b}\coloneq \sum_{\ell \in [m]} \sum_{w \in \mathcal W_k} b_w^{(\ell)} \ket{\ell} \otimes \ket{w}$
        Hence, $\pi(x_i)$ leaves $\mathcal N$ invariant. Therefore, we can define $X_i$ on the inner product space $\mathbb C^m \otimes \mathbb C[x, x^\ast] \setminus \mathcal N$ as
            \begin{equation}
                X_i: b+\mathcal N  \mapsto \pi(x_i) b + \mathcal N
            \end{equation}
            for all $i \in [2g]$ and $b \in \mathbb C^m \otimes \mathbb C[x, x^\ast]$. Moreover, $X_i$ is bounded for all $i \in [2g]$ by the calculation in  \eqref{eq:bounded-Xi}. Hence, we can use the bounded linear extension theorem (or BLT theorem) to obtain a unique bounded operator on the Hilbert space
        \begin{equation}
            \mathcal K = \overline{\mathbb C^m \otimes \mathbb C[x, x^\ast] \setminus \mathcal N}
        \end{equation}
        which we also denote by  $X_i$. We can verify that for $i \in [g]$, it holds that $X_i^\ast = X_{i+g}$, since for any $s = \sum_{j \in [m]} \ket{j} \otimes s^{(j)} \in \mathbb C^m \otimes \mathbb C[x, x^\ast]$, $t= \sum_{j \in [m]} \ket{j} \otimes t^{(j)} \in \mathbb C^m \otimes \mathbb C[x, x^\ast]$, where $s^{(i)}$, $t^{(j)} \in \mathbb C[x, x^\ast]$ for all $i$, $j \in [m]$,
        \begin{align}
            \langle s, X_i^\ast t \rangle &= \langle X_i s, t \rangle \\ &= \sum_{j, \ell \in [m]} \bra{j} L_{\hat y}((s^{(j)})^\ast x_i^\ast t^{(\ell)})\ket{\ell}\\ &=  \sum_{j, \ell \in [m]} \bra{j} L_{\hat y}((s^{(j)})^\ast x_{i+g} t^{(\ell)})\ket{\ell}\\ & = \langle s, X_{i+g} t\rangle \,.
        \end{align}
        We can also define the inclusion of $\mathbb C^m$ into $\mathcal K$ as $V \ket{\psi} = \ket{\psi} \otimes \iota + \mathcal N$, where $\iota$ is the identity polynomial. We can verify that it is a contraction, as 
        \begin{align}
            \langle V \ket{\psi}, V\ket{\psi} \rangle &= \langle \psi \otimes \iota, \psi \otimes \iota \rangle = \bra{\psi} L_{\hat y}(\emptyset) \ket{\psi} \\
            & \implies \|V\|_{\infty} = \|L_{\hat y}(\emptyset)\|_{\infty} \leq 1
        \end{align}
        Now observe that if we define $\pi(s) = s(X)$ for any word $s \in \mathbb C[x, x^\ast]$,
        \begin{equation}
            \bra{\varphi} V^\ast \pi(w) V \ket{\psi}  = \bra{\varphi} L_{\hat y}(w) \ket{\psi} \, .
        \end{equation}
        Now, we define a linear operator
        \begin{align}
            \widetilde V: \mathbb C^m & \to \mathbb C^m \otimes \mathcal K \\
            \ket{i} &\mapsto \ket{i} \otimes V\ket{i} \quad \forall i \in [m]
        \end{align}
        and a map
        \begin{align}
            \widetilde \pi : \mathbb C[x, x^\ast] &\to \mathcal B(\mathbb C^m \otimes \mathcal K) \\
            s & \mapsto \mathds{1}_m \otimes \pi(s).
        \end{align}
        Finally, we can choose 
        \begin{equation}
            \ket{\psi} \coloneq \widetilde V\sum_{i \in [m]} \ket{i} = \sum_i \ket{i} \otimes \ket{\psi_i} \,,
        \end{equation}
        where $\ket{\psi_i} = V\ket{i}$. It holds that
        \begin{align}
            \braket{\psi|\psi} = \sum_{i \in [m]} \bra{i} L_{\hat y}(\emptyset) \ket{i} = \bra{i} \hat y^{ii}_{\emptyset} \ket{i} = 1
        \end{align}
        by Lemma \ref{lem:infinite-sequence}.

        It remains to check that the state $\ket{\psi}$, the Hilbert space $\mathcal K$ and $X$ as defined above form a feasible solution for \eqref{eq:problem-P} with value $\hat p$ (we only need to prove that $\hat p \geq p^\opt$, because $p^{(k)} \leq p^\opt$ always holds by Lemma \ref{lem:solutions-exist} and  \eqref{eq:same-objective-function}). For the objective function, we find that  
        \begin{align}
            \bra{\psi} p(X) \ket{\psi} & = \sum_{|w| \leq \deg{p}} \bra{\psi} p_w \otimes w(X) \ket{\psi} \\
            & = \sum_{|w|\leq \deg{p}} \sum_{j, \ell \in [m]} \bra{j} p_w \ket{\ell} \bra{\psi_j} w(X) \ket{\psi_{\ell}} \, .
        \end{align}
        Moreover, 
        \begin{align}
            \bra{\psi_j} w(X) \ket{\psi_\ell} &= \bra{j} V^\ast \pi(w) V \ket{\ell} \\
            & = \bra{j} L_{\hat y}(w) \ket{\ell}\\& = \hat y_w^{j\ell} \, .
        \end{align}
        Hence, we conclude that 
        \begin{equation}
            \bra{\psi} p(X) \ket{\psi} = \hat p
        \end{equation}
        as desired.

        Likewise, for $j \in \mathcal Q$, let $\ket{z} = \sum_{i \in [m_j]} \sum_{s \in [m]} \sum_{|u| \leq K } \ket{i} \otimes \ket{s} \otimes z^{(i,s)}_u \ket{u} \in \mathbb C^{m_j} \otimes  \mathbb C^m \otimes \mathbb C[x,x^\ast]$, where $z^{(i,s)}_u \in \mathbb C$ for all $i \in [m_j]$, $s \in [m]$, $|u| \leq K$ and $K = \max_{i \in [m_j], s \in [m]} \deg(z^{(i,s)})$. We infer that
        \begin{align}
                    & \bra{z}q^{(j)}(X) \ket{z} \\ &= \sum_{i,\ell \in [m_j]} \sum_{s,t \in [m]} \sum_{|u|,|v| \leq K}  \sum_{|w|\leq \deg(q^{(j)})} \bra{i}q_w\ket{\ell} (z_u^{(i,s)})^\ast z_v^{(\ell,t)} \langle s \otimes u, t \otimes wv \rangle\\
                    &=\sum_{i,\ell \in [m_j]} \sum_{s,t \in [m]} \sum_{|u|,|v| \leq K}  \sum_{|w|\leq \deg(q^{(j)})} \bra{i}q_w\ket{\ell} (z_u^{(i,s)})^\ast z_v^{(\ell,t)}\bra{s} L_{\hat y}(u^\ast w v) \ket{t} \\
                   &=\sum_{i,\ell \in [m_j]} \sum_{s,t \in [m]} \sum_{|u|,|v| \leq K} (z_u^{(i,s)})^\ast z_v^{(\ell,t)} (\bra{i} \otimes \bra{s}) M_k(q\hat y)(u,v) (\ket{j} \otimes \ket{t})\\
                   &=\bra{z} M_k(q\hat y)\ket{z} \geq 0
        \end{align}
        for all $j \in \mathcal Q$ as $M_k(q\hat y) \geq 0$ for $k \geq K$ large enough by Lemma \ref{lem:infinite-sequence}. As it is enough to check positivity on a dense subset, we conclude that $q^{(j)}(X) \geq 0$ for all $j \in \mathcal Q$.

        Finally, for the constraints indexed by $\ell \in \mathcal R$,
        \begin{equation}
            \bra{\psi}r^{(\ell)}(X) \ket{\psi} =   \sum_{|w| \leq \deg(r^{(\ell)})} \sum_{i, j \in [m]} \bra{i} r_w^{(\ell)} \ket{j} \underbrace{\bra{\psi_i} w(X) \ket{\psi_j}}_{\hat y_w^{ij}}\, .
        \end{equation}
    Thus, using Lemma \ref{lem:infinite-sequence} we obtain for all $\ell \in \mathcal R$ and $k \geq \deg(r^{(\ell)})$
    \begin{equation}
    \bra{\psi} r^{(\ell)}(X) \ket{\psi}=  \sum_{|w| \leq \deg(r^{(\ell)})} \Tr[r^{(\ell)}_w  \hat y_w^T] \geq 0 \,. 
    \end{equation}
    All things considered, we have thus shown that $\hat p = p^\opt$, which concludes the proof.
    \end{proof}

\subsection{Sufficient criterion for optimality}

Moreover, we can detect optimality and extract optimizers in very similar way as for the original NPA hierarchy. The following extends Theorem 2 of \cite{pironio2010convergent} and the proof is a straightforward adaptation of the proof of \cite[Theorem 2]{pironio2010convergent}. For convenience, we give it here nonetheless.
\begin{thm}
    Let $k \geq k_0$ and assume that the optimal solution $y^{(k)}$ of \eqref{eq:problem-Rk} satisfies
    \begin{equation} \label{eq:rank-equal}
       \rank M_{k}(y^{(k)}) = \rank M_{k-d}(y^{(k)}),
    \end{equation}
    where $d = \max_{j \in\mathcal Q} d_j$ and $d_j = \lceil \deg(q^{(j)})/2\rceil$ for all $j \in \mathcal Q$. Then, the value $p^{(k)}$ of \eqref{eq:problem-Rk} is equal to the value $p^\opt$ of \eqref{eq:problem-P}. Moreover, there exists an optimizer of $(\mathcal H, X, \ket{\psi})$ of \eqref{eq:problem-P} with $\dim \mathcal H =  \rank M_{k-d}(y^{(k)})$.
 \end{thm}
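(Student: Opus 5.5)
The plan adapts the flat-extension argument of \cite[Theorem 2]{pironio2010convergent} (going back to Curto--Fialkow) to the matrix-valued moments. Write $y=y^{(k)}$ and consider the positive semidefinite form $\langle \varphi\otimes s,\eta\otimes t\rangle = \bra{\varphi}L_y(s^\ast t)\ket{\eta}$. This is the same form used in the proof of Theorem~\ref{matrix-NPA-converges}, but now restricted to $\mathbb C^m\otimes\mathbb C[x,x^\ast]_k$, where it is represented by the Gram matrix $M_k(y)$. Factor $M_k(y)=G^\ast G$ with $G$ having $r=\rank M_k(y)$ rows. Each column of $G$, indexed by a pair $(i,w)$ with $i\in[m]$ and $|w|\le k$, gives a vector $v_{i,w}\in\mathbb C^r$ satisfying $\langle v_{i,s},v_{j,t}\rangle = y^{ij}_{s^\ast t}$. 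Set $\mathcal H=\mathbb C^r$. Because of the rank condition~\eqref{eq:rank-equal}, the vectors $v_{i,w}$ with $|w|\le k-d$ already span $\mathcal H$. This gives $\dim\mathcal H=\rank M_{k-d}(y)$.

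\textbf{Defining the operators.} On the spanning set, I would define $X_\ell v_{i,w}:=v_{i,x_\ell w}$ for $|w|\le k-d$, where $d\ge1$ guarantees $|x_\ell w|\le k$. The central step is well-definedness. If $\sum c_{i,w}v_{i,w}=0$ with all $|w|\le k-d$, then we need $\sum c_{i,w}v_{i,x_\ell w}=0$. I would verify this as in Pironio et al. Compute the inner product of $\sum c_{i,w}v_{i,x_\ell w}$ with an arbitrary $v_{j,u}$, $|u|\le k-1$. This equals $\sum c_{i,w}\,y^{ji}_{u^\ast x_\ell w}=\langle v_{j,x_\ell^\ast u},\sum c_{i,w}v_{i,w}\rangle=0$, using the Hankel structure $M_k(y)(s,t)=y_{s^\ast t}$ and $|x_\ell^\ast u|\le k$. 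The rank condition ensures the columns of $M_k(y)$ with $|u|\le k-1$ also span everything, so the norm of the vector vanishes. In effect, the flatness lets us extend $y$ uniquely to a flat positive moment sequence of all orders, and I expect this to be the main obstacle. The bookkeeping of the extra index $i$ is harmless, because every relation acts only on the word index and the $m$ copies are carried along passively.

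\textbf{Checking the representation.} The Hankel structure gives $X_{\ell+g}=X_\ell^\ast$. Induction on word length shows $\langle v_{i,\emptyset},w(X)v_{j,\emptyset}\rangle=y^{ij}_w$ for all $|w|\le 2k$, after splitting $w=s^\ast t$ with $|s|,|t|\le k$. Set $\ket{\psi}=\sum_i\ket{i}\otimes v_{i,\emptyset}$. Then $\sum_i y^{ii}_\emptyset=1$ gives normalization. The computation~\eqref{eq:same-objective-function} shows that the objective equals $p^{(k)}$ and that each constraint $\bra{\psi}r^{(\ell)}(X)\ket{\psi}\ge0$ holds, since $2k\ge\deg r^{(\ell)}$.

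\textbf{Positivity constraints and conclusion.} For $q^{(j)}(X)\ge0$, take a general vector $\sum_{b}\ket{b}\otimes\sum_{i,w}z_{b,i,w}v_{i,w}$ with $|w|\le k-d\le k-d_j$, which is enough because these vectors span. Evaluating $q^{(j)}(X)$ on it gives exactly the quadratic form of the localizing matrix $M_{k-d_j}(q^{(j)}y)$, by the same calculation as in the last part of Theorem~\ref{matrix-NPA-converges}. This form is nonnegative by feasibility. Hence $(\mathcal H,X,\ket{\psi})$ is feasible for~\eqref{eq:problem-P} with value $p^{(k)}$, so $p^\opt\le p^{(k)}$. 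Combined with $p^{(k)}\le p^\opt$ from Lemma~\ref{lem:solutions-exist}, this gives equality, and the triple is an optimizer of the stated dimension.
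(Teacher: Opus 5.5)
Your proposal is correct and follows essentially the same route as the paper's proof. Both arguments Gram-factorize $M_k(y^{(k)})$, use the rank condition to obtain a spanning set indexed by words of length at most $k-d$, and define shift operators $X_\ell v_{i,w}=v_{i,x_\ell w}$ whose well-definedness follows from the same Hankel-structure inner-product computation; feasibility is then read off from the moment and localizing matrices, and combined with $p^{(k)}\le p^{\opt}$.

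One step should be made explicit. Your induction on word length needs $X_\ell v_{i,w}=v_{i,x_\ell w}$ for all $|w|\le k-1$, not only for $|w|\le k-d$. This follows from the very same orthogonality computation, which is why the paper defines $X_\ell$ directly on words of length at most $k-1$.
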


\begin{proof}
    We only need to show that $p^{(k)}\geq p^\opt$. We will achieve this by constructing an explicit solution to \eqref{eq:problem-P} with value $p^{(k)}$ as follows: Let $b=\rank M_{k}(y^{(k)}) =   \rank M_{k-d}(y^{(k)})$. As $ M_{k}(y^{(k)}) \geq 0$, it can be written as a Gram matrix. Therefore, there exist vectors $\ket{w^{(i)}} \in \mathbb C^{b}$ for all $i \in [m]$, $|w| \leq k$ such that 
    \begin{equation}
        (y^{(k)})^{ij}_{v^\ast w}=\braket{v^{(i)}|w^{(j)}} \qquad \forall i,\, j \in [m], ~\forall |v|,\, |w| \leq k \,.
    \end{equation}
    We define the Hilbert space $\mathcal H \coloneq \{\ket{w^{(i)}}: i \in [m], w \in \mathcal W_k\}$. It follows that $\dim \mathcal H = b$ (see \cite[Theorem 7.2.10]{horn2012matrix}).
    As $M_{k-d}(y^{(k)})$ is a submatrix of $M_{k}(y^{(k)})$ which has the same rank by  \eqref{eq:rank-equal}, \cite[Theorem 7.2.10]{horn2012matrix} implies that
    \begin{equation} \label{eq:Gram-Hilbert-space}
        \mathcal H = \{\ket{w^{(i)}}: i \in [m], w \in \mathcal W_k\}= \{\ket{w^{(i)}}: i \in [m], w \in \mathcal W_{d-k}\} \,.
    \end{equation}
    On any linearly independent set of vectors $\ket{w^{(j)}} \in \mathcal H$ with $|w| \leq k-1$, we can define linear operators $X_i$, $i \in [2g]$ as
    \begin{equation}
        X_i \ket{w^{(j)}} = \ket{(x_iw)^{(j)}}
    \end{equation}
    and extending by linearity. These are well-defined operators on $\mathcal H$, since such vectors span the whole Hilbert space by \eqref{eq:Gram-Hilbert-space}, as $k-d \leq k-1$. The definition does not depend on the subset of vectors $\ket{w^{(j)}}$ that we choose, since for $\ket{y} \in \mathcal H$ and two different compositions $\ket{y} = \sum_{|w|\leq k-1} \sum_{i \in [m]} a^{(i)}_w \ket{w^{(i)}}$ and $\ket{y} = \sum_{|w|\leq k-1} \sum_{i \in [m]} c^{(i)}_w \ket{w^{(i)}}$ with $a^{(i)}_w$, $c^{(i)}_w \in \mathbb C$ for all $i \in [m]$, $w \in \mathcal W_k$, we claim that 
    \begin{equation}
        \sum_{|w|\leq k-1} \sum_{j \in [m]} c^{(j)}_w \ket{(x_iw)^{(j)}} = \sum_{|w|\leq k-1} \sum_{j \in [m]} c^{(j)}_w \ket{(x_iw)^{(j)}}
    \end{equation}
    for all $i \in [2g]$. 
    
    This claim is true, because for any $\ket{v^{(\ell)}}$ where $|v| \leq k-d \leq k-1$ and $\ell \in [m]$, we infer
    \begin{align}
        &\bra{v^{(\ell)}} \sum_{|w|\leq k-1}\sum_{j \in [m]} (a^{(j)}_w - c^{(j)}_w) \ket{(x_iw)^{(j)}} \\
        &= \sum_{|w|\leq k-1}\sum_{j \in [m]} (a^{(j)}_w - c^{(j)}_w) (y^{(k)})^{\ell j}_{v^\ast x_i w} \\
        &= \bra{(x_i^\ast v)^{(\ell)}} \sum_{|w|\leq k-1}\sum_{j \in [m]} (a^{(j)}_w - c^{(j)}_w) \ket{w^{(j)}} \\
        &=0 \,,
    \end{align}
    where in the last line we have used that both sums are decompositions of $\ket{y}$. Since such vectors $\ket{v^{(\ell)}}$ span $\mathcal H$, the claim follows. We can also see that $X_i^\ast = X_{g+i}$ for all $i \in [g]$, as
    \begin{equation}
        \bra{v^{(\ell)}} X_i^\ast \ket{w^{(j)}} = \braket{(x_iv)^{(\ell)}|w^{(j)}}=(y^{(k)})^{\ell j}_{v^\ast x_i^\ast w} = (y^{(k)})^{\ell j}_{v^\ast x_{g+i} w} = \bra{v^{(\ell)}} X_{i+g} \ket{w^{(j)}}
    \end{equation}
    for all $j$, $\ell \in [m]$, $v$, $w \in \mathcal W_{k-1}$.

Having defined $\mathcal H$ and $X$, it remains to fix $\ket{\psi}$. We define $\ket{\psi_i} \coloneq \ket{\emptyset^{(i)}}$ for all $i \in [m]$ and $\ket{\psi}:= \sum_{i \in [m]} \ket{i} \otimes \ket{\psi_i}$. The vector $\ket{\psi}$ is normalized, since 
\begin{equation}
    \braket{\psi|\psi} = \sum_{i \in [m]} \braket{\psi_i | \psi_i} = \sum_{i \in [m]} (y^{(k)})_\emptyset^{ii} =1.
\end{equation}

We are now left with checking that $(\mathcal H, X, \ket{\psi})$ satisfies the constraints in \eqref{eq:problem-P} and has value $p^{(k)}$. We observe that for $w \in \mathcal W_{2k}$ and $w_1$, $w_2 \in \mathcal W_k$ such that $w=w_1w_2$, for all $i$, $j \in [m]$ it holds that
\begin{equation}
    \bra{\psi_i} w(X) \ket{\psi_j} =  \braket{(w_1^\ast)^{(i)} | w_2^{(j)}} = (y^{(k)})^{ij}_w \,.
\end{equation}
Thus,
\begin{align}
    \bra{\psi} p(X) \ket{\psi} &= \sum_{i,j \in [m]} \sum_{|w| \leq \deg(p)} \bra{i} p_w \ket{j} \bra{\psi_i} w(X) \ket{\psi_j}\\ & = \sum_{i,j \in [m]} \sum_{|w| \leq 2k} \bra{i} p_w \ket{j} (y^{(k)})_w^{ij} \\ &= p^{(k)}.
\end{align}
For the constraints, for $j \in \mathcal Q$,  $q^{(j)} \geq 0$ is equivalent to checking positive semidefiniteness of the matrix $M \in \mathcal B(\mathbb C^m \otimes \mathbb C^{|\mathcal W_{k-d}|})$ with entries 
\begin{equation}
M^{i\ell}(v,w) = \bra{i} \otimes \bra{v^{(i)}} q^{(j)}(X) \ket{\ell} \otimes \ket{w^{(\ell)}} = \sum_{|s|\leq \deg(q^{(j)})} \bra{i}q_s\ket{\ell} (y^{(k)})^{i\ell}_{v^\ast s w}  
\end{equation}
for $i$, $\ell \in [m]$ and $v$, $w \in \mathcal W_{k-d}$. Thus, we see that $M=M_{k-d}(q^{(j)} y^{(k)})$, which is the matrix $M_{k-d_j}(q^{(j)} y^{(k)})$ restricted to a subspace and therefore positive semidefinite, as $y^{(k)}$ is a solution of \eqref{eq:problem-Rk}. Finally, for $\ell \in \mathcal R$, 
\begin{align}
    \bra{\psi} r^{(\ell)}(X) \ket{\psi} &= \sum_{|w| \leq \deg{(r^{(\ell)})}} \sum_{i,j \in [m]} \bra{i} r^{(\ell)}_w \ket{j} \bra{\psi_i} w(X)\ket{\psi_j}\\& = \sum_{|w| \leq 2k} \sum_{i,j \in [m]} \bra{i} r^{(\ell)}_w \ket{j} (y^{(k)})^{ij}_{w} \\ &\geq 0\, ,
\end{align}
as $y^{(k)}$ is a solution of \eqref{eq:problem-Rk}.
\end{proof}

\subsection{Connection to the Positivstellensatz by Helton and McCullough}

In this section, we will discuss in how far the NPA hierarchy with matrix-valued polynomials we have described is implied by the Positivstellensatz for matrix-valued polynomial in \cite{helton2004positivstellensatz}, as mentioned in \cite{pironio2010convergent}. The relation between the two that we present here is very similar to link between the Positivstellensatz and the NPA hierarchy described in Section 3.4 of \cite{pironio2010convergent}, see also \cite{doherty2008quantum}.

We consider the following optimization problem for $k \geq k_0$ (see \eqref{eq:problem-Rk}):

\begin{align*}
    \mathrm{maximize} \quad & \quad  \lambda  \\
    \mathrm{such~that} \quad & \quad p-\lambda =  \sum_{i} f_i^\ast f_i + \sum_{i} \sum_{j \in \mathcal Q} g_{ij}^\ast q^{(j)} g_{ij} + \sum_{\ell \in \mathcal R} c_\ell r^{(\ell)} \\
    \quad & \quad \max_i \deg(f_i) \leq k  \tag{$\mathbf{D_k}$} \label{eq:problem-Dk} \\
    \quad & \quad \max_i \deg(g_{ij}) \leq k-d_j \quad \forall j \in \mathcal Q\\
    \quad & \quad c_{\ell} \geq 0 \quad \forall \ell \in \mathcal R\\
    \quad & \quad f_i \in \mathcal B(\mathbb C^m, \mathbb C)[x, x^\ast] \quad \forall i \\
    \quad & \quad g_{ij} \in \mathcal B(\mathbb C^m, \mathbb C^{m_j})[x, x^\ast] \quad \forall i, \, \forall j \in \mathcal Q \\
    \quad & \quad c_{\ell} \in \mathbb R \quad \forall \ell \in \mathcal R
\end{align*}
Here, $d_j = \lceil \deg(d^{(j)})/2\rceil$. Let $\lambda^{(k)}$ be the optimal value of \eqref{eq:problem-Dk}. Using a similar argument as in \cite[Appendix B]{pironio2010convergent}, the problem can be formulated as an SDP, and it can be shown to be the dual SDP of \eqref{eq:problem-Rk}. Therefore, it holds that
\begin{equation}
    \lambda^{(k)} \leq p^{(k)} \qquad \forall k \geq k_0 \,.
\end{equation}
We can also see this directly: Let us consider a feasible solution to \eqref{eq:problem-Dk}. Our aim is to show for $y$ a feasible solution of \eqref{eq:problem-Rk} that 
\begin{equation}
\sum_{|w| \leq 2k} \Tr[p_w y_w^T] - \lambda^{(k)} \geq 0
\end{equation}
where we will make use of the fact that
\begin{equation}
    \bra{\Omega} L_y(p - \lambda^{(k)}) \ket{\Omega} = \sum_{|w| \leq 2k} \Tr[p_w y_w^T] - \lambda^{(k)} \, ,
\end{equation}
where $\ket{\Omega} = \sum_{i \in [m]} \ket{i} \otimes \ket{i}$ is an unnormalized maximally entangled state. Using that $L_y$ is linear, we obtain that
\begin{align}
    &\bra{\Omega} L_y(p - \lambda^{(k)}) \ket{\Omega} \\& = \sum_{i} \bra{\Omega} L_y(f_i^\ast f_i)\ket{\Omega} + \sum_{i} \sum_{j \in \mathcal Q} \bra{\Omega} L_y(g_{ij}^\ast q^{(j)} g_{ij})\ket{\Omega} + \sum_{\ell \in \mathcal R} c_\ell \bra{\Omega} L_y(r^{(\ell)}) \ket{\Omega}.
\end{align}
It is sufficient to show that $\bra{\Omega} L_y(f_i^\ast f_i)\ket{\Omega} \geq 0$ for all $i$, $ \bra{\Omega}L_y(g_{ij}^\ast q^{(j)} g_{ij})\ket{\Omega} \geq 0$ for all $j \in \mathcal Q$ and all $i$, and $ \bra{\Omega} L_y(r^{(\ell)}) \ket{\Omega} \geq 0$ for all $\ell \in \mathcal R$.

For the first assertion, we find that 
\begin{align}
    \bra{\Omega} L_y(f_i^\ast f_i)\ket{\Omega} & = \sum_{|v|, |w| \leq k} \bra{\Omega}(f_i)_v^\ast (f_i)_w \otimes y_{v^\ast w}\ket{\Omega} \\
    & = \sum_{|v|, |w| \leq k}  (f_i)_w \underbrace{y_{v^\ast w}^T}_{M_k(y)^T(v,w)} (f_i)_v^\ast \\
    & = \bra{f_i} M_k(y)^T \ket{f_i} \, ,
\end{align}
where $\ket{f_i} = \sum_{|w| \leq k} (f_i)_w^\ast \otimes \ket{w} \in \mathbb C^m \otimes \mathbb  C^{|\mathcal W_k|}$. Since $y$ is feasible, $M_k(y) \geq 0$ holds and hence $\bra{\Omega} L_y(f_i^\ast f_i)\ket{\Omega} \geq 0$ as claimed.

For the second assertion, it holds that
\begin{align}
    &\bra{\Omega}L_y(g_{ij}^\ast q^{(j)} g_{ij})\ket{\Omega} \\&= \sum_{|v|, |w| \leq k-d_j}  \bra{\Omega}(g_{ij})^\ast_v \otimes \mathds{1} \underbrace{\sum_{|s| \leq \deg(q^{(j)})} q^{(j)}_s \otimes y_{v^\ast s w}}_{M_{k-d_j}(q^{(j)}y)(v,w)} \,(g_{ij})_w \otimes \mathds{1}\ket{\Omega} \\
    &= \bra{g_{ij}} M_{k-d_j}(q^{(j)}y) \ket{g_{ij}} \, ,
\end{align}
where $\ket{g_{ij}} = \sum_{|w| \leq k-d_j} ((g_{ij})_w \otimes \mathds{1}\ket{\Omega}) \otimes \ket{w} \in \mathbb C^{m_j} \otimes \mathbb C^m \otimes \mathbb  C^{|\mathcal W_{k-d_j}|}$. Since $y$ is feasible, $M_{k-d_j}(q^{(j)}y) \geq 0$ holds and hence $\bra{\Omega}L_y(g_{ij}^\ast q^{(j)} g_{ij})\ket{\Omega} \geq 0$ as claimed.

For the third assertion, 
\begin{equation}
    \bra{\Omega} L_y(r^{(\ell)}) \ket{\Omega} = \sum_{|w| \leq 2k} \Tr[r^{(\ell)}_w y_w^T] \, ,
\end{equation}
which is positive because $y$ is feasible.

Hence, we have shown that $\lambda^{(k)} \leq p^{(k)} \leq p^\opt$. To see that $\lambda^{(k)}$ also converges to $p^\opt$, we see that if $\mathcal R = \emptyset$, for any $\varepsilon >0$ it holds that $p-(p^\opt - \varepsilon)$ is strictly positive on $\mathcal S_Q$.  Hence, the Positivstellensatz in \cite{helton2004positivstellensatz}\footnote{To be precise, the result in \cite{helton2004positivstellensatz} only holds for matrix-valued non-commutative polynomials in which the matrices have real entries. It is however folklore that the same proof extends to matrices with complex entries. For scalar-valued non-commutative polynomials, this was explicitly shown in \cite{doherty2008quantum}.} yields that
\begin{equation}
    p-p^\opt + \varepsilon =  \sum_{i} f_i^\ast f_i + \sum_{i} \sum_{j \in \mathcal Q} g_{ij}^\ast q^{(j)} g_{ij}
\end{equation}
for suitable polynomials $f_i \in \mathcal B(\mathbb C^m, \mathbb C)[x, x^\ast]$, $g_{ij} \in \mathcal B(\mathbb C^m, \mathbb C^{m_j})[x, x^\ast]$. Hence, for $k \geq \max_{i,j}\{\deg(f_i), \deg(g_{ij})+d_j\}$, it holds that
\begin{equation}
    p^\opt - \varepsilon \leq \lambda^{(k)} \leq p^{(k)} \leq p^\opt
\end{equation}
and $\lambda^{(k)}$ converges to $p^\opt$ as $\varepsilon$ was arbitrary. In the case of $\mathcal R \neq \emptyset$, it is very plausible that a suitable Positivstellensatz should exist, but we are not aware of a result that is applicable to our situation.

\section{\texorpdfstring{On the $C^\ast$-algebra in one-sided DIQKD}
{}}\label{sec:more_efficient_presentation}

In Section \ref{subsec:npa_hierarchy_with_matrix_constraints} we sketched the resulting relations in a one-sided DIQKD protocol for the algebraic constraints, but we did not verify that they actually suffice in a mathematically rigorous  sense. This is what we rectify in the present section. Let us recall the relations in the following
\begin{enumerate}
    \item[(R1)] \emph{$*$-structure:}
    \begin{align}
        \eta_{i,j,b\vert y}^\ast = \eta_{j,i,b\vert y}
        \quad\text{for all }i \in [m], j \in [m], b \in \mathcal B, y \in \mathcal Y.
    \end{align}
    \item[(R2)] \emph{Matrix-unit relations within a fixed input-output $(b,y)$:}
    \begin{align}
        \eta_{i,j,b\vert y}\,\eta_{k,\ell,b\vert y}
        = \delta_{j,k}\,\eta_{i,\ell,b\vert y}
        \quad\text{for all }i,j,k,\ell \in [m], b \in \mathcal B, y \in \mathcal Y.
    \end{align}
    \item[(R3)] \emph{Orthogonality of different outcomes for the same input on Bob's side $y$:}
    \begin{align}
        \eta_{i,j,b\vert y}\,\eta_{k,\ell,b^\prime\vert y} = 0
        \quad\text{for all }i,j,k,\ell \in [m],y \in \mathcal Y\text{ and }b\neq b' \in \mathcal B.
    \end{align}
    \item[(R4)] \emph{Completeness for each question $y$:}
    \begin{align}
        \sum_{i=1}^m \sum_{b\in \mathcal B} \eta_{i,i,b\vert y} = \mathds{1}
        \quad\text{for all }y\in \mathcal Y.
    \end{align}
    \item[(R5)] \emph{Independence of the input label for Alice's part:}
    For all $i,j \in [m]$ and all $y,y'\in \mathcal Y$,
    \begin{align}
        \sum_{b\in \mathcal B} \eta_{i,j,b\vert y} = \sum_{b\in \mathcal B} \eta_{i,j,b\vert y'}.
    \end{align}
\end{enumerate}
from Section \ref{subsec:npa_hierarchy_with_matrix_constraints}. Indeed, the claim is that these relations generate a $C^\ast$-algebra, which is isomorphic to $\mathbb M_m(\mathbb{C}) \otimes \mathfrak{B}$, if $\mathfrak{B}$ is the algebra generated by Bob's POVM elements. To justify this argument we show in the following an isomorphism of $C^\ast$-algebras.

Let $\mathfrak{C}$ be the universal $C^\ast$-algebra generated by symbols $\{\eta_{i,j,b\vert y}\}_{i \in [m],j \in [m],b \in \mathcal B,y \in \mathcal Y}$ subject to the relations above. The universal $C^\ast$-algebra exists because the relations bound the generators in norm: by (R1) and (R2) the element $\eta_{i,j,b\vert y}^\ast \eta_{i,j,b\vert y} = \eta_{j,i,b\vert y} \eta_{i,j,b\vert y} = \eta_{j,j,b\vert y}$ is a self-adjoint idempotent, so $\lVert \eta_{i,j,b\vert y} \rVert_\infty \leq 1$ in every representation, and the supremum over all admissible representations defines a $C^\ast$-norm (see \cite[Section~II.8.3]{Blackadar2006} and \cite[Chapter~3]{Loring1997} for the general theory of universal $C^\ast$-algebras defined by generators and norm-bounding relations).

Since $\mathbb M_m(\mathbb{C})$ is finite-dimensional and hence nuclear, there is only one
$C^\ast$-norm on the algebraic tensor product $\mathbb M_m(\mathbb{C}) \odot \mathfrak{A}$ for any
$C^\ast$-algebra $\mathfrak{A}$; nuclearity of type~I (in particular finite-dimensional)
$C^\ast$-algebras goes back to Takesaki \cite{Takesaki1964}, see also
\cite[Section~II.9.4]{Blackadar2006} and \cite[Chapter~3]{BrownOzawa2008}. We therefore write
$\mathbb M_m(\mathbb{C}) \otimes \mathfrak{A}$ without specifying a tensor norm and use the
identification $\mathbb M_m(\mathbb{C}) \otimes \mathfrak{A} \cong \mathbb M_m(\mathfrak{A})$ freely. 

\begin{lem}\label{lem:matrix_units}
In $\mathfrak{C}$ let
\begin{align}
    \theta_{i,j} \coloneqq \sum_{b \in \mathcal B} \eta_{i,j,b\vert y}, \qquad
    P(b\vert y) \coloneqq \eta_{1,1,b\vert y},
\end{align}
where $\theta_{i,j}$ is independent of $y$ by \textnormal{(R5)}. Then
\begin{enumerate}
    \item[(1)] $\{\theta_{i,j}\}_{i,j=1}^m$ is a system of matrix units, that is $\theta_{i,j}^\ast = \theta_{j,i}$ and
          $\theta_{i,j}\theta_{k,\ell} = \delta_{j,k}\theta_{i,\ell}$, and $\sum_{i=1}^m \theta_{i,i} = \mathds{1}$;
    \item[(2)] $\eta_{i,j,b\vert y} = \theta_{i,1}\, P(b\vert y)\, \theta_{1,j}$ for all $i,j \in [m], b \in \mathcal B, y \in \mathcal Y$;
    \item[(3)] each $P(b\vert y)$ is a projection with $\sum_{b \in \mathcal B} P(b\vert y) = \theta_{1,1}$, and the
          corner $\theta_{1,1}\mathfrak{C}\theta_{1,1}$ is the $C^\ast$-subalgebra of $\mathfrak{C}$
          generated by $\{P(b\vert y)\}_{b \in \mathcal B,y \in \mathcal Y}$.
\end{enumerate}
\end{lem}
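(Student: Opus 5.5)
The plan is to verify the three claims by direct algebraic manipulation inside $\mathfrak{C}$, using only (R1)--(R5). Everything reduces to expanding sums over outcomes and discarding cross terms with (R3).

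\textbf{Claim (1).} Fix any $y$ and define $\theta_{i,j}=\sum_b \eta_{i,j,b\vert y}$. Taking adjoints termwise and applying (R1) gives $\theta_{i,j}^\ast=\theta_{j,i}$. For the product, expand
\[
\theta_{i,j}\theta_{k,\ell}=\sum_{b,b'}\eta_{i,j,b\vert y}\,\eta_{k,\ell,b'\vert y}.
\]
Compute both factors with the same $y$; this is legitimate because (R5) makes $\theta$ independent of $y$. By (R3) only the terms with $b=b'$ survive, and by (R2) these equal $\delta_{j,k}\eta_{i,\ell,b\vert y}$. Summing over $b$ gives $\delta_{j,k}\theta_{i,\ell}$. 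The identity $\sum_i\theta_{i,i}=\mathds{1}$ is exactly (R4).

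\textbf{Claim (2).} Expand
\[
\theta_{i,1}\,\eta_{1,1,b\vert y}\,\theta_{1,j}=\sum_{b',b''}\eta_{i,1,b'\vert y}\,\eta_{1,1,b\vert y}\,\eta_{1,j,b''\vert y},
\]
again choosing the same $y$ inside $\theta$ via (R5). Orthogonality (R3) forces $b'=b''=b$. Two applications of (R2) then collapse the product to $\eta_{i,j,b\vert y}$.

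\textbf{Claim (3).} By (R1) and (R2), $P(b\vert y)=\eta_{1,1,b\vert y}$ is self-adjoint and idempotent, so it is a projection. The identity $\sum_b P(b\vert y)=\theta_{1,1}$ is the definition of $\theta_{1,1}$.

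For the corner statement, first note $P(b\vert y)=\theta_{1,1}P(b\vert y)\theta_{1,1}$ by (2) with $i=j=1$. Hence the $C^\ast$-algebra generated by the $P(b\vert y)$ lies in the corner $\theta_{1,1}\mathfrak{C}\theta_{1,1}$. Here $\theta_{1,1}$ serves as the unit of the corner; the $C^\ast$-subalgebra is generated non-unitally, but $\theta_{1,1}=\sum_b P(b\vert y)$ belongs to it anyway.

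For the converse inclusion, $\mathfrak{C}$ is the closed span of monomials in the generators $\eta$ and their adjoints, which by (R1) are again $\eta$'s, together with $\mathds{1}=\sum_i\theta_{i,i}$. By (2) and (1) every monomial can be rewritten as
\[
\eta_{i_1,j_1,\cdot}\cdots\eta_{i_r,j_r,\cdot}=\theta_{i_1,1}\,P\,\theta_{1,j_1}\theta_{i_2,1}\,P\cdots=\theta_{i_1,1}\,\big(\text{word in }P\text{'s and }\theta_{1,1}\big)\,\theta_{1,j_r}.
\]
This uses $\theta_{1,j}\theta_{k,1}=\delta_{j,k}\theta_{1,1}$. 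Sandwiching by $\theta_{1,1}$ kills the outer factors unless $i_1=j_r=1$. So $\theta_{1,1}(\text{monomial})\theta_{1,1}$ is either $0$ or a word in the $P(b\vert y)$, where $\theta_{1,1}$ itself counts as $\sum_bP$. The identity monomial gives $\theta_{1,1}\mathds{1}\theta_{1,1}=\theta_{1,1}$. Finally, linearity and continuity of $z\mapsto\theta_{1,1}z\theta_{1,1}$ extend the inclusion from monomials to all of $\mathfrak{C}$.

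This closure and density step is the only non-purely-algebraic point. I expect it to be the main obstacle, though only a mild one. It is handled by noting that the $\ast$-polynomials in the generators are dense in the universal $C^\ast$-algebra by construction.
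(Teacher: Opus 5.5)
Your proposal is correct and follows the paper's proof essentially step for step: you evaluate every $\theta$ at a common input $y$ (allowed by (R5)) and let (R3) kill the cross terms for (1) and (2), and for (3) you compress words in the $\eta$'s by $\theta_{1,1}$ so that they land in the $C^\ast$-algebra generated by the $P(b\vert y)$. Your explicit treatment of the easy inclusion via $P(b\vert y)=\theta_{1,1}P(b\vert y)\theta_{1,1}$, and of the unit via $\theta_{1,1}=\sum_b P(b\vert y)$, fills in the step the paper calls clear.
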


\begin{proof}
(1) By (R1),
\begin{align}
    \theta_{i,j}^\ast = \sum_{b \in \mathcal B} \eta_{i,j,b\vert y}^\ast = \sum_{b \in \mathcal B} \eta_{j,i,b\vert y} = \theta_{j,i}.
\end{align}
For the product, we may compute both factors with the same input $y$, which is allowed by (R5). All
mixed terms vanish by (R3), so only the diagonal terms in the double sum survive, and (R2) gives
\begin{align}
    \theta_{i,j} \theta_{k,\ell}
    = \sum_{b,b' \in \mathcal B} \eta_{i,j,b\vert y}\, \eta_{k,\ell,b'\vert y}
    = \sum_{b\in \mathcal B} \eta_{i,j,b\vert y}\, \eta_{k,\ell,b\vert y}
    = \delta_{j,k} \sum_{b \in \mathcal B} \eta_{i,\ell,b\vert y}
    = \delta_{j,k}\, \theta_{i,\ell}.
\end{align}
Finally $\sum_{i=1}^m \theta_{i,i} = \sum_{i=1}^m \sum_{b \in \mathcal B} \eta_{i,i,b\vert y} = \mathds{1}$ by (R4).

(2) Fix $i,j \in [m]$, $b \in \mathcal B$, $y \in \mathcal Y$ and expand $\theta_{i,1}$ and $\theta_{1,j}$ with the same input $y$. By (R3) only the term
$b' = b$ survives, and by (R2),
\begin{align}
    \theta_{i,1}\, P(b\vert y)
    = \sum_{b' \in \mathcal B} \eta_{i,1,b'\vert y}\, \eta_{1,1,b\vert y}
    = \eta_{i,1,b\vert y}, \qquad
    \eta_{i,1,b\vert y}\, \theta_{1,j}
    = \sum_{b' \in \mathcal B} \eta_{i,1,b\vert y}\, \eta_{1,j,b'\vert y}
    = \eta_{i,j,b\vert y}.
\end{align}
Combining the two identities gives $\theta_{i,1} P(b\vert y) \theta_{1,j} = \eta_{i,j,b\vert y}$.

(3) By (R1) the element $P(b\vert y)$ is self-adjoint, and by (R2) with $i = j = k = \ell = 1$ it
satisfies $P(b\vert y)^2 = P(b\vert y)$, so it is a projection; the identity
$\sum_{b \in \mathcal B} P(b\vert y) = \theta_{1,1}$ is the definition of $\theta_{1,1}$. Let $\mathfrak{D}$ be the
$C^\ast$-subalgebra generated by the $P(b\vert y)$. Since the adjoint of a generator of
$\mathfrak{C}$ is again a generator by (R1), the algebra $\mathfrak{C}$ is the closed linear span of
words in the $\eta_{i,j,b\vert y}$. Compressing such a word by $\theta_{1,1}$ and making use of Lemma \ref{lem:matrix_units} (2) gives
\begin{equation}
\begin{aligned}
    \theta_{1,1}\, \eta_{i_1,j_1,b_1\vert y_1} \cdots \eta_{i_n,j_n,b_n\vert y_n}\, \theta_{1,1}
    &= \theta_{1,1} \theta_{i_1,1} P(b_1\vert y_1) \theta_{1,j_1} \cdots
       \theta_{i_n,1} P(b_n\vert y_n) \theta_{1,j_n} \theta_{1,1} \\
    &= \delta_{1,i_1} \delta_{j_1,i_2} \cdots \delta_{j_{n-1},i_n} \delta_{j_n,1}\,
       P(b_1\vert y_1) \cdots P(b_n\vert y_n),
\end{aligned}
\end{equation}
where we used Lemma \ref{lem:matrix_units} (1) together with $\theta_{1,1}P(b\vert y) = P(b\vert y) = P(b\vert y)\theta_{1,1}$. Hence
$\theta_{1,1}\mathfrak{C}\theta_{1,1}$ is contained in the closed linear span of products of the
$P(b\vert y)$, which is $\mathfrak{D}$. The reverse inclusion is clear.
\end{proof}

\begin{lem}\label{lem:corner}
Let $\mathfrak{A}$ be a unital $C^\ast$-algebra containing a system of matrix units
$\{\theta_{i,j}\}_{i,j=1}^m$ with $\sum_{i=1}^m \theta_{i,i} = \mathds{1}$, and put
$\mathfrak{Z} \coloneqq \theta_{1,1}\mathfrak{A}\theta_{1,1}$. Then
\begin{enumerate}
    \item[(1)] $\mathfrak{Z}$ is a $C^\ast$-subalgebra of $\mathfrak{A}$ with unit $\theta_{1,1}$;
    \item[(2)] there is a $\ast$-isomorphism
          \begin{align}
              \Phi : \mathbb M_m(\mathbb{C}) \otimes \mathfrak{Z} \to \mathfrak{A}, \qquad
              \Phi(\Theta_{i,j} \otimes z) = \theta_{i,1} z \theta_{1,j},
          \end{align}
          where $\Theta_{i,j}$ is a matrix unit.
\end{enumerate}
\end{lem}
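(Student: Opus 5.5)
Part (1) is a direct check. $\theta_{1,1}$ is a self-adjoint idempotent, so $\mathfrak{Z}=\theta_{1,1}\mathfrak{A}\theta_{1,1}$ is closed under products and adjoints: $(\theta_{1,1}a\theta_{1,1})(\theta_{1,1}b\theta_{1,1}) = \theta_{1,1}(a\theta_{1,1}b)\theta_{1,1}$. It is norm closed, being the image of the bounded idempotent map $a\mapsto \theta_{1,1}a\theta_{1,1}$, which fixes exactly $\mathfrak{Z}$. Clearly $\theta_{1,1}$ acts as a unit on $\mathfrak{Z}$.

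For (2), we first define $\Phi$ on the algebraic tensor product $\mathbb M_m(\mathbb C)\odot\mathfrak{Z}$ by linear extension of $\Theta_{i,j}\otimes z\mapsto \theta_{i,1}z\theta_{1,j}$. This is well defined, since the $\Theta_{i,j}$ form a basis and the formula is linear in $z$. Next we check that $\Phi$ is a $\ast$-homomorphism on elementary tensors. For multiplicativity,
\begin{align}
\Phi(\Theta_{i,j}\otimes z)\Phi(\Theta_{k,\ell}\otimes z') = \theta_{i,1}z\theta_{1,j}\theta_{k,1}z'\theta_{1,\ell} = \delta_{j,k}\,\theta_{i,1}z\theta_{1,1}z'\theta_{1,\ell} = \Phi(\Theta_{i,j}\Theta_{k,\ell}\otimes zz'),
\end{align}
using $\theta_{1,j}\theta_{k,1}=\delta_{j,k}\theta_{1,1}$ and $\theta_{1,1}z'=z'$. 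For adjoints, $(\theta_{i,1}z\theta_{1,j})^\ast=\theta_{j,1}z^\ast\theta_{1,i}$. Unitality holds because $\Phi(\mathds{1}_m\otimes\theta_{1,1})=\sum_i\theta_{i,1}\theta_{1,1}\theta_{1,i}=\sum_i\theta_{i,i}=\mathds{1}$.

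\textbf{Surjectivity.} For $a\in\mathfrak{A}$, insert $\mathds{1}=\sum_i\theta_{i,i}=\sum_i\theta_{i,1}\theta_{1,i}$ on both sides to get
\begin{align}
a=\sum_{i,j}\theta_{i,1}(\theta_{1,i}a\theta_{j,1})\theta_{1,j}=\Phi\Big(\sum_{i,j}\Theta_{i,j}\otimes \theta_{1,i}a\theta_{j,1}\Big),
\end{align}
and note that $\theta_{1,i}a\theta_{j,1}\in\mathfrak{Z}$ because $\theta_{1,i}=\theta_{1,1}\theta_{1,i}$ and $\theta_{j,1}=\theta_{j,1}\theta_{1,1}$.

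\textbf{Injectivity.} This gives an explicit inverse. Write a general element as $x=\sum_{i,j}\Theta_{i,j}\otimes z_{ij}$. Compressing gives $\theta_{1,k}\Phi(x)\theta_{\ell,1} = z_{k\ell}$, so $\Phi(x)=0$ forces all $z_{k\ell}=0$.

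This is the only delicate step: identifying the completed tensor product with the algebraic one. By nuclearity (as recalled before Lemma~\ref{lem:matrix_units}), $\mathbb M_m(\mathbb C)\otimes\mathfrak{Z}\cong\mathbb M_m(\mathfrak{Z})$, which is already complete, since $\mathbb M_m(\mathbb C)$ is finite dimensional and the algebraic tensor product is therefore closed. Hence $\Phi$ is a bijective $\ast$-homomorphism between $C^\ast$-algebras, and therefore an isometric $\ast$-isomorphism. No further continuity argument is needed, because injective $\ast$-homomorphisms between $C^\ast$-algebras are automatically isometric.
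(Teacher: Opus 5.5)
Your proposal is correct and follows the paper's proof essentially step for step: it uses the same compression-idempotent argument for (1), the same multiplicativity and adjoint computations for (2), and the same observation that $\mathbb M_m(\mathbb{C})\otimes\mathfrak{Z}\cong\mathbb M_m(\mathfrak{Z})$ is already complete. The only difference is packaging. The paper names the two-sided inverse $\Psi(a)=\sum_{i,j}\Theta_{i,j}\otimes\theta_{1,i}a\theta_{j,1}$, and your surjectivity and injectivity computations are exactly the checks $\Phi\circ\Psi=\mathrm{id}$ and $\Psi\circ\Phi=\mathrm{id}$.
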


\begin{proof}
This is a standard fact, see e.g.\ \cite[Section~3]{AlRawashdeh2011}; we include the short
argument for completeness.

(1) The compression map $x \mapsto \theta_{1,1} x \theta_{1,1}$ is a continuous linear idempotent on
$\mathfrak{A}$, so its range $\mathfrak{Z}$ is a closed subspace; it is clearly closed under
products and adjoints, and $\theta_{1,1} z = z = z \theta_{1,1}$ for every $z \in \mathfrak{Z}$, so
$\mathfrak{Z}$ is a $C^\ast$-subalgebra with unit $\theta_{1,1}$.

(2) Since $\mathbb M_m(\mathbb{C}) \otimes \mathfrak{Z} \cong \mathbb M_m(\mathfrak{Z})$ is already complete, it
suffices to check that the linear extension of $\Phi$ is a $\ast$-homomorphism with a two-sided
inverse. Using $\theta_{1,j} \theta_{k,1} = \delta_{j,k} \theta_{1,1}$ and $z \theta_{1,1} = z$ for $z \in \mathfrak{Z}$,
\begin{align}
    \Phi(\Theta_{i,j} \otimes z)\, \Phi(\Theta_{k,\ell} \otimes w)
    = \theta_{i,1} z \theta_{1,j} \theta_{k,1} w \theta_{1,\ell}
    = \delta_{j,k}\, \theta_{i,1} z w \theta_{1,\ell}
    = \Phi\bigl((\Theta_{i,j} \otimes z)(\Theta_{k,\ell} \otimes w)\bigr),
\end{align}
and $\Phi(\Theta_{i,j} \otimes z)^\ast = \theta_{j,1} z^\ast \theta_{1,i} = \Phi(\Theta_{j,i} \otimes z^\ast)$, so
$\Phi$ is a $\ast$-homomorphism. Define
\begin{align}
    \Psi : \mathfrak{A} \to \mathbb M_m(\mathbb{C}) \otimes \mathfrak{Z}, \qquad
    \Psi(a) = \sum_{i,j=1}^m \Theta_{i,j} \otimes \theta_{1,i}\, a\, \theta_{j,1},
\end{align}
which is well defined since $\theta_{1,i} a \theta_{j,1} = \theta_{1,1}(\theta_{1,i} a \theta_{j,1})\theta_{1,1} \in \mathfrak{Z}$.
Then, using $\sum_{i =1}^m \theta_{i,i} = \mathds{1}$,
\begin{align}
    \Phi(\Psi(a)) = \sum_{i,j \in [m]} \theta_{i,1} \theta_{1,i}\, a\, \theta_{j,1} \theta_{1,j}
    = \Bigl(\sum_{i \in [m]} \theta_{i,i}\Bigr) a \Bigl(\sum_{j \in [m]} \theta_{j,j}\Bigr) = a,
\end{align}
and conversely $\Psi(\Phi(\Theta_{k,\ell} \otimes z)) = \sum_{i,j \in [m]} \Theta_{i,j} \otimes
\theta_{1,i} \theta_{k,1} z \theta_{1,\ell} \theta_{j,1} = \Theta_{k,\ell} \otimes z$. Hence $\Phi$ is bijective and
therefore a $\ast$-isomorphism.
\end{proof}

\begin{thm}\label{thm:isomorphism_theorem}
Let $\mathfrak{C}$ be the universal $C^\ast$-algebra generated by symbols
$\{\eta_{i,j,b\vert y}\}_{i,j,b,y}$ subject to the relations \textnormal{(R1)}--\textnormal{(R5)}
above. Then there is a canonical $\ast$-isomorphism
\begin{align}
    \mathfrak{C} \cong \mathbb M_m(\mathbb{C}) \otimes \mathfrak{B},
\end{align}
sending
\begin{align}
    \eta_{i,j,b\vert y} \longmapsto \Theta_{i,j} \otimes N(b\vert y).
\end{align}
In particular, the abstract generators $\{\eta_{i,j,b\vert y}\}_{i,j,b,y}$ with relations
\textnormal{(R1)}--\textnormal{(R5)} faithfully encode the joint system consisting of Alice's full
matrix algebra and Bob's universal $C^\ast$-algebra of projective measurements.
\end{thm}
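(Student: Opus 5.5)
The plan is to build two mutually inverse $\ast$-homomorphisms out of the universal properties of $\mathfrak{C}$ and $\mathfrak{B}$, using Lemma~\ref{lem:matrix_units} and Lemma~\ref{lem:corner} to do the structural work.

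\textbf{Forward map $\alpha:\mathfrak{C}\to\mathbb M_m(\mathbb C)\otimes\mathfrak{B}$.} We assign $\eta_{i,j,b\vert y}\mapsto \Theta_{i,j}\otimes N(b\vert y)$ and check that these elements satisfy (R1)--(R5) in $\mathbb M_m(\mathbb C)\otimes\mathfrak{B}$. (R1) follows from $\Theta_{i,j}^\ast=\Theta_{j,i}$ and $N(b\vert y)^\ast=N(b\vert y)$. (R2) follows from the matrix-unit rule \eqref{eq:rule_elementary_matrices} together with $N(b\vert y)^2=N(b\vert y)$. (R3) follows from $N(b\vert y)N(b'\vert y)=0$ for $b\neq b'$, which holds because orthogonal projections summing to $\mathds{1}$ are mutually orthogonal. (R4) and (R5) both reduce to $\sum_b N(b\vert y)=\mathds{1}$. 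The universal property of $\mathfrak{C}$ then yields a unital $\ast$-homomorphism $\alpha$.

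\textbf{Inverse map.} In $\mathfrak{C}$ we take the matrix units $\theta_{i,j}$ and projections $P(b\vert y)$ of Lemma~\ref{lem:matrix_units}, and let $\mathfrak{Z}=\theta_{1,1}\mathfrak{C}\theta_{1,1}$, a unital $C^\ast$-algebra with unit $\theta_{1,1}$. By Lemma~\ref{lem:matrix_units}(3), the $P(b\vert y)$ are projections in $\mathfrak{Z}$ summing to its unit for each $y$. Hence they satisfy the relations $\mathcal R$ of \eqref{eq:rules_universalCstar_algebra}, and the universal property of $\mathfrak{B}$ gives a unital $\ast$-homomorphism $\beta_0:\mathfrak{B}\to\mathfrak{Z}$ with $N(b\vert y)\mapsto P(b\vert y)$. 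Tensoring with the identity on $\mathbb M_m(\mathbb C)$ is unambiguous by nuclearity, so $\mathrm{id}\otimes\beta_0$ is a $\ast$-homomorphism. Composing with the isomorphism $\Phi$ of Lemma~\ref{lem:corner}(2) gives
\[
\beta=\Phi\circ(\mathrm{id}\otimes\beta_0):\ \Theta_{i,j}\otimes N(b\vert y)\mapsto \theta_{i,1}P(b\vert y)\theta_{1,j}=\eta_{i,j,b\vert y},
\]
where the last equality is Lemma~\ref{lem:matrix_units}(2).

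\textbf{Identity checks.} We then verify $\beta\circ\alpha=\mathrm{id}_{\mathfrak{C}}$ and $\alpha\circ\beta=\mathrm{id}$ on generators; continuity then extends both identities to the whole algebras. For $\beta\circ\alpha$ this is immediate from the display. For $\alpha\circ\beta$, first note that $\alpha(\theta_{i,j})=\Theta_{i,j}\otimes\mathds{1}$ and $\alpha(P(b\vert y))=\Theta_{1,1}\otimes N(b\vert y)$. The elements $\Theta_{i,j}\otimes\mathds{1}$ and $\mathds{1}\otimes N(b\vert y)=\sum_i\Theta_{i,1}(\Theta_{1,1}\otimes N(b\vert y))\Theta_{1,i}$ generate the target algebra. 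On the first, $\beta(\Theta_{i,j}\otimes\mathds{1})=\Phi(\Theta_{i,j}\otimes\theta_{1,1})=\theta_{i,j}$, which $\alpha$ maps back correctly. The second is handled the same way through the displayed formula for $\beta$.

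\textbf{Main obstacle.} The delicate point is the step building $\beta_0$. One must confirm that $\mathfrak{Z}$ is unital with unit exactly $\theta_{1,1}$, so that the completeness relation $\sum_b P(b\vert y)=\theta_{1,1}$ is the correct relation $\mathcal R_2$ inside $\mathfrak{Z}$. One must also confirm that $\mathrm{id}\otimes\beta_0$ is well defined on the completed $C^\ast$-tensor product. Lemma~\ref{lem:corner}(1) settles the first point, and the identification $\mathbb M_m(\mathfrak{B})\cong\mathbb M_m(\mathbb C)\otimes\mathfrak{B}$ settles the second, since there $\mathrm{id}\otimes\beta_0$ acts entrywise.
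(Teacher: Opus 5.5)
Your proof is correct and is essentially the paper's proof. Your $\alpha$ is the paper's $\pi$, your $\beta_0$ is its $\beta:\mathfrak{B}\to\mathfrak{Z}$, and your $\beta=\Phi\circ(\mathrm{id}\otimes\beta_0)$ is its $\Lambda$; the paper also ends by noting that $\pi$ inverts $\Lambda$. The difference is only bookkeeping: the paper first proves $\mathfrak{B}\cong\mathfrak{Z}$ (surjectivity from the generation statement in Lemma~\ref{lem:matrix_units}(3), injectivity from $\gamma\circ\pi\circ\beta=\mathrm{id}_{\mathfrak{B}}$), whereas you check both compositions directly on generators and so skip that step. Your check of $\alpha\circ\beta$ could also be shortened, since $\alpha\circ\beta$ fixes each $\Theta_{i,j}\otimes N(b\vert y)$ by your displayed formula and these elements already generate $\mathbb M_m(\mathbb{C})\otimes\mathfrak{B}$.
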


\begin{proof}
Write $\theta_{i,j} = \sum_{b \in \mathcal B} \eta_{i,j,b\vert y}$ and $P(b\vert y) = \eta_{1,1,b\vert y}$ as in
Lemma \ref{lem:matrix_units}, and put $\mathfrak{Z} \coloneqq \theta_{1,1}\mathfrak{C}\theta_{1,1}$. By
Lemma \ref{lem:matrix_units} (1) the family $\{\theta_{i,j}\}_{i,j}$ is a system of matrix units in
$\mathfrak{C}$ with $\sum_{i \in [m]} \theta_{i,i} = \mathds{1}$, so Lemma \ref{lem:corner} applies and yields a
$\ast$-isomorphism
\begin{align}
    \Phi : \mathbb M_m(\mathbb{C}) \otimes \mathfrak{Z} \to \mathfrak{C}, \qquad
    \Phi(\Theta_{i,j} \otimes z) = \theta_{i,1} z \theta_{1,j}.
\end{align}
It therefore remains to identify $\mathfrak{Z}$ with $\mathfrak{B}$.

By Lemma \ref{lem:matrix_units} (3) the elements $P(b\vert y)$ are projections in $\mathfrak{Z}$
with $\sum_{b \in \mathcal B} P(b\vert y) = \theta_{1,1} = \mathds{1}_{\mathfrak{Z}}$, so they satisfy the defining relations of
$\mathfrak{B}$. By the universal property of $\mathfrak{B}$ there is a unital
$\ast$-homomorphism
\begin{align}
    \beta : \mathfrak{B} \to \mathfrak{Z}, \qquad \beta(N(b\vert y)) = P(b\vert y).
\end{align}
Its range is a $C^\ast$-subalgebra of $\mathfrak{Z}$ containing all $P(b\vert y)$, hence equals
$\mathfrak{Z}$ by Lemma \ref{lem:matrix_units} (3), and $\beta$ is surjective.

For injectivity we produce a left inverse. The elements
$\Theta_{i,j} \otimes N(b\vert y) \in \mathbb M_m(\mathbb{C}) \otimes \mathfrak{B}$ satisfy
(R1)--(R5): the adjoint of $\Theta_{i,j} \otimes N(b\vert y)$ is $\Theta_{j,i} \otimes N(b\vert y)$, which is
(R1); for a common $b$ and $y$,
\begin{align}
    (\Theta_{i,j} \otimes N(b\vert y))(\Theta_{k,\ell} \otimes N(b\vert y))
    = \delta_{j,k}\, \Theta_{i,\ell} \otimes N(b\vert y)^2
    = \delta_{j,k}\, \Theta_{i,\ell} \otimes N(b\vert y),
\end{align}
which is (R2); for $b \neq b'$ and a common $y$ the product vanishes because
$N(b\vert y)N(b'\vert y) = 0$, which is (R3); moreover
\begin{align}
    \sum_{i=1}^m \sum_{b \in \mathcal B} \Theta_{i,i} \otimes N(b\vert y)
    = \sum_{i=1}^m \Theta_{i,i} \otimes \mathds{1} = \mathds{1},
\end{align}
which is (R4); and $\sum_{b \in \mathcal B} \Theta_{i,j} \otimes N(b\vert y) = \Theta_{i,j} \otimes \mathds{1}$ does not depend on $y$,
which is (R5). By the universal property of $\mathfrak{C}$ there is a $\ast$-homomorphism
\begin{align}
    \pi : \mathfrak{C} \to \mathbb M_m(\mathbb{C}) \otimes \mathfrak{B}, \qquad
    \pi(\eta_{i,j,b\vert y}) = \Theta_{i,j} \otimes N(b\vert y).
\end{align}
It satisfies $\pi(\theta_{i,j}) = \Theta_{i,j} \otimes \mathds{1}$, hence
\begin{align}
    \pi(\mathfrak{Z}) = \pi(\theta_{1,1}\mathfrak{C}\theta_{1,1})
    \subseteq (\Theta_{1,1} \otimes \mathds{1})\, (\mathbb M_m(\mathbb{C}) \otimes \mathfrak{B})\, (\Theta_{1,1} \otimes \mathds{1})
    = \Theta_{1,1} \otimes \mathfrak{B}.
\end{align}
Let $\gamma : \Theta_{1,1} \otimes \mathfrak{B} \to \mathfrak{B}$ be the canonical $\ast$-isomorphism
$\Theta_{1,1} \otimes z \mapsto z$. Then $\gamma \circ \pi \circ \beta$ is a $\ast$-endomorphism of
$\mathfrak{B}$ with
\begin{align}
    (\gamma \circ \pi \circ \beta)(N(b\vert y))
    = \gamma(\pi(\eta_{1,1,b\vert y}))
    = \gamma(\Theta_{1,1} \otimes N(b\vert y))
    = N(b\vert y),
\end{align}
so it agrees with the identity on a generating set and therefore
$\gamma \circ \pi \circ \beta = \mathrm{id}_{\mathfrak{B}}$. In particular $\beta$ is injective,
hence a $\ast$-isomorphism $\mathfrak{B} \cong \mathfrak{Z}$.

Since $\beta$ is a $\ast$-isomorphism, so is $\mathrm{id}_{\mathbb M_m(\mathbb{C})} \otimes \beta$, and
hence also the composition
\begin{align}
    \Lambda \coloneqq \Phi \circ (\mathrm{id}_{\mathbb M_m(\mathbb{C})} \otimes \beta) :
    \mathbb M_m(\mathbb{C}) \otimes \mathfrak{B} \to \mathfrak{C}.
\end{align}
On generators, Lemma \ref{lem:matrix_units} (2) gives
\begin{align}
    \Lambda(\Theta_{i,j} \otimes N(b\vert y))
    = \Phi(\Theta_{i,j} \otimes P(b\vert y))
    = \theta_{i,1} P(b\vert y) \theta_{1,j}
    = \eta_{i,j,b\vert y}.
\end{align}
Thus $\Lambda$ is a $\ast$-isomorphism $\mathbb M_m(\mathbb{C}) \otimes \mathfrak{B} \to \mathfrak{C}$, and
its inverse is $\pi$, since $\pi \circ \Lambda$ fixes the generators
$\Theta_{i,j} \otimes N(b\vert y)$. This is the asserted isomorphism.
\end{proof}

As an alternative to conditions (R1)--(R5), one can also represent the algebra of several projections by commutation relations, as known from the Pauli algebra. In the following, we only show the resulting optimization program for the case of qubit measurements, since this is the situation usually encountered. Indeed, we have the three operators $S_X,S_Y,S_Z$, which essentially obey the same (anti-)commutation relations as the Pauli matrices do. In our bipartite setup, we generalize them, similarly to the $\eta_{i,j,b\vert y}$ operators, to operators $S_{X,b\vert y}$, $S_{Y,b\vert y}$, $S_{Z,b\vert y}$. The resulting program then looks as follows: $H(A\vert X=x_0,Q_E)$ is lower bounded by
\begin{equation}\label{eq:pauli_generator_entropy_program}
\begin{aligned}
 \inf \quad & \ \frac{1}{\ln 2}\left( \vert A\vert -1 + \sum_{k=0}^r\sum_{a\in A}
\bra{\psi} -\alpha_k \left( \sum_{b\in B} \sum_{\mu\in\{0,X,Y,Z\}} \lambda_{\mu}^{(a\mid x_0)} S_{\mu,b\mid \tilde{y}} \right) P_k^{(a)} -\beta_k P_k^{(a)}\ket{\psi} \right) \\
\operatorname{s.th.}\quad &\sum_{a,b,x,y} c_{abxy}^{(t)}\bra{\psi} \sum_{\mu\in\{0,X,Y,Z\}} \lambda_{\mu}^{(a\mid x)} S_{\mu,b\mid y} \ket{\psi} \geq q_t,\quad  t\in \mathcal C,\\
&(P_k^{(a)})^2=P_k^{(a)},\quad (P_k^{(a)})^\ast=P_k^{(a)}, \quad 0\leq k\leq r,\ a\in \mathcal A,\\
&[S_{\mu,b\mid y},P_k^{(a)}] = 0, \quad\mu\in\{0,X,Y,Z\},\ b\in \mathcal B,\ y\in \mathcal Y,\ 0\leq k\leq r,\ a\in \mathcal A, \\
&S_{\mu,b\mid y}^\ast=S_{\mu,b\mid y}, \quad \mu\in\{0,X,Y,Z\},\ b\in \mathcal B,\ y\in \mathcal Y, \\
&S_{0,b\mid y}S_{\mu,b'\mid y}=S_{\mu,b'\mid y}S_{0,b\mid y}= \delta_{b,b'}S_{\mu,b\mid y}, \\
&S_{X,b\mid y}^2=S_{Y,b\mid y}^2=S_{Z,b\mid y}^2=S_{0,b\mid y},\\
&S_{X,b\mid y}S_{Y,b'\mid y}=i\delta_{b,b'}S_{Z,b\mid y},\quad S_{Y,b\mid y}S_{X,b'\mid y}=-i\delta_{b,b'}S_{Z,b\mid y},\\
&S_{Y,b\mid y}S_{Z,b'\mid y} = i\delta_{b,b'}S_{X,b\mid y}, \quad S_{Z,b\mid y}S_{Y,b'\mid y} = -i\delta_{b,b'}S_{X,b\mid y}, \\
&S_{Z,b\mid y}S_{X,b'\mid y}=i\delta_{b,b'}S_{Y,b\mid y}, \quad S_{X,b\mid y}S_{Z,b'\mid y} = -i\delta_{b,b'}S_{Y,b\mid y}, \\
&\sum_{b\in \mathcal B}S_{0,b\mid y}=\mathds{1}, \quad y\in \mathcal Y, \\
& \sum_{b\in \mathcal B}S_{\mu,b\mid y} = \sum_{b\in \mathcal B}S_{\mu,b\mid y'}, \quad \mu\in\{0,X,Y,Z\},\ y,y'\in \mathcal Y .
\end{aligned}
\end{equation}
 Observe that we choose a fixed but arbitrary $\tilde{y} \in \mathcal{Y}$ in the program above. In practice, it needs to be determined case by case whether the representation in \eqref{eq:pauli_generator_entropy_program} or (R1)--(R5) is numerically more appropriate.

\end{document}